\tikzset{
    every picture/.style={>=stealth,auto,node distance=2cm,}
}
\tikzstyle{pstate}=[rectangle,
\tikzstyle{cstate}=[state,
\newtheorem{theorem}{Theorem}
\newtheorem{lemma}[theorem]{Lemma}
\theoremstyle{definition}
\newtheorem{definition}[theorem]{Definition}
\newtheorem{example}[theorem]{Example}
\newtheorem{remark}[theorem]{Remark}
\newtheorem{claim}[theorem]{Claim}
\crefname{lemma}{Lemma}{Lemmas}
\crefname{definition}{Definition}{Definitions}
\theoremstyle{plain}
\crefname{proposition}{Proposition}{Propositions}
\theoremstyle{remark}
{\itshape}{\rmfamily}
\crefname{fact}{Fact}{Facts}
\crefname{claim}{Claim}{Claims}
\renewcommand{\pi}{\tau}
\begin{document}
\title{MDPs with Energy-Parity Objectives}

\author{
    \IEEEauthorblockN{Richard Mayr\IEEEauthorrefmark{1},
    Sven Schewe\IEEEauthorrefmark{2},
    Patrick Totzke\IEEEauthorrefmark{1},
Dominik Wojtczak\IEEEauthorrefmark{2}}
\IEEEauthorblockA{\IEEEauthorrefmark{1}University of Edinburgh, UK}
\IEEEauthorblockA{\IEEEauthorrefmark{2}University of Liverpool, UK}
}

\maketitle

\begin{abstract}
Energy-parity objectives combine $\omega$-regular with quantitative objectives of reward MDPs.
The controller needs to avoid to run out of energy while satisfying a parity objective.

We refute the common belief that, if an energy-parity objective holds almost-surely,
then this can be realised by some finite memory strategy.
We provide a surprisingly simple counterexample that only uses coB\"uchi conditions.

We introduce the new class of bounded (energy) storage objectives that,
when combined with parity objectives, preserve the finite memory property.
Based on these, we show that almost-sure and limit-sure energy-parity
objectives, as well as almost-sure and limit-sure storage parity objectives, are in $\NP\cap\coNP$ and can be solved in pseudo-polynomial time for energy-parity MDPs.

\end{abstract}

\newcommand{\headerblob}[1]{\smallskip\noindent\textbf{#1.}}

\section{Introduction}
\headerblob{Context}
Markov decision processes (MDPs) are a standard model for dynamic systems that
exhibit both stochastic and controlled behaviour \cite{Puterman:book}.
Such a process starts in an initial state and makes a sequence of transitions between states.
Depending on the type of the current state, either the controller gets to
choose an enabled transition (or a distribution over transitions), or the next
transition is chosen randomly according to a predefined distribution.
By fixing a strategy for the controller, one obtains a Markov chain.
The goal of the controller is to optimise the (expected) value of
some objective function on runs of such an induced Markov chain.

\headerblob{Our Focus and Motivation}
In this paper we study MDPs with a finite number of states, where
numeric rewards (which can be negative) are assigned 
to transitions. We consider 
quantitative objectives, e.g.~the total expected reward or
the limit-average expected reward \cite{Puterman:book,CD2011}.
Note that the total reward is not bounded \emph{a priori}.
We also consider $\omega$-regular objectives that can be expressed by parity conditions on 
the sequence of visited states (subsuming many simpler objectives like 
B\"uchi and coB\"uchi).

When reasoning about controllers for mechanical and electrical systems,
one may need to consider quantitative objectives such as the remaining stored energy
of the system (which must not fall below zero, or else the system fails),
and, at the same time, parity objectives that 
describe the correct behaviour based on the temporal specification.
Thus one needs to study the combined \emph{energy-parity} objective.

\headerblob{Status Quo}
Previous work in \cite{CD2011} (Sec.~3) considered the decidability and complexity of the
question whether the energy-parity objective can be satisfied
\emph{almost-surely}, i.e.~whether there exists a strategy (or: a
controller) that satisfies the objective with probability~$1$.
They first show that in the restricted case of energy-B\"uchi objectives,
finite memory optimal strategies exist,
and that almost-sure satisfiability is in $\NP\cap\coNP$
and can be solved in pseudo-polynomial time.

They then describe a direct reduction from almost-sure energy-parity
to almost-sure energy-B\"uchi. 
This reduction claimed that the winning
strategy could be chosen among a certain subclass of strategies,
that we call \emph{colour-committing}. Such a strategy eventually commits to 
a particular winning even colour, where this colour must be seen infinitely often
almost-surely and no smaller colour must ever been seen after committing.
 
However, this reduction from almost-sure energy-parity
to almost-sure energy-B\"uchi in \cite{CD2011} (Sec.~3)
contains a subtle error (which also appears in the survey
in \cite{chatterjee2011games} (Theorem 4)).
In fact, we show that 
strategies for almost-sure energy-parity may require infinite memory.

\smallskip
\noindent{\bf Our contributions} can be summarised as follows.

\smallskip\textit{1)}
We provide a simple counterexample that shows that, even 
for almost-sure energy-coB\"uchi objectives, the winning strategy 
requires infinite memory
and cannot be chosen among the colour-committing strategies.

\smallskip\textit{2)}
We introduce an energy \emph{storage} objective, which requires that the
energy objective 
is met using a finite energy store. The size of the store can be fixed by the controller, but it cannot be changed.
We argue that the almost-sure winning sets for energy-B\"uchi and
storage-B\"uchi objectives coincide.
Moreover, we show that the reduction in \cite{CD2011} actually works for storage-parity instead of for energy-parity conditions.
I.e.\ \cite{CD2011} shows that almost-sure storage parity objectives require
just finite memory, are in $\NP\cap\coNP$, and can be solved in pseudo-polynomial time.

\smallskip\textit{3)}
We develop a solution for the original almost-sure 
energy-parity objective. It requires a more involved argument and
infinite-memory strategies that are obtained by composing three
other strategies.
We show that almost-sure energy-parity objectives are in $\NP\cap\coNP$ and can be solved in pseudo-polynomial time.
      
\smallskip\textit{4)}
We then study the \emph{limit-sure problem}.
Here one asks whether, for every $\epsilon >0$, there exists a strategy that 
satisfies the objective with probability $\ge 1-\epsilon$.
This is a question about the existence of a family of $\epsilon$-optimal strategies,
not about a single strategy as in the almost-sure problem.
The limit-sure problem is equivalent to the question whether the \emph{value}
of a given state and initial energy level (w.r.t.\ the objective) is $1$.
For the \emph{storage-parity} objective, the limit-sure condition
coincides with the almost-sure condition,
and thus the complexity results from \cite{CD2011} apply.
In contrast, for the \emph{energy-parity} objective
the limit-sure condition does \emph{not} coincide with the almost-sure
condition.
While almost-sure energy-parity implies limit-sure energy-parity, we give examples
that show that the reverse implication does \emph{not} hold.
We develop an algorithm to decide the limit-sure energy-parity objective
and show that the problem 
is in $\NP \cap \coNP$ and can be solved in pseudo-polynomial time.
Moreover, each member in the family of $\epsilon$-optimal strategies
that witnesses the limit-sure energy-parity condition can be chosen
as a finite-memory strategy (unlike winning strategies for
almost-sure energy-parity that may require infinite memory).

\headerblob{Related work} Energy games were introduced in \cite{chakrabarti2003resource} to reason about systems with multiple components and bounded resources. Energy objectives were later also considered in the context of timed systems \cite{BFLMS2008}, synthesis of robust systems \cite{bloem2009better}, and gambling \cite{BergerKSV08} (where they translate to ``not going bankrupt'').
The first analysis of a combined qualitative--quantitative objective was done in \cite{chatterjee2005mean} for mean-payoff parity games.
Almost-sure winning in energy-parity MDPs was considered in \cite{CD2011} (cf.~\cite{chatterjee2011games} as a survey).
However, it was shown in \cite{CD2011} that almost-sure winning in energy-parity MDPs
is at least as hard as two player energy-parity games \cite{CD2012}. 
A recent paper \cite{BKN:ATVA2016} considers a different combined objective:
maximising the expected mean-payoff while satisfying the energy objective.
The proof of Lemma 3 in \cite{BKN:ATVA2016} uses a reduction to the
(generally incorrect) result on energy-parity MDPs of \cite{CD2011},
but their result still holds because it only uses the correct
part about energy-B\"uchi MDPs.

Closely related to energy MDPs and games are one-counter MDPs and games, where the counter value can be seen as the current energy level. One-counter MDPs and games with a termination objective (i.e.\ reaching counter value $0$) were studied in \cite{BBEKW10} and \cite{BBE10}, respectively.

\headerblob{Outline of the Paper}
The following section introduces the necessary notations. \Cref{sec:claims} discusses
combined energy-parity objectives and formally states our results.
In \cref{sec:bug} we explain the error in the construction of \cite{CD2011} (Sec.~3),
define the bounded energy storage condition and derive the results on
combined storage-parity objectives.
\Cref{sec:as-EP} discusses the almost-sure problem for energy-parity conditions
and provides our new proof of their decidability.
The limit-sure problem for energy-parity is discussed in \cref{sec:limit,sec:ls-EP}.
\Cref{sec:complexity} discusses
lower bounds and the relation between almost/limit-sure parity MDPs and mean-payoff games. 
Due to the space constraints some details had to be omitted and can be found in the full version \cite{DBLP:journals/corr/MayrSTW17}.

\section{Notations}\label{sec:preliminaries}
A probability distribution over a set $X$ is a function $f:X\to[0,1]$
such that $\sum_{x\in X} f(x) = 1$. We write $\Dist{X}$ for the set of
distributions over $X$.

\headerblob{Markov Chains}
A \emph{Markov chain} is an edge-labeled, directed graph $\sys{C}\eqdef (V,E,\prob)$,
where the elements of $V$ are called \emph{states},
such that the labelling $\prob:E\to[0,1]$ provides a probability distribution
over the set of outgoing transitions of any state $s\in V$.
A \emph{path} is a finite or infinite sequence $\rho\eqdef s_1s_2\ldots$ of
states such that $(s_i,s_{i+1})\in E$ holds for all indices $i$; an infinite
path is called a \emph{run}.
We use $w\in V^*$ to denote a finite path.
We write $s\step{x}t$ instead of $(s,t)\in E \land \prob(s,t)=x$
and omit superscripts whenever clear from the context.
We write $\Runs[\sys{C}]{w}$ for the cone set $wV^\omega$, i.e., 
the set of runs with finite prefix $w\in V^*$,
and assign to it the probability space
$(\Runs[\sys{C}]{w},\mathcal{F}_{w}^\sys{C},\Prob[\sys{C}]{w})$,
where $\mathcal{F}_{w}^\sys{C}$ is the $\sigma$-algebra generated by
all cone sets $\Runs[\sys{C}]{wx}\subseteq \Runs[\sys{C}]{w}$,
for $x=x_1x_2\dots x_l\in V^*$.
The probability measure
$\Prob[\sys{C}]{w}:\mathcal{F}^{\sys{C}}_w\to[0,1]$ is defined as
$\Prob[\sys{C}]{w}(\Runs{wx}) \eqdef \Pi_{i=1}^{l-1} \lambda(x_i,x_{i+1})$
for cone sets.
By Carath\'eodory's extension theorem~\cite{billingsley-1995-probability},
this defines a unique probability measure on all measurable subsets of runs.

\headerblob{Markov Decision Processes}
A \emph{Markov Decision Process (MDP)} is a sinkless directed graph
$\sys{M}\eqdef(\VC,\VP, E, \prob)$,
where $V$ is a set of states, partitioned as $V\eqdef \VC\uplus\VP$ into
\emph{controlled} ($\VC$) and \emph{probabilistic} states ($\VP$).
The set of \emph{edges} is $E\subseteq V\x V$ and $\prob:\VP\to\Dist{E}$
assigns each probabilistic state a probability distribution over its outgoing edges.

A \emph{strategy} is a function $\sigma:V^*\VC\to\Dist{E}$ that assigns each
word $ws\in V^*\VC$ a probability distribution over the outgoing edges of $s$,
that is $\sigma(ws)(e)>0$ implies $e=(s,t)\in E$ for some $t\in V$.
A strategy is called \emph{memoryless} if $\sigma(xs)=\sigma(ys)$ for all $x,y\in V^*$ and $s\in \VC$,
and \emph{deterministic} if if $\sigma(w)$ is Dirac for all $w\in V^*\VC$.
Each strategy induces a Markov chain $\sys{M}(\sigma)$ with states $V^*$
and where $ws\step{x}wst$ if $(s,t)\in E$ and either
$s\in\VP \land x=\prob(s,t)$ or $s\in\VC\land x=\sigma(ws,t)$.
We write $\Runs[\sys{M}]{w}$ for the set of runs in $\sys{M}$ (with prefix
$w$), consisting of all runs in $\Runs[\sys{M}(\sigma)]{w}$ for some strategy
$\sigma$, and $\Runs[\sys{M}]{}$ for the set of all such paths.

\renewcommand{\O}{\mathsf{Obj}}
\headerblob{Objective Functions}
An \emph{objective} is a subset $\O\subseteq \Runs[\sys{M}]{}$.
We write $\overline{\O}\eqdef\Runs[\sys{M}]{}\setminus\O$ for its complement.
It is satisfied
\emph{surely} if there is a strategy $\sigma$ such that $\Runs[\sys{M}(\sigma)]{}\subseteq \O$,
\emph{almost-surely} if there exists a $\sigma$ such that $\Prob[\sys{M}(\sigma)]{}(\O) = 1$ and
\emph{limit-surely} if $\sup_\sigma\Prob[\sys{M}(\sigma)]{}(\O) = 1$.
In other words, the limit-sure condition asks that there exists some infinite sequence $\sigma_1,\sigma_2,\ldots$
of strategies such that $\lim_{n\to\infty} \Prob[\sys{M}(\sigma_n)]{}(\O) = 1$.
We call a strategy \emph{$\eps$-optimal} 
if $\Prob[\sys{M}(\sigma)]{}(\O) \ge 1-\eps$.
Relative to a given MDP $\sys{M}$ and some finite path $w$, we define the \emph{value} of $\O$ as
$\Val[\sys{M}]{w}{\O} \eqdef \sup_\sigma \Prob[\sys{M}(\sigma)]{w}(\O)$.
We use the following objectives, defined by conditions on individual runs.

\smallskip
A \emph{reachability condition} is defined by a set of target states $T \subseteq V$. 
A run $s_0s_1\ldots$ satisfies the reachability condition iff
there exists an $i \in \N$ s.t.\ $s_i \in T$.
We write $\eventually T \subseteq \Runs{}$ for the set of runs that satisfy 
the reachability condition.

\smallskip
A \emph{parity condition} is given by a function $\parity:V\to\N$,
that assigns a priority (non-negative integer) to each state.
A run $\rho \in \Runs{}$ satisfies the parity condition if
the minimal priority that appears infinitely often
on the run is even.
The \emph{parity objective} is the subset $\Parity[]{} \subseteq
\Runs{}$ of runs that satisfy the parity condition.

\smallskip
\emph{Energy conditions} are given by a function $\cost{}:E\to\Z$,
that assigns a \emph{cost} value to each edge.
For a given initial energy value $k\in\N$, a run $s_0s_1\ldots$ satisfies the $k$-energy condition
if, for every finite prefix, the energy level $k+\sum_{i=0}^l\cost(s_i,s_{i+1})$ stays greater or equal to $0$.
Let $\EN{k} \subseteq \Runs{}$ denote the $k$-energy objective,
consisting of those runs that satisfy the $k$-energy condition.

\smallskip
\emph{Mean-payoff conditions} are defined w.r.t.\ the same cost function
$\cost{}:E\to\Z$ as the energy conditions. 
A run $s_0s_1\ldots$ satisfies the \emph{positive mean-payoff condition} iff
$\liminf_{n\rightarrow\infty}\frac{1}{n}\sum_{i=0}^{n-1}\cost(s_i,s_{i+1}) > 0$.
We write $\PosMP\subseteq \Runs{}$ for the positive mean-payoff objective,
consisting of those runs that satisfy the positive mean-payoff condition.

\section{Parity Conditions under Energy Constraints}
\label{sec:claims}
We study the combination of energy and parity objectives for finite MDPs.
That is, given a MDP and both cost and parity functions,
we consider objectives of the form $\EN{k}\cap\Parity{}$
for integers $k\in \N$.
We are interested in identifying those control states and values $k\in\N$
for which the combined $k$-energy-parity objective 
is satisfied almost-surely and limit-surely, respectively.

\begin{example} 
    \label{ex:lval}
    Consider a controlled state $s$
    that can go left or right with cost $0$,
    or stay with cost $1$.
    The probabilistic state on the left
    increases or decreases energy with equal chance, whereas
    the probabilistic state on the right has a positive energy updrift.
    State $s$ has priority $1$, all other states have priority $0$.
\begin{center}
  \begin{tikzpicture}[node distance=1cm and 3cm, on grid]
    \tikzstyle{acstate}=[pstate,minimum size=0.2cm]
    \node[cstate] (Y) {s};
    \node[pstate, right=of Y] (Z) {r};
    \node[pstate, left=of Y] (X) {l};
    \node[acstate, above=of X] (lla) {};
    \node[acstate, above=of Z] (rra) {};
    \node[acstate, below=of X] (llb) {};
    \node[acstate, below=of Z] (rrb) {};
    \draw[->] (Y) edge node[above] {$0$} (Z);
    \draw[->] (Y) edge node[above] {$0$} (X);
    \draw[->,in=west, out=south west] (Z) edge node[left]{${\frac{1}{3}},0$} (rrb);
    \draw[->,out=east, in=south east] (rrb) edge node[right] {$1,-1$}(Z);
    \draw[->,out=north west, in=west] (Z) edge node[left]{${\frac{2}{3}},0$} (rra);
    \draw[->,in=north east, out=east] (rra) edge node[right]{$1,+1$} (Z);
    \draw[->,in=east, out=north east] (X) edge node[right]{${\frac{1}{2}},0$} (lla);
    \draw[->,out=west, in=north west] (lla) edge node[left]{$1,+1$} (X);
    \draw[->,in=east, out=south east] (X) edge node[right] {${\frac{1}{2}},0$} (llb);
    \draw[->,out=west, in=south west] (llb) edge node[left] {$1,-1$} (X);
    \draw[->,loop above] (Y) edge node[] {$+1$} (Y);
\end{tikzpicture}
\end{center}
    From states other than $s$ there is only one strategy.
    It holds that $\Val{l}{\Parity{}}=1$ but
    $\Val{l}{\EN{k}}=0$ for any $k\in\N$ and so $\Val{l}{\EN{k}\cap\Parity{}}=0$.
    For state $r$ we have that $\Val{r}{\EN{k}\cap\Parity{}}=\Val{r}{\EN{k}}=1-(1/2)^k$,
    due to the positive drift.
    For all $k\in\N$ the state $s$ does not satisfy the $k$-energy-parity objective
    almost-surely but limit-surely: $\Val{s}{\EN{k}\cap \Parity{}}=1$
    (by going ever higher and then right).
\end{example}

Notice that these energy-parity objectives are trivially monotone in the parameter $k$ because $\EN{k}\subseteq\EN{k+1}$ holds for all $k\in\N$.
Consequently, for every fixed 
state $p$,
if there exists some $k\in N$
such that the $k$-energy-parity objective holds almost-surely (resp.\ limit-surely),
then there is a minimal such value $k$.
By \emph{solving} the almost-sure/limit-sure problems for these monotone objectives
we mean to compute these minimal sufficient values for all initial states.

\medskip
We now state our two main technical results.
We fix a finite MDP $\sys{M}\eqdef(\VC,\VP, E, \prob)$,
a parity function $\parity:V\to\N$ with maximal colour $d\in\N$
and a cost-function $\cost:E\to\Z$ with maximal absolute value $W \eqdef \max_{e\in E} |\cost(e)|$.
Let $|\lambda|$ and $|\cost{}|$ be the size of the transition table $\lambda$ and the cost function $\cost{}$, written as tables with valuations in binary.
We use $\widetilde {\mathcal O}(f(n))$ as a shorthand for ${\mathcal O}(f(n)\log^k f(n))$ for some constant $k$.

\begin{theorem}\label{thm:correction}
    \label{thm:as-energy-parity}
    (1) Almost-sure optimal strategies for $k$-energy-parity objectives may require infinite memory.
    (2) The almost-sure problem for $k$-energy-parity objectives is in $\NP\cap\coNP$
    and can be solved in pseudo-polynomial time
    $\widetilde {\mathcal O}(d\cdot|V|^{4.5} \cdot (|\lambda| + |\cost{}|)^2 +
    \card{E}\cdot d\cdot \card{V}^5\cdot W)$.
\end{theorem}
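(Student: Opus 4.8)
\textbf{Part (1) --- infinite memory is necessary.}
The plan is to exhibit a small MDP with only the two priorities $1$ and $2$ (a coB\"uchi condition), won almost-surely from some state and energy level, but not by any finite-memory strategy, and not by any colour-committing one (this is the counterexample of \cref{sec:bug}).
The gadget contains a controlled ``recharge'' state $q$ of the \emph{odd} priority $1$, with a self-loop of cost $+1$ and an edge of cost $0$ into a probabilistic ``gamble'' region $R$ consisting only of states of the \emph{even} priority $2$; inside $R$ the energy performs a (recurrent) random walk, and from $R$ there is a fixed positive probability of returning to $q$ after only a bounded energy loss.
I would then check two things:
(i) under \emph{every} strategy, the event ``stay in $R$ forever'' has probability $0$, so any strategy that almost-surely eventually stays in $R$ runs out of energy almost-surely, while any strategy that, on a positive-probability set of runs, returns to $q$ infinitely often loses the coB\"uchi condition on those runs --- in particular no finite-memory and no colour-committing strategy can win almost-surely;
(ii) there is nonetheless an infinite-memory strategy $\sigma^\infty$ that wins almost-surely: before its $n$-th entry into $R$ it uses the self-loop to recharge to energy level $2^n$; a ruin estimate for the random walk bounds by $2^{-n}$ the probability that the $n$-th excursion into $R$ ever drops the energy to $0$; by the Borel--Cantelli lemma, with probability $1$ only finitely many excursions are ``bad'', after which the run stays in $R$ with energy bounded away from $0$, satisfying $\EN{k}\cap\Parity{}$.
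The only delicate point is to choose the transition probabilities in $R$ so that the per-excursion failure bound and the almost-sure return to $q$ hold simultaneously; the rest is routine random-walk arithmetic.

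\textbf{Part (2) --- structural characterisation.}
Here the plan is to reduce the almost-sure $k$-energy-parity problem, in pseudo-polynomial time, to a bounded number of instances of problems already known to be in $\NP\cap\coNP$ and pseudo-polynomially solvable: almost-sure \emph{storage}-parity and almost-sure energy-B\"uchi (the corrected results of \cite{CD2011} recalled in \cref{sec:bug}), together with mean-payoff games.
By the monotonicity noted after \cref{ex:lval} it suffices, for each state, to decide whether $\Val{p}{\EN{k}\cap\Parity{}}=1$ for some $k$ and to return the minimal witness $k$.
The characterisation I would prove is: $p$ is almost-sure winning (for some $k$) iff, while keeping the energy non-negative, one can reach a sub-MDP $\sys{M}'$ of one of two kinds --- either (a) $\sys{M}'$ admits an almost-sure \emph{storage}-parity strategy (the finite-memory case, decided directly via \cite{CD2011}), or (b) $\sys{M}'$ is a single maximal end component in which the controller can both force a positive mean-payoff, i.e.\ generate unbounded energy ($\PosMP$), \emph{and} almost-surely satisfy the parity condition when the energy is ignored.
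On a component of type (b) the almost-sure winning strategy is composed of three pieces, as announced: stay in the positive-mean-payoff regime long enough to accumulate a large reserve, then attempt a bounded-energy storage-parity play, and on failure retreat and rebuild an even larger reserve, the $n$-th reserve being chosen so that the $n$-th attempt fails with probability at most $2^{-n}$; Borel--Cantelli again gives almost-sure success.
Correctness reduces to two lemmas: that the type-(a)/type-(b) dichotomy is exhaustive (an end-component/attractor fixpoint over the $d$ priorities, essentially the parity recursion of \cite{CD2011} run on top of the energy-B\"uchi machinery), and that on a type-(b) component the mean-payoff behaviour and the parity behaviour can be \emph{decoupled} well enough for the failure probabilities to be summable.

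\textbf{Part (2) --- complexity and $\NP\cap\coNP$.}
The computation has an outer layer that, for each of the $d$ priorities, peels off end components, computes attractors and runs the energy value-iteration of the energy-B\"uchi procedure up to its $\mathcal O(|V|^{c}\cdot W)$ energy bound for some constant $c$; summed over the recursion this contributes the pseudo-polynomial term $|E|\cdot d\cdot|V|^{5}\cdot W$.
It also has an inner layer that, on each candidate sub-MDP, solves a linear-programming/min-cost-flow instance (to detect the $\PosMP$ region and its memoryless certificate) of bit-size $|\lambda|+|\cost{}|$, once per priority, contributing the strongly-polynomial term $d\cdot|V|^{4.5}\cdot(|\lambda|+|\cost{}|)^{2}$ up to the polylogarithmic factors absorbed in $\widetilde{\mathcal O}$, the $|V|^{4.5}$ factor being the cost of the subroutine on $|V|$ vertices and the squared bit-size accounting for exact arithmetic.
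Membership in $\NP\cap\coNP$ then follows because each ingredient problem lies in $\NP\cap\coNP$ and the surrounding decomposition is polynomial: an $\NP$ certificate guesses the winning set, the end-component decomposition, and the memoryless (energy-B\"uchi and mean-payoff) certificates for each ingredient, all verifiable in polynomial time, and the $\coNP$ certificate is dual, guessing witnessing strategies for the opponent in the complement mean-payoff/energy games on the claimed losing set.
I expect the main obstacle to be the decoupling lemma inside a type-(b) end component: one must show that the controller can simultaneously push the energy up \emph{and} keep a uniform positive lower bound on the probability of afterwards completing a bounded-energy storage-parity play, so that the retry probabilities shrink fast enough to be summed.
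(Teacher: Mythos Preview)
Your Part~(1) gadget does not work as described. If the gamble region $R$ is fully probabilistic and carries a fixed positive probability of returning to the odd-priority state $q$, then \emph{every} excursion into $R$ returns to $q$ almost surely (the exit state is hit in geometrically distributed time), so $q$ is visited infinitely often under any strategy that enters $R$ infinitely often, and the coB\"uchi condition fails; if instead the strategy eventually stops entering $R$, it loops in $q$ forever and again fails coB\"uchi. There is no room for the Borel--Cantelli argument you sketch: the event ``stay in $R$ forever'' either has probability $0$ under every strategy (and then your claimed winning strategy in~(ii) cannot exist) or has positive probability after a single entry with high enough energy (and then a finite-memory strategy already wins, contradicting~(i)). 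The paper's counterexample avoids this by placing a \emph{controlled} state inside the even-priority region, from which the controller may choose to bail out through the odd state precisely when the energy level is $0$; the positive drift of the gamble then ensures this bailout is taken only finitely often almost surely. The controlled, energy-dependent exit is the crux of the construction and cannot be replaced by a probabilistic one.

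Your Part~(2) characterisation is also incomplete. A maximal end component in which both $\PosMP$ and $\Parity$ hold almost surely need \emph{not} be almost-sure winning for $\EN{k}\cap\Parity$: take a single probabilistic state of even priority with self-loops $+1$ (probability $2/3$) and $-1$ (probability $1/3$); here $\PosMP\cap\Parity$ holds almost surely, yet $\EN{k}$ holds with probability strictly less than $1$ for every $k$. What is missing from your type~(b) is a \emph{safe} bailout: the paper requires, in addition to $\PosMP\cap\Parity$, that $\PosMP\cap\mathsf{ST}$ (positive drift \emph{together with} a storage bound) also holds on the same sub-MDP, and characterises the winning region as the largest set $R$ closed under both conditions, after first encoding the storage-parity thresholds into the MDP via edges to a winning sink. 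The winning strategy then alternates between a $\PosMP\cap\Parity$ strategy $\sigma$ (the ``attempt'', which may lose energy) and a $\PosMP\cap\ES{k}$ bailout $\beta$ that rebuilds the reserve \emph{without risking ruin}. Your retry scheme reverses these roles --- you accumulate via $\PosMP$ and attempt ``storage-parity'' --- but accumulating via $\PosMP$ alone can itself violate the energy condition (as in the one-state example), and attempting storage-parity is meaningless in a component where storage-parity fails by hypothesis. The iterated refinement that computes $R$ (removing states failing either condition, at most $|V|$ rounds) is what produces the $|V|^5$ factor and the $\NP\cap\coNP$ certificate; a single end-component/attractor pass over priorities does not capture this joint closure requirement.
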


\begin{theorem}
    \label{thm:ls-energy-parity}
    \label{thm:main}
    (1) The limit-sure problem for $k$-energy-parity objectives is in $\NP\cap\coNP$
    and can be solved in pseudo-polynomial time
    $\widetilde {\mathcal O}(d\cdot|V|^{4.5} \cdot (|\lambda| + |\cost{}|)^2 +
    \card{E}\cdot d\cdot \card{V}^5\cdot W)$.
    (2) 
    If the $k$-energy-parity objective holds limit-surely
    then, for each $\eps>0$, there exists a \emph{finite memory} $\eps$-optimal strategy.
\end{theorem}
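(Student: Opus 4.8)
The plan is to reduce the limit-sure problem to an energy-constrained reachability problem towards a set of ``safe havens'', and to derive both parts from the resulting characterisation. I expect the characterisation to state that $\Val{s}{\EN{k}\cap\Parity{}}=1$ holds if and only if, for every $\eps>0$, some strategy can, with probability $\geq1-\eps$ and with the running energy never dropping below $0$, steer the run into one of two kinds of haven: (a) the storage-parity winning set --- reached at (at least) its storage bound --- which by the results recalled above is the set of states from which the objective is met almost-surely with finite memory; or (b) a state $t$ with $\Val{t}{\PosMP\cap\Parity{}}=1$ --- reached with energy at least a threshold $B(\eps)<\infty$ --- which is possible exactly when a positive-cost cycle that co-reaches such a $t$ is itself reachable under the energy constraint. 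The point of haven~(b) is that from a large enough buffer one survives a strictly-positive-drift region with probability close to $1$: on those runs the cost partial sums tend to $+\infty$ almost surely, so a finite buffer absorbs the initial excursion, while parity, being a tail property, is unaffected. \Cref{ex:lval} is the archetype: from $s$ one reaches haven~(b), the state $r$, with unboundedly large energy by first staying long enough on the cost-$1$ self-loop, so $s$ is limit-sure but not almost-sure winning; $l$ reaches no haven.

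\textbf{Completeness and finite memory (part (2)).} The ``if'' direction, proved by composing strategies, also establishes part~(2). For each $\eps>0$: Phase~1 is a finite-memory strategy, extracted from an energy-constrained reachability analysis, that with probability $\geq1-\eps/2$ and energy $\geq0$ throughout reaches a haven, looping around the relevant positive cycle to raise the energy to $B(\eps)$ when a haven of kind~(b) is targeted; Phase~2 then switches to a finite-memory storage-parity strategy (kind~(a)) or to a finite-memory positive-mean-payoff-parity strategy (kind~(b)), the latter completing $\EN{k}\cap\Parity{}$ from energy $\geq B(\eps)$ with probability $\geq1-\eps/2$ by the excursion bound above. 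Both base strategies have finite memory, and --- crucially --- $B(\eps)$ is \emph{finite}, so Phase~1 needs only a bounded counter; hence the composed strategy has finite memory. This stands in sharp contrast with the infinite memory provably required for almost-sure energy-parity (\cref{thm:as-energy-parity}(1)), where the buffer must instead keep growing without bound along a single run.

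\textbf{Soundness (the main obstacle).} The ``only if'' direction is the hard part. Assuming $s$ cannot reach a haven with the stated energy guarantees with probability tending to $1$, I would exhibit a uniform $\eps_0>0$ with which every strategy fails $\EN{k}\cap\Parity{}$ with probability $\geq\eps_0$. The argument combines the standard fact that under any strategy almost every run eventually settles into an end-component, a finite attractor argument (the configurations from which no haven is energy-safely reachable form, after boundedly many steps, a trap), and a case analysis on the recurrent behaviour inside such a trap: in end-components where the optimal mean-payoff is non-positive, a trapped run's energy either drifts down and hits $0$, or fluctuates in a recurrent, random-walk-like manner and is still eventually ruined, or stays bounded away from ruin only on behaviour whose minimal recurrent priority is odd --- for otherwise the trap would contain a storage-parity or a positive-mean-payoff-parity winning region, hence a haven, a contradiction; in every case the run fails with probability bounded away from $0$. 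The delicate points, which I expect to be the crux, are the uniformity of these bounds over arbitrary history-dependent randomised strategies, and the treatment of strategies that shuttle between several near-haven states --- which is exactly where the ``arrive with energy $\geq B(\eps)$'' clause of the characterisation is needed.

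\textbf{Complexity (part (1)).} For part~(1): the storage-parity winning set is computed in pseudo-polynomial time and in $\NP\cap\coNP$ as recalled, the set $\{t:\Val{t}{\PosMP\cap\Parity{}}=1\}$ by a standard end-component analysis, and the limit-sure test then reduces to a pseudo-polynomial energy-constrained reachability query towards these havens on a linear-size modification of $\sys{M}$, together with a polynomial search for a suitable reachable positive cycle. Each ingredient admits the same kind of polynomially checkable certificates as the almost-sure problem, so the decision problem remains in $\NP\cap\coNP$; the running time is dominated by the energy-constrained reachability step and matches the stated pseudo-polynomial bound $\widetilde{\mathcal O}(d\cdot|V|^{4.5}\cdot(|\lambda|+|\cost{}|)^2+\card{E}\cdot d\cdot\card{V}^5\cdot W)$, as in \cref{thm:as-energy-parity}(1).
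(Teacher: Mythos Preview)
Your havens are exactly right: the paper's targets are the almost-sure storage-parity set $A'$ and the almost-sure $\PosMP\cap\Parity$ set $B$, and the phased finite-memory strategies you describe for part~(2) are essentially the ones the paper constructs. So the intuition and the easy direction are fine.

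The gap is in how you turn this into a decision procedure. The paper does \emph{not} try to decide ``for every $\eps$, energy-safely reach a haven with probability $\ge 1-\eps$'' directly. It first proves that the limit value $\sup_k\Val{p}{\EN{k}\cap\Parity}$ equals the \emph{energy-unconstrained} reachability value of $A'\cup B$; hence the set of limit-value-$1$ states is simply the almost-sure attractor of $A'\cup B$, computed with no energy bookkeeping at all. It then builds a linear-size extension $\sys{M}'$ in which every controlled limit-value-$1$ state may pay one energy unit to visit a fresh priority-$0$ state, and shows that $\Val[\sys{M}]{p}{\EN{k}\cap\Parity}=1$ iff $\EN{k}\cap\Parity$ holds \emph{almost-surely} from $p$ in $\sys{M}'$. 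The limit-sure problem thus collapses to the already-solved almost-sure problem on $\sys{M}'$, and the complexity bounds are inherited from \cref{thm:as-energy-parity}.

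Your plan lacks this collapsing step, and the substitute you offer does not work as stated. Your characterisation still carries an outer $\forall\eps$ and an $\eps$-dependent threshold $B(\eps)$, so it is another limit-sure condition rather than a decidable one. The asserted equivalence with ``a positive-cost cycle that co-reaches $B$ is reachable under the energy constraint'' is neither made precise nor proved: in an MDP with probabilistic states there may be no controller-forceable positive cycle even when pumping is possible via positive drift, and conversely a loop with positive \emph{expected} cost does not let you raise the energy to $B(\eps)$ from a fixed level with probability tending to~$1$ while staying non-negative. The soundness direction, which you rightly flag as the crux, is handled in the paper by a detour through an auxiliary finite MDP $\sys{B}_k$ that hard-codes energy in the window $[-k,|V|]$ and turns states with limit value $<1$ (not merely states outside $A'\cup B$) into losing sinks; an optimal pure parity strategy there is then lifted to an almost-sure strategy in $\sys{M}'$. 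This device is exactly what supplies the uniform bound you say you would need but do not provide.
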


\begin{remark}
The claimed algorithms are \emph{pseudo polynomial} in the sense that they depend (linearly)
on the value $W$. If the cost-deltas are $-1,0,$ or $1$ only, and not arbitrary binary encoded numbers,
this provides a polynomial time algorithm.
\end{remark}

Part (2) of \cref{thm:as-energy-parity} was already claimed in \cite{CD2011}, Theorem~1.
However, the proof there relies on a particular finiteness assumption that is not true in general.
In the next section we discuss this subtle error and describe the class of
(bounded) \emph{storage} objectives,
for which this assumption holds and the original proof goes through.
Our new proof of \cref{thm:as-energy-parity} is presented in \cref{sec:as-EP}.

The proof of \cref{thm:ls-energy-parity} is deferred to \cref{sec:limit,sec:ls-EP}.
It is based on a reduction 
to checking almost-sure satisfiability of storage-parity objectives,
which can be done in pseudo polynomial time (cf.\ Theorem \ref{thm:storage}). 
We first establish in \cref{sec:limit}
that certain \emph{limit values} are computable for each state.
In \cref{sec:extensions} we then provide the actual reduction, which is based
on precomputing these limit values and produces an MDP which is only linearly
larger and has no new priorities.

\section{Energy Storage Constraints}
\label{sec:storage}
\label{sec:bug}
The argument of \cite{CD2011} to show computability of almost-sure energy-parity objectives
relies on the claim that
the controller, if it has a winning strategy, can eventually commit to visiting an even colour infinitely often
and \emph{never} visiting smaller colours.
We show that this claim already fails for coB\"uchi conditions (i.e.\ for MDPs that only use colours $1$ and $2$).
We then identify a stronger kind of energy condition---the storage energy condition we introduce below---that satisfies the above claim and for which the original proof of \cite{CD2011} goes
through.

Let us call a strategy \emph{colour-committing}
if, for some colour $2i$, almost all runs eventually visit a position
such that \emph{almost all} possible continuations visit 
colour $2i$ infinitely often and \emph{no continuation} (as this is a safety constraint) visits a colour smaller than $2i$.

\begin{claim}
    \label{claim}
    If there exists some strategy that
    almost-surely satisfies $\EN{k}\cap\Parity$
    then there is also a colour-committing strategy that does.
\end{claim}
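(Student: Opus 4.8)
The plan is to start from a witnessing strategy $\sigma$ with $\Prob[\sys{M}(\sigma)]{}(\EN{k}\cap\Parity{})=1$ and to reshape it into a colour-committing one using the structure of the parity condition. Since $\sigma$ is almost-surely winning for $\Parity{}$, almost every run $\rho$ has a well-defined minimal colour $c(\rho)$ that occurs infinitely often, and this colour is even. This partitions the winning runs into finitely many measurable events, one per even colour in the range of $\parity$, and their probabilities sum to $1$. Hence some even colour $2i$ is the eventual minimal colour with positive probability, and by a suitable choice (e.g.\ conditioning, or treating each colour separately in the surrounding algorithm) I would fix attention on this $2i$.

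On the event $\{c(\rho)=2i\}$ every run visits colours strictly below $2i$ only finitely often. Therefore the \emph{last} visit to a colour $<2i$ defines an almost-surely finite stopping time $T$, after which the run already satisfies the safety part of the commitment (never again a smaller colour) while still visiting $2i$ infinitely often. The first genuine step is to make this eventual, run-dependent commitment uniform: because $T$ is almost-surely finite, for every $\delta>0$ there is a finite horizon $N$ with $\Prob[\sys{M}(\sigma)]{}(T\le N\mid c(\rho)=2i)\ge 1-\delta$. Thus, up to a set of measure $\delta$, almost all runs have reached a committed position within $N$ steps.

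I would then build the colour-committing strategy $\sigma'$ by following $\sigma$ for the first $N$ steps and, from each reached position that has already committed to $2i$ (i.e.\ one from which, under $\sigma$, no colour $<2i$ occurs with positive probability), switching to a sub-strategy on the sub-arena in which moves to colours $<2i$ are forbidden. This sub-strategy should (a)~maintain the safety constraint for free, by construction of the sub-arena, and (b)~still visit $2i$ infinitely often almost-surely, which one hopes to inherit from $\sigma$ by pruning only the $\delta$-small branches that would have violated the commitment. Letting $\delta\to 0$, or summing the contributions of the finitely many candidate colours, should yield a single strategy that is colour-committing and loses no probability mass, hence remains almost-surely winning for $\Parity{}$.

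The hard part, and the step I would examine most carefully, is verifying that this committed sub-strategy can honour the energy condition $\EN{k}$ \emph{forever}. During the first phase a run may build up an energy reserve, but after committing we outlaw exactly those excursions through colours $<2i$, and these may be the very moves on which $\sigma$ relied to replenish energy. Closing the argument therefore requires showing that the energy level at the commitment time, together with the restricted dynamics inside the sub-arena, keeps the energy non-negative indefinitely, ideally via a \emph{uniform} bound on the energy needed after committing that is independent of the particular run. I expect the whole claim to hinge on the existence of such a bounded energy store once lower colours are forbidden; this is precisely the finiteness assumption that the reduction tacitly needs, and the point at which the construction must be justified rather than assumed.
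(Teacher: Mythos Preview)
Your proposal has a genuine and fatal gap, which you yourself locate in the final paragraph: the claim is in fact \emph{false}, and the paper's ``proof'' is a counterexample, not a verification. The step you flag---that after committing to never see a colour below $2i$ one can still maintain $\EN{k}$ with probability~$1$---cannot be justified in general, and the paper exhibits a four-state energy-coB\"uchi MDP (states $A,B,C,D$ with $\parity(B)=1$ and the rest colour $2$) where precisely this fails. There, the unique almost-surely winning strategy for $\EN{0}\cap\Parity$ must occasionally route through the bad-colour state $B$ to avoid going below zero energy; any strategy that ever commits to avoiding $B$ from some reached history will, with positive probability, be forced down the $-1$ edge until the energy becomes negative. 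Hence no colour-committing strategy exists, yet an almost-surely winning (infinite-memory) strategy does.

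Two further remarks on the construction itself. First, even setting energy aside, your sketch only produces a strategy that commits on a set of runs of measure $\ge 1-\delta$ for each fixed $\delta$; the phrase ``letting $\delta\to 0$, or summing the contributions of the finitely many candidate colours'' does not obviously yield a \emph{single} strategy that commits almost surely, since the commitment horizon $N$ depends on $\delta$. Second, your intuition in the last sentence is exactly right and is the paper's point: the reduction of \cite{CD2011} tacitly assumed a bounded energy store after commitment, and the paper salvages it by introducing the \emph{storage} objective $\ES{k,s}$, for which that boundedness is part of the definition and the claim (and hence the reduction) does go through.
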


\begin{proof}[Proof (that \cref{claim} is false)]
Consider the following example,
where the controller owns states $A,B,C$
and tries to avoid visiting state $B$
infinitely often while maintaining the energy condition.
This can be expressed as an energy-parity condition where
$
\parity(A)=
\parity(C)=
\parity(D)= 2
$ and $\parity(B)=1$.

\begin{center}
\begin{tikzpicture}[node distance=1.5cm and 3cm, on grid]
    \tikzstyle{acstate}=[cstate,minimum size=0.2cm,]
    \node[cstate] (A) {$A$};
    \node[cstate, below=of A] (B) {$B$};
    \node[cstate, right=of B] (C) {$C$};
    \node[pstate, right=of A] (D) {$D$};
    
    \draw[->] (A) edge[bend right] node[swap] {$1$} (D);
    \draw[->] (D) edge[bend right] node[swap] {$\frac{2}{3}$, $0$} (A);
    \draw[->] (D) edge[bend left] node {$\frac{1}{3}$, $0$} (C);
    \draw[->] (C) edge[bend left] node {$-1$} (D);
    \draw[->] (C) edge node {$0$} (B);
    \draw[->] (B) edge node {$0$} (A);
\end{tikzpicture}
\end{center}

First notice that all states almost-surely satisfy the $0$-energy-coB\"uchi condition $\EN{0}\cap\Parity$.
One winning strategy is to chooses the edge $C\step{-1}D$ over $C\step{0} B$, unless the energy level is $0$, in which case $C\step{0} B$ is favoured over $C\step{-1}D$.
This strategy is not colour-committing but clearly energy safe:
the only decreasing step is avoided if the energy level is $0$.

To see why this strategy also almost-surely satisfies the
parity (in this case coB\"uchi) objective,
first observe that it guarantees a positive updrift:
from state $D$ with positive energy level, the play returns to $D$ in two steps with expected energy gain $+1/3$,
from state $D$ with energy level $0$, the play returns to $D$ in either two or three steps, in both cases with energy gain $+1$.
The chance to visit state $C$ with energy level $0$ when starting at state $D$ with energy level $k\in\N$ is $(1/2)^{k+1}$.
This is the same likelihood with which state $B$ is eventually visited.
However, every time state $B$ is visited, the system restarts from state $D$ with energy level $1$.
Therefore, the chance of re-visiting $B$ from $B$ is merely $1/4$.
More generally, the chance of seeing state $B$ at least $n$ further times is $(1/4)^{n}$.
The chance of visiting $B$ infinitely often is therefore $\lim_{n \to \infty} (1/4)^n = 0$.
This strategy thus satisfies the parity---in this case coB\"uchi---objective almost-surely.
Consequently, the combined $0$-energy-parity objective is almost-surely met from all states.

\smallskip
To contradict \cref{claim}, we contradict the existence of an initial state and
a colour-committing strategy that almost-surely satisfies the $0$-energy-parity objective.
By definition, such a strategy will, on almost all runs, eventually avoid state $B$ completely.

As every run will surely visit state $D$ infinitely often,
we can w.l.o.g.\ pick a finite possible prefix $s_1s_2\ldots s_j$ (i.e.\ a prefix that can occur with a likelihood $\delta > 0$) of a run that ends in state $s_j=D$ and assume that none (or only a $0$ set, but these two conditions coincide for safety objectives) of its continuations visits state $B$ again.
Let $l\eqdef\sum_{i=1}^j\cost(s_i,s_{i+1})$ denote the sum of rewards collected on this prefix. 
Note that there is a $(1/3)^{l+1}>0$ chance that some continuation alternates between states $D$ and $C$ for $l+1$ times and thus violates the $l$-energy condition.
Consequently, the chance of violating the $0$-energy parity
condition from the initial state is at least $\delta\cdot(1/2)^{l+1}>0$.
\end{proof}

Notice that every finite memory winning strategy for the $\Parity$ objective must also be colour-committing.
The system above therefore also proves part (1) of \cref{thm:as-energy-parity}, that infinite memory
is required for $k$-energy-parity objectives.

\medskip
In the rest of this section we consider a stronger
kind of energy condition, for which \cref{claim} does hold
and the original proof of \cite{CD2011} goes through.
The requirement is that the strategy achieves the energy condition without being able to store an infinite amount of energy.
Instead, it has a finite energy store, say $s$, and cannot store more energy than the size of this storage.
Thus, when a transition would lead to an energy level $s' > s$, then it would result in an available energy of $s$.
These are typical behaviours of realistic energy stores, 
e.g.\ a rechargeable battery or a storage lake. 
An alternative view (and a consequence) is that the representation of the
system becomes finite-state once the bound $s$ is fixed, and only finite
memory is needed to remember the current energy level.

For the definition of a storage objective, we keep the infinite storage capacity, but instead require that no subsequence loses more than $s$ energy units.
The definitions are interchangeable, and we chose this one in order not to change the transitions of the system.

\begin{definition}
    \label{def:storage}
    For a finite MDP 
    with associated cost function, 
    a run $s_0s_1\ldots$
    satisfies the \emph{$s$-storage condition}
    if, for every infix $s_ls_{l+1}\ldots s_u$,
    it holds that $s+\sum_{i=l}^{u+1}\cost(s_i,s_{i+1})\ge 0$.
    Let $\ES{k,s} \subseteq \Runs{}$ denote the $k$-energy $s$-storage objective,
    consisting of those runs that satisfy both the $k$-energy
    and the $s$-storage condition.
\end{definition}

\begin{example}
\label{example:ks_tradeoff}
The two parameters can sometimes be traded against each other, as shown in the following example.
\begin{center}
\begin{tikzpicture}[node distance=1.5cm and 3cm, on grid]
    \node[cstate] (M) {$q$};
    \node[cstate, left=of M] (L) {};
    \node[cstate, right=of M] (R) {};
    
    \draw[->] (M) edge[bend right] node[swap] {$+3$}   (L);
    \draw[->] (L) edge[bend right] node[swap] {$-2$} (M);
    \draw[->] (M) edge[bend left]  node {$-1$} (R);
    \draw[->] (R) edge[bend left]  node {$+2$} (M);
\end{tikzpicture}
\end{center}
From state $q$ in the middle, one can win with an initial energy level $0$ by
always going left, provided that one has an energy store of size at least
$2$. With an energy store of size $1$, however, going left is not an option, as one would not be able to return from the state on the left. But with an initial energy level of $1$, one can follow the strategy to always turn to the right.
So the $\ES{0,2}$ and $\ES{1,1}$ objectives hold almost-surely but the $\ES{0,1}$ objective does not.
\end{example}

We sometimes want to leave the size of the energy store open.
For this purpose, we define $\ES{k}$ as the objective that says ``there is an $s$, such that $\ES{k,s}$ holds'' and $\mathsf{ST}$ for ``there is an $s$ such that $\ES{s,s}$ holds''.
Note that this is not a path property; we rather require that the $s$ is fixed globally.
In order to meet an $\ES{k}$ property almost-surely, there must be a strategy $\sigma$ and an $s\in \N$ such that almost all runs  satisfy $\ES{k,s}$: $\exists \sigma,s$ s.t.\ $\Prob[\sys{M}(\sigma)]{}(\ES{k,s}) = 1$.
Likewise, for limit-sure satisfaction of $\mathsf{ST}$, we require $\exists
s\ \forall \varepsilon >0\ \exists \sigma$ s.t.\  $\Prob[\sys{M}(\sigma)]{}(\ES{s,s}) \ge 1-\eps$.

We now look at combined storage-parity and storage-mean-payoff objectives.

\begin{theorem}[Storage-Parity]
    \label{thm:storage}
    \label{thm:as-storage-parity}
    For finite MDPs with states $V$, edges $E$ and associated cost and parity functions,
    with maximal absolute cost $W$ and maximal colour $d\in\N$:
    \begin{itemize}
        \item[1)] The almost-sure problem for storage-parity objectives 
    is in $\NP\cap\coNP$, and there is an algorithm to solve it in
    $\mathcal{O}(\card{E}\cdot d\cdot \card{V}^4\cdot W)$ deterministic time.
\item[2)] Memory of size $\mathcal{O}(\card{V}\cdot W)$ is sufficient for almost-sure winning strategies.
    This also bounds the minimal values $k,s\in\N$ such that $\ES{k,s}\cap\Parity$ holds almost-surely.
    \end{itemize}
\end{theorem}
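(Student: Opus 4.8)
The plan is to reduce the almost-sure storage-parity problem to a finite-state problem by fixing the storage bound, and then to invoke the (correct) machinery of \cite{CD2011} for energy-B\"uchi, which goes through verbatim once the colour-committing property of \cref{claim} is reinstated. First I would argue that it suffices to consider storage bounds $s$ of size $\mathcal{O}(\card{V}\cdot W)$: if an $s$-storage strategy wins almost-surely for some large $s$, then along any run the energy can only usefully vary within a window, and one can truncate/reshape the strategy to operate with a storage of size at most $(\card{V}-1)\cdot W$ (the maximal energy swing achievable on a simple path) plus the initial level $k$, which itself may be taken $\le (\card{V}-1)\cdot W$ by the same argument. This gives the bounds in part (2) and simultaneously bounds the size of the finite-state MDP $\sys{M}_s$ obtained by encoding the current energy level (capped at $s$) into the state: it has $\mathcal{O}(\card{V}\cdot s) = \mathcal{O}(\card{V}^2\cdot W)$ states and $\mathcal{O}(\card{E}\cdot \card{V}\cdot W)$ edges, with the same $d$ priorities.

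Second, I would observe that on this finite-state MDP $\sys{M}_s$ the storage condition becomes a pure safety condition (never leave the ``energy $\ge 0$'' region), so the objective $\ES{k,s}\cap\Parity$ becomes a plain \emph{safety-parity} objective on a finite MDP. The key point — and this is where \cref{claim} is used — is that on a finite-state MDP, finite-memory almost-sure winning strategies for parity exist, hence the winning strategy \emph{is} colour-committing in the sense defined before \cref{claim}; equivalently, once the storage is bounded, the pathology exploited in the counterexample (unbounded energy accumulated as a substitute for memory) disappears. I would then restrict to the subset of states from which the safety part is guaranteed (a standard attractor-removal / safe-region computation, linear in the size of $\sys{M}_s$), and on that sub-MDP solve almost-sure parity. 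Using the classical reduction of parity to a chain of B\"uchi/coB\"uchi subproblems over the $d$ priorities — exactly the reduction of \cite{CD2011}, Sec.~3, which is sound here because the colour-committing assumption holds — and the fact that almost-sure B\"uchi on a finite MDP is solvable in time linear in its size, one gets an overall running time of $\mathcal{O}(d)$ times the size of $\sys{M}_s$, i.e.\ $\mathcal{O}(\card{E}\cdot d\cdot \card{V}^4\cdot W)$ after accounting for the $\mathcal{O}(\card{V}^2\cdot W)$ blow-up of the energy encoding (one factor $\card{V}^2 W$ from $s$, and the analysis of the nested priority recursion contributing the remaining polynomial factor in $\card{V}$). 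The membership in $\NP\cap\coNP$ follows because deciding storage-parity reduces to deciding energy-parity \emph{games} (guess the storage bound $s$ in unary-bounded-by-a-polynomial form, then solve the resulting mean-payoff-parity game), which lies in $\NP\cap\coNP$; alternatively one directly guesses the memoryless-on-$\sys{M}_s$ winning strategy and verifies it in polynomial time, and does the same for the complement.

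Third, for part (2), the memory bound: a winning strategy on $\sys{M}_s$ that is memoryless on $\sys{M}_s$ corresponds, on $\sys{M}$, to a strategy whose only memory is the current (capped) energy level, i.e.\ memory of size $s+1 = \mathcal{O}(\card{V}\cdot W)$; and the truncation argument above shows the minimal working $k$ and $s$ are both $\mathcal{O}(\card{V}\cdot W)$. The main obstacle I expect is the first step — rigorously bounding the storage $s$ (and the initial level $k$) by $\mathcal{O}(\card{V}\cdot W)$ while \emph{preserving} almost-sure satisfaction of the parity condition: one must show that capping the energy at a polynomially-bounded value does not destroy any run's ability to satisfy parity almost-surely. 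The right way to see this is to take an optimal strategy, consider its induced Markov chain, and on each bottom strongly connected component argue that only energy fluctuations within a $\card{V}\cdot W$-window matter for both the safety and the recurrence behaviour; excess stored energy is never needed because any ``descent'' that the strategy might need to survive can be made from within the window. Carefully handling the interface between the transient part of the chain (reaching the good BSCCs) and the recurrent part — making sure the initial level $k$ stays polynomially bounded — is the delicate bookkeeping, but it parallels the energy-B\"uchi analysis of \cite{CD2011} closely enough to reuse.
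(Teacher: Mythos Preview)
Your proposal shares the paper's key observation: once storage is bounded, winning strategies are finite-memory and hence colour-committing, so the parity-to-B\"uchi reduction of \cite{CD2011} (Lemma~3) is sound for storage-parity. The route to the storage bound, however, is different. You try to bound $s$ directly by a truncation/BSCC argument on the MDP --- your acknowledged ``main obstacle'' --- whereas the paper never argues this at the MDP level. Instead it invokes \cite{CD2011} (Lemma~2), which reduces almost-sure energy-B\"uchi on MDPs to \emph{two-player} energy-B\"uchi games; winning strategies there come from first-cycle games and therefore operate within an $\mathcal{O}(\card{V}\cdot W)$ energy radius, so almost-sure energy-B\"uchi and storage-B\"uchi coincide and the bounds on $k,s$ drop out at the B\"uchi level for free. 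The parity-to-B\"uchi reduction then transports the bound to general parity. Likewise, the $\NP\cap\coNP$ membership and the $\mathcal{O}(\card{E}\cdot d\cdot \card{V}^4\cdot W)$ running time are quoted from two-player energy-parity games \cite{CD2012} via that same reduction, not derived from solving safety-parity on an energy-encoded MDP $\sys{M}_s$ as you propose.

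Your encoding approach is a plausible alternative, but the step you flag as delicate --- reshaping a large-$s$ winning strategy into a small-$s$ one without losing almost-sure parity --- is precisely what the paper sidesteps by going through games, where the first-cycle structure makes the bound immediate. Your complexity accounting is also loose: the stated bound is the game-theoretic one from \cite{CD2012}, not something naturally recovered from a nested priority recursion on $\sys{M}_s$, and your $\NP\cap\coNP$ sketch (``solve the resulting mean-payoff-parity game'') conflates objectives; the paper simply inherits membership from energy-parity games.
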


The proof is provided by Chatterjee and Doyen \cite{CD2011}:
they first show the claim for energy-B\"uchi objectives $\EN{k}\cap\Parity$
(where $d=1$)
by reduction to two-player energy-B\"uchi \emph{games}
(\!\!\cite{CD2011}, Lemma~2).
Therefore, almost-sure winning strategies come from first-cycle games and operate in a bounded
energy radius. As a result, almost-sure satisfiability for energy-B\"uchi and storage-B\"uchi
coincide.
They then (\!\cite{CD2011}, Lemma~3) provide a reduction for general parity conditions
to the B\"uchi case, assuming \cref{claim}. Although this fails for energy-parity
objectives, as we have shown above, almost-sure winning strategies for storage-parity
can be assumed to be finite memory and therefore colour committing.
The construction of \cite{CD2011} then goes through without alteration.
The complexity bound follows from improvements for energy parity games \cite{CD2012}.

\begin{theorem}[Storage-Mean-payoff]
    \label{thm:as-ES-PosMP}
    \label{thm:as-storage-PosMP}
    \label{lem:bailout-to-parity}
        For finite MDPs with combined storage and positive mean-payoff objectives:
    \begin{itemize}
        \item[1)] The almost-sure problem is in $\NP\cap\coNP$
            and can be solved in
            $\mathcal{O}(\card{E}\cdot d\cdot \card{V}^4\cdot W)$ deterministic time.
        \item[2)] Memory of size $\mathcal{O}(\card{V}\cdot W)$ is sufficient for almost-sure winning strategies.
            This also bounds the minimal value $k,s\in\N$ such that $\ES{k,s}\cap\PosMP$ holds almost-surely.
    \end{itemize}
\end{theorem}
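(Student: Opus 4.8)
The plan is to reduce the combined storage positive-mean-payoff problem to the combined storage-parity problem of \cref{thm:as-storage-parity}, so that both the complexity bound and the memory bound transfer directly. The key observation is that, unlike the energy case, the \emph{storage} condition $\ES{k,s}$ makes the relevant part of the system finite-state: once $s$ is fixed, the energy level can be tracked with $\mathcal{O}(s)$ memory, so we may reason about a finite MDP $\sys{M}_s$ whose states are pairs $(v,j)$ with $v\in V$ and $0\le j\le s$, and in which the $s$-storage constraint is a safety condition (never drop below $0$ within any infix). The first step is therefore to argue that, for the purpose of almost-sure satisfaction of $\ES{k,s}\cap\PosMP$ with $s$ bounded as in the statement, it suffices to work in such a finite product and that positive mean-payoff there can be captured by a parity-style condition on the product states.

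Concretely, I would proceed as follows. First, fix a candidate store size $s = \mathcal{O}(\card V\cdot W)$ (the bound guaranteed to be sufficient for storage-parity, which I claim is also sufficient here; see below) and form the finite product MDP $\sys{M}_s$. In $\sys{M}_s$, a run satisfies $\ES{s,s}$ iff it never takes an edge that would force the tracked level below $0$; restricting to the sub-MDP where such edges are removed (in controlled states) or, for probabilistic states, restricting attention to the almost-sure safe region, this becomes an ordinary sinkless sub-MDP $\sys{N}$. Second, on $\sys{N}$ I would characterise positive mean-payoff almost-surely: in a finite MDP, a strategy achieves $\PosMP$ almost-surely iff, almost-surely, the run eventually stays within a maximal end-component (MEC) of $\sys{N}$ whose optimal expected mean-payoff (over all strategies that stay inside it) is strictly positive, equivalently iff from the current state the controller can reach such a ``good'' MEC almost-surely. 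The set of good MECs and the reachability region are computable in polynomial time in the size of $\sys{N}$, i.e.\ in $\mathcal{O}(\card E\cdot \card V\cdot W)$ plus the MEC/mean-payoff computation; crucially this only adds a reachability layer and no new priorities. Third, I would encode ``reach a good MEC and stay there'' as a storage-B\"uchi (hence storage-parity with $d=1$, or at most a fixed small number of colours) objective on a system linearly larger than $\sys N$: mark the states inside good MECs with the accepting colour, and observe that a run satisfying the storage condition and visiting the accepting colour infinitely often necessarily settles in a good MEC, on which the positive drift is then realised by the mean-payoff-optimal memoryless strategy of that MEC (which operates within a bounded energy window because the storage size is already fixed). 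Applying \cref{thm:as-storage-parity} to this system gives membership in $\NP\cap\coNP$, the running time $\mathcal{O}(\card E\cdot d\cdot\card V^4\cdot W)$, and memory $\mathcal{O}(\card V\cdot W)$; composing with the (memoryless) reach-strategy and the (memoryless) within-MEC strategy keeps the memory bound $\mathcal{O}(\card V\cdot W)$, and the values $k,s$ inherit the same bound.

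The main obstacle I expect is justifying that the store size $s = \mathcal{O}(\card V\cdot W)$ that suffices for storage-parity also suffices here, together with the tight interaction between the mean-payoff drift and the storage window. One direction is easy: if $\ES{k,s}\cap\PosMP$ holds almost-surely for \emph{some} $k,s$, then in particular storage-B\"uchi holds on the ``good-MEC'' product (taking any positive-drift MEC and the accepting marking), so \cref{thm:as-storage-parity}(2) already bounds the store needed to win \emph{that} storage-parity game by $\mathcal{O}(\card V\cdot W)$. The subtlety is that a good-MEC strategy certified with a small store must still produce a \emph{strictly positive} mean-payoff and not merely avoid energy depletion; here I would use the standard fact that in a finite MEC the mean-payoff-optimal value is attained by a memoryless strategy, and that along such a strategy the energy fluctuation inside any simple cycle is bounded by $\card V\cdot W$, so a store of that order both maintains safety and does not interfere with the long-run positive average. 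Assembling these pieces — finite product, MEC analysis for $\PosMP$, reduction to storage-B\"uchi, and the strategy composition with matched memory bounds — yields both items of the theorem.
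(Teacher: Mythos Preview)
Your plan reduces storage-mean-payoff to storage-parity, which is the same overarching idea as the paper, but you execute it very differently, and the differences are not innocent.

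The paper does \emph{not} unroll the energy level into the state space. It instead builds a new MDP $\sys{M}'$ that is only linearly larger than $\sys{M}$: every state $q$ is split into $q$ (controlled), $q'$, and $q''$, where $(q,q')$ has cost $0$, $(q,q'')$ has cost $-1$, and each original edge $(q,r)$ is duplicated as $(q',r)$ and $(q'',r)$; the double-primed states get colour~$0$, all others colour~$1$. Intuitively, the controller may, at every step, ``sell'' one unit of energy to buy a B\"uchi-accepting visit. One then argues $\ES{k}\cap\PosMP$ holds almost-surely in $\sys{M}$ iff $\ES{k}\cap\Parity$ (B\"uchi) holds almost-surely in $\sys{M}'$: positive mean-payoff under a storage bound means the energy repeatedly hits the cap, so one can afford to sell infinitely often; conversely, a finite-memory storage-B\"uchi strategy in $\sys{M}'$ (guaranteed by \cref{thm:as-storage-parity}) sells energy infinitely often, hence gains strictly positive mean-payoff in $\sys{M}$. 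Because $\card{V'}=\mathcal O(\card V)$ and the reduction introduces no extra $W$-factor, both the time bound and the memory/store bound transfer verbatim from \cref{thm:as-storage-parity}.

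Your route via the explicit product $\sys{M}_s$ runs into two concrete problems. First, the complexity does not come out right: $\sys{M}_s$ has $\Theta(\card V\cdot s)$ states, and with $s=\Theta(\card V\cdot W)$ any polynomial-time procedure on $\sys{M}_s$ already incurs higher powers of $\card V$ and of $W$ than the stated $\mathcal O(\card E\cdot d\cdot\card V^4\cdot W)$; invoking \cref{thm:as-storage-parity} \emph{again} on top of the product, as you propose, compounds this. Second, the argument that $s=\mathcal O(\card V\cdot W)$ suffices is circular as written: you want to appeal to \cref{thm:as-storage-parity}(2) on ``the good-MEC product'', but that product is itself built from the (potentially large) $s$ you are trying to bound, so the conclusion you extract is $\mathcal O(\card{V'}\cdot W)$ with $\card{V'}=\Theta(\card V\cdot s)$, not $\mathcal O(\card V\cdot W)$. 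Relatedly, the claim that the mean-payoff-optimal memoryless strategy inside a positive MEC ``operates within a bounded energy window'' is not true for probabilistic states: even with positive drift, a fixed memoryless strategy almost surely produces infixes of arbitrarily large loss, so it violates every $s$-storage condition. You would have to replace this step by a genuinely storage-aware strategy inside the MEC, which is precisely what you are trying to construct.

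In short, the paper's ``sell one energy unit for a B\"uchi visit'' gadget is the missing idea: it gives a size-preserving reduction to storage-B\"uchi and makes both the store bound and the running time immediate, whereas your product-based plan would need an independent store-size argument and would in any case yield weaker complexity bounds.
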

\begin{proof}
    We show that, for every MDP $\sys{M}$ with associated $\cost$ function,
    there is a linearly larger system $\sys{M'}$ with associated $\cost'$ and $\parity$ function
    ---where the parity function is B\"uchi, i.e.\ has image $\{0,1\}$---that, for every $k\in\N$,
    $\PosMP\cap\ES{k}$ holds almost-surely in $\sys{M}$ iff
    $\Parity\cap\ES{k}$ holds almost-surely in $\sys{M}'$.

    For every state $q$ of $\sys{M}$,
    the new system $\sys{M'}$ contains two new states, $q'$ and $q''$,
    edges $(q,q')$ and  $(q,q'')$ with costs $0$ and $-1$, respectively.
    Each original edge $(q,r)$ is replaced by two edges, $(q',r)$ and $(q'',r)$.
    All original states become controlled, and the primed and double primed copies of a state $q$ are controlled if, and only if, $q$ was controlled in $\sys{M}$.
    The double primed states have colour $0$, while all original and primed states have colour $1$.
    See \cref{fig:storage-PosMP} (on the left) for an illustration.

To give the idea of this construction in a nutshell, the B\"uchi condition in $\sys{M'}$ intuitively sells one energy unit for visiting an accepting state (or: for visiting a state with colour $0$, the double primed copy).
$\ES{k}$ implies that, as soon as $s+1$ energy is available, one can sell off one energy unit for a visit of an intermediate accepting state. $\PosMP$ implies that this can almost-surely be done infinitely often.
Vice-versa, $\ES{k}$ implies non-negative mean payoff.
$\ES{k}$ plus B\"uchi can always be realised with finite memory by \cref{thm:as-storage-parity} (2),
and such a strategy then implies that $\PosMP\cap\ES{k}$ holds almost-surely in $\sys{M}$.
Now the claim holds by \cref{thm:as-storage-parity}.
\end{proof}

\begin{remark}
Note that the order of quantification in the limit-sure problems for storage objectives
($\exists s. \forall \eps \ldots$)
means that limit-sure and almost-sure winning coincides for storage-parity objectives:
if there is an $s$ such that $\ES{s,s} \cap \Parity{}$ holds limit-surely
then one can get rid of the storage condition by hardcoding energy-values up to $s$ into the states.
The same is true for mean-payoff-storage objectives.
The claims in \cref{thm:as-storage-parity,thm:as-ES-PosMP} thus also hold for the limit-sure problems.
\end{remark}

Finally, we restate the result from \cite{CD2011}, Theorem~2 (1) on positive mean-payoff-parity objectives 
and add to it an explicit computational complexity bound that we will need later.

\begin{theorem}[Mean-payoff-Parity]
    \label{thm:as-Par-PosMP}
        For finite MDPs with combined positive mean-payoff and parity objectives,
    \begin{itemize}
        \item[1)] The almost-sure problem is in {\sc P} and can be solved in 
        $\widetilde {\mathcal O}(d\cdot|V|^{3.5} \cdot (|\lambda| + |\cost{}|)^2)$ time.
        \item[2)] Finite memory suffices.
    \end{itemize}
\end{theorem}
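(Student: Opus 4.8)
The plan is to treat this statement as a restatement of \cite{CD2011}, Theorem~2, so that the qualitative content --- membership in {\sc P} and sufficiency of finite memory --- is already available, and the only genuinely new task is to extract the explicit running time. First I would recall the structural characterisation underlying that result. The almost-sure winning region for $\PosMP\cap\Parity$ is the set of states from which the controller can almost-surely reach a \emph{good end component}, where an end component $U$ is called good if some strategy confined to $U$ almost-surely satisfies both objectives. The key lemma to re-establish is: $U$ is good iff it contains a sub-end-component $U'$ whose smallest colour $2i$ is even and whose optimal expected mean-payoff $v(U')$ in the induced sub-MDP is strictly positive. For the ``if'' direction I would play, inside $U'$, a strategy that visits every state with positive frequency while pushing the average towards $v(U')>0$, so that the minimal colour seen infinitely often is exactly $2i$; for ``only if'' I would observe that under any winning strategy the set of states visited infinitely often is almost-surely such a $U'$, and that its conditional expected liminf-average, being positive, is a lower bound on $v(U')$.

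From this characterisation the algorithm is immediate. For each even colour $p\le d$, restrict $\sys{M}$ to the (maximal) sub-MDP on the states of colour $\ge p$, compute its maximal end components, and for every MEC $M$ that still contains a colour-$p$ state solve the linear program for $v(M)$; whenever $v(M)>0$, declare $M$ good. Finally compute the set of states that can almost-surely reach the union of all good end components found. Correctness follows because, as $p$ ranges over the even colours, this enumerates --- up to sub-end-components --- every possible value of the minimal even colour, which is exactly what the characterisation requires.

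For the complexity, the outer loop has $O(d)$ iterations, and inside one iteration the sub-MDP restriction and the MEC decomposition are polynomial and dominated; the bottleneck is solving the mean-payoff linear programs, each of which, for an $n$-state MEC, costs $\widetilde{\mathcal O}(n^{3.5}L^2)$ using a standard interior-point LP solver, where $L\eqdef|\lambda|+|\cost{}|$ bounds the bit-size of the data. Since the MECs of one decomposition are pairwise vertex-disjoint and $x\mapsto x^{3.5}$ is superadditive, $\sum_M n_M^{3.5}\le|V|^{3.5}$, so one iteration costs $\widetilde{\mathcal O}(|V|^{3.5}L^2)$ and the whole procedure $\widetilde{\mathcal O}(d\cdot|V|^{3.5}L^2)$, as claimed; the attractor computation is polynomial and absorbed into the $\widetilde{\mathcal O}$.

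For part~(2) I would exhibit a finite-memory winning strategy directly. Inside a good end component $M$ with smallest (even) colour $p$ and $v(M)>0$, fix a memoryless mean-payoff-optimal strategy of the sub-MDP $M$ together with a memoryless routine that reaches a colour-$p$ state from anywhere in $M$ within $|V|$ steps, and run them in alternating phases of a fixed length $N$. Choosing $N$ large enough in terms of $|V|$, $W$ and $1/v(M)$ (hence finite) makes the transient cost of the navigation phases negligible, so the long-run average stays $\ge v(M)/2>0$ almost-surely, while colour $p$ --- the smallest colour occurring in $M$ --- is visited infinitely often and no smaller colour ever occurs; this needs only a bounded counter. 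Outside good end components one plays a memoryless almost-sure reachability strategy towards their union and switches to the above once committed to staying in a good end component, so finite memory suffices throughout. The step I expect to be the main obstacle is the clean formulation and proof of the ``good end component'' characterisation --- in particular dealing with the possibly multichain behaviour of mean-payoff-optimal memoryless strategies when inferring $v(U')>0$ from a positive liminf-average on the recurrent part, and reconciling the requirement to stay inside $U'$ (to avoid smaller colours) with mean-payoff optimality; by comparison the complexity bookkeeping is routine once the interior-point bound is invoked.
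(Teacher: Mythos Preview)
The proposal is correct and takes essentially the same approach as the paper: both defer the qualitative content (membership in {\sc P}, finite memory) to \cite{CD2011} and obtain the explicit bound by costing the $O(d)$ loop iterations of Algorithm~1 there, with the mean-payoff computation on maximal end components as the bottleneck, solved via an interior-point LP in $\widetilde{\mathcal O}(|V|^{3.5}L^2)$. The paper is simply terser --- it cites Karmarkar and the gain/bias LP from \cite{Puterman:book} directly and remarks that the LP can be taken over \emph{all} MECs at once --- whereas you spell out the good-end-component characterisation and combine per-MEC costs via superadditivity of $x\mapsto x^{3.5}$; both routes yield the same bound.
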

\begin{proof}
The computation complexity bound follows from the analysis of Algorithm 1 in \cite{CD2011}. 
It executes $d/2$ iterations of a loop, in which Step 3.3 of computing the mean-payoff of maximal end components dominates the cost. This can be formulated as a linear program (LP) that uses two variables, called {\em gain} and {\em bias}, for each state\cite{Puterman:book}. This LP can be solved using Karmarkar's algorithm \cite{Karmarkar/84/Karmarkar} in time
      $\widetilde {\mathcal O}(|V|^{3.5} \cdot (|\lambda| + |\cost{}|)^2)$. Note that the complexity refers to \emph{all} (not each) maximal end-components.

As we do not need to obtain a maximal payoff $\pi>0$ but can use any smaller value, like $\pi/2$, finite memory suffices.
\end{proof}

\section{Almost-Sure Energy-Parity}
\label{sec:as-EP}
\label{sec:correction}
\newcommand{\annoy}{\\ }
In this section we prove \cref{thm:correction}.
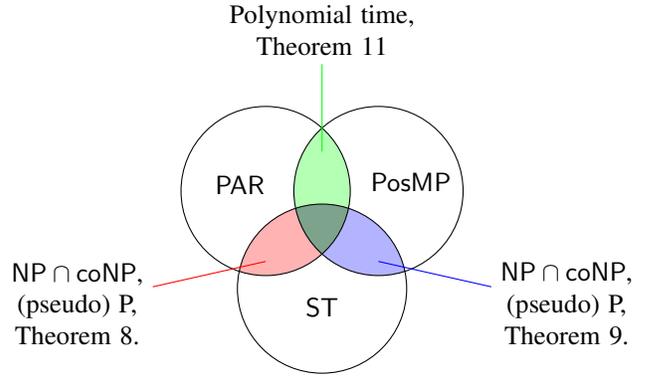
\begin{figure}[t]
  \centering
\def\firstcircle{(0,0) circle (1.5cm)}
\def\secondcircle{(-60:2cm) circle (1.5cm)}
\def\thirdcircle{(0:2cm) circle (1.5cm)}
\begin{tikzpicture}[scale=0.75]
    \begin{scope}[fill opacity=0.3]
    \begin{scope}
       \clip \firstcircle;
       \fill[red] \secondcircle;
    \end{scope}
    \begin{scope}
       \clip \secondcircle;
       \fill[blue] \thirdcircle;
    \end{scope}
    \begin{scope}
       \clip \thirdcircle;
       \fill[green] \firstcircle;
    \end{scope}
  \end{scope}
    \draw \firstcircle node [label={[label distance=-0.4cm]120:$\Parity{}$}] {};
    \draw \secondcircle node [anchor=north] {$\mathsf{ST}$};
    \draw \thirdcircle node [label={[label distance=-0.4cm]30:$\PosMP$}] {};
  \draw (4,-2) node[right,align=center]
      {
          $\NP \cap\coNP$,\annoy
          (pseudo) P,\annoy
          \cref{thm:as-ES-PosMP}.
      } 
      edge[blue] (2,-1.25);
  \draw (-2,-2) node[left,align=center] {
          $\NP \cap\coNP$,\annoy
          (pseudo) P,\annoy
          \cref{thm:storage}.
  } edge[red] (0,-1.25);
  \draw (1,2.25) node[above,align=center] {
Polynomial time,\annoy
\Cref{thm:as-Par-PosMP}
} edge[green] (1,0.7);
\end{tikzpicture}
\caption{
\label{fig:combinations}
Results on the almost-sure problems for
combined objectives.
Positive mean-payoff-parity 
${\PosMP\cap\Parity{}}$ (in green)
requires infinite memory
while
storage-parity (in red)
and 
storage-mean-payoff (in blue)
have finite-memory optimal strategies.
} 
\end{figure}
Our proof can be explained in terms of the three basic objectives:
storage ($\mathsf{ST}$), positive mean-payoff ($\PosMP$), and parity ($\Parity$).
It is based on the intuition provided by the counterexample in the previous section.
Namely, in order to almost-surely satisfy the energy-parity condition one needs to combine two strategies:
\begin{enumerate}
    \item One that guarantees the parity condition and, at the same time, a positive expected mean-payoff.
        Using this strategy one can achieve the energy-parity objective with some non-zero chance.
    \item A \emph{bailout} strategy that guarantees positive expected mean-payoff together with a storage condition.
        This allows to (almost-surely) set the accumulated energy level to some arbitrarily high value.
\end{enumerate}
We show that, unless there exist some safe strategy that satisfies storage-parity,
it is sufficient (and necessary) that such two strategies exist
and that the controller can freely switch between them.
I.e.\ they do not leave the \emph{combined} almost-sure winning set unless a state that satisfies storage-parity is reached.

\newcommand{\Bail}{\PosMP\cap\ES{k}}

\medskip
Recall that the combined positive mean-payoff-parity objective (for case 1 above) 
is independent of an initial energy level
and its almost-sure problem is decidable in polynomial time due to \Cref{thm:as-Par-PosMP}.
The mean-payoff-storage objective $\ES{k}\cap\PosMP$ (for case 2 above),
as well as the storage-parity objective are computable by \cref{thm:as-ES-PosMP,thm:storage}, respectively.
See \cref{fig:combinations}.

\medskip
To establish \cref{thm:correction},
we encode the almost-sure winning sets of the storage-parity objective directly into the system
(\cref{def:encode-sc,lem:ff-ext-claims}),
in order to focus on the two interesting conditions from above.
We then show
(\cref{def:fix:refinement,lem:safe-computable})
that the existence of the two strategies for bailout and $\mathsf{ST} \cap \PosMP$, and the minimal safe energy levels
can be computed in the claimed bounds.
In \cref{lem:mainfix} we show that these values coincide with the minimal energy levels of the energy-parity objective for the original system, which concludes the proof.

\begin{definition}
    \label{def:encode-sc}
    For a given MDP $\sys{M}$ and associated $\cost$ and $\parity$ functions, we define
    an MDP $\sys{M}'\eqdef(\VC',\VP', E', \prob')$ with states $V'\eqdef \VC'\uplus\VP'$
    as follows.
    For every state $q$ of $\sys{M}$
    there are two states, $q$ and $q'$ in $V'$ such that both have the same colour
    as $q$ in $\sys{M}$, every original incoming edge now only goes to $q'$,
    and
    every original outgoing edge now only goes from $q$.
    Moreover, $q'$ is controlled and has an edge to $q$ with $\cost(q',q)=0$.

    Finally, $\sys{M}'$ contains a single winning sink state $w$ with colour
    $0$ and a positive self-loop,
    and every state $q'$ gets an edge to $w$ where the cost 
    of $-k_q$, where $k_q\in\N$ is the minimal value such that, for some $s\in\N$,
    the storage-parity objective $\ES{k_q,s}\cap\Parity$ holds almost-surely
    See \cref{fig:encode-sc} (on the right) for an illustration.
\end{definition}

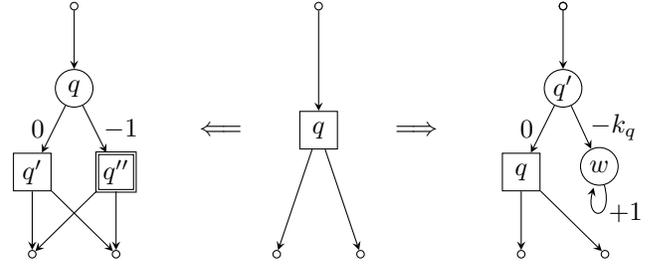
\begin{figure}[t]
  \centering
  \begin{tikzpicture}[node distance=1.65cm and 3.25cm, on grid]
    \tikzstyle{acstate}=[cstate,minimum size=0.1cm]
    \node[pstate] (q) {$q$};
    \node[acstate, above=of q] (top) {};
    \coordinate [below=of q] (midbelow) ;
    \node[acstate, left=0.5cm of midbelow] (belowl) {};
    \node[acstate, right=0.5cm of midbelow] (belowr) {};
    \draw[->] (top) to (q);
    \draw[->] (q) to (belowr);
    \draw[->] (q) to (belowl);

    \begin{scope}
    \node[acstate, right=of top] (top') {};
    \coordinate [right= of q] (mid') ;
    \node[cstate, above=0.3cm of mid'] (q') {$q'$};
    \node[acstate, right=of top] (top') {};
    \node[acstate, right=of belowl] (belowl') {};
    \node[acstate, right=of belowr] (belowr') {};
    \node[pstate, below left=0.3cm and 0.3cm of mid'] (q'') {$q$};
    \node[cstate, below right=0.3cm and 0.3cm of mid'] (w) {$w$};
    \draw[->] (top')to (q');
    \draw[->] (q') to node[left] {$0$} (q'');
    \draw[->] (q'') to (belowr');
    \draw[->] (q'') to (belowl');
    \draw[->] (q') to node[right] {$-k_q$} (w);
    \draw[->,loop below] (w) edge node[right] {$+1$} (w);
 
    \node (b) at ($(q)!0.4!(mid')$) {$\implies$};
    \end{scope}

    \begin{scope}
    \node[acstate, left=of top] (top') {};
    \coordinate [left= of q] (mid') ;
    \node[acstate, left=of belowl] (belowl') {};
    \node[acstate, left=of belowr] (belowr') {};
    \node[cstate, above=0.3cm of mid'] (q') {$q$};
    \node[pstate, below left=0.3cm and 0.3cm of mid'] (q'') {$q'$};
    \node[pstate, double,below right=0.3cm and 0.3cm of mid'] (q''') {$q''$};
    \draw[->] (top')to (q');
    \draw[->] (q') to node[left] {$0$} (q'');
    \draw[->] (q') to node[right] {$-1$} (q''');
    \draw[->] (q'') to (belowr');
    \draw[->] (q'') to (belowl');
    \draw[->] (q''') to (belowr');
    \draw[->] (q''') to (belowl');
    
    \node (b) at ($(q)!0.4!(mid')$) {$\impliedby$};
    \end{scope}
\end{tikzpicture}
\caption[ The reduction from \cref{thm:as-storage-PosMP,def:encode-sc}.]{
\label{fig:storage-PosMP}
\label{fig:encode-sc}
The reduction from \cref{thm:as-storage-PosMP} (middle to left)
makes sure that controller can always trade
energy values for visiting an accepting state.\\
The reduction from \cref{def:encode-sc} (middle to right)
makes sure that in each state $q$ the controller
can interrupt and go to a winning sink
if the accumulated energy level exceeds
the minimal value $k_q$ necessary to almost-surely win the storage-parity objective from $q$.
} 
\end{figure}
The next lemma summarises the relevant properties of $\sys{M}'$.
It follows directly from \cref{def:encode-sc} and the observation that $\ES{k,s}\subseteq \EN{k}$ holds for all $k,s\in\N$.

\begin{lemma}
    \label{lem:ff-ext-claims}
    For every MDP $\sys{M}$, state $q$ and $k \leq s\in\N$,
    \begin{enumerate}
        \item $\EN{k}\cap\Parity$ holds almost-surely in $\sys{M}$ if, and only if,
        it holds almost-surely in $\sys{M'}$.
        \item If $\ES{k,s}\cap\Parity{}$ holds almost-surely in $\sys{M}$
        then $\ES{k,s}\cap\Parity{}\cap\PosMP$ holds almost-surely in $\sys{M}'$.
    \end{enumerate}
\end{lemma}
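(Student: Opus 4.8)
The plan is to prove both parts by explicit strategy translations between $\sys{M}$ and $\sys{M}'$. The only substantive ingredient is the feature of \cref{def:encode-sc} that the escape edge $q'\to w$ has cost exactly $-k_q$, so it can only be taken once at least $k_q$ energy has accumulated, which is precisely the amount needed to fall back on an almost-sure storage-parity strategy from $q$; in addition, a visit to $w$ makes $0$ the only colour seen infinitely often. Throughout I read ``holds in $\sys{M}'$'' as ``holds from the entry copy $q'$ with the same initial energy $k$''. For the forward implication of part~(1): given an almost-sure winning strategy $\sigma$ for $\EN{k}\cap\Parity{}$ in $\sys{M}$ from $q$, lift it to a strategy in $\sys{M}'$ that moves $q'\to q$, never takes an edge into $w$, traverses every intermediate copy $r'$ by its unique cost-$0$ edge to $r$, and otherwise mimics $\sigma$ on the unprimed copies. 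Since the inserted moves cost $0$ and the copies $r,r'$ both carry the colour of $r$, runs of the lifted strategy reproduce exactly the energy profile and the $\liminf$-colour of the corresponding $\sigma$-runs, so they almost-surely satisfy $\EN{k}\cap\Parity{}$.

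For the backward implication of part~(1), fix an almost-sure winning strategy $\sigma'$ in $\sys{M}'$. Almost every $\sigma'$-run either stays forever inside the $\sys{M}$-part, in which case---deleting the primed copies, which add cost $0$ and no colours---it projects to a winning $\sys{M}$-run from $q$; or it eventually enters the sink $w$, necessarily through an edge $r'\to w$ of cost $-k_r$, and since $\sigma'$ respects $\EN{k}$ past that step the energy entering $w$ is $\ge 0$, i.e.\ the energy at $r'$, equivalently at $r$, is at least $k_r$. By the definition of $k_r$ in \cref{def:encode-sc} there are $s_r\in\N$ and a strategy $\tau_r$ in $\sys{M}$ that almost-surely satisfies $\ES{k_r,s_r}\cap\Parity{}$, hence, using $\ES{k_r,s_r}\subseteq\EN{k_r}$, also $\EN{k_r}\cap\Parity{}$, from $r$. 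Define $\tau$ in $\sys{M}$ to mimic $\sigma'$ (skipping the deterministic cost-$0$ steps) until $\sigma'$ dictates a move to $w$, and from then on to play $\tau_r$ from the current state $r$. Before any switch the energy stays $\ge 0$ because $\sigma'$ respects $\EN{k}$; after a switch, from residual energy $e\ge k_r$, it stays $\ge e-k_r\ge 0$ because $\tau_r$ respects $\EN{k_r}$; and $\Parity{}$, being a tail property, is inherited from $\tau_r$, or from $\sigma'$ on the runs that never reach $w$. A standard strategy-composition argument then shows that $\tau$ almost-surely wins $\EN{k}\cap\Parity{}$ from $q$.

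Part~(2) is immediate from \cref{def:encode-sc}. If $\ES{k,s}\cap\Parity{}$ holds almost-surely from $q$ in $\sys{M}$ with $k\le s$, then $k$ witnesses the property defining $k_q$, so $k_q\le k$; in $\sys{M}'$ one then wins \emph{surely} from $q'$ with energy $k$ by taking the edge $q'\to w$ (cost $-k_q$), arriving at $w$ with energy $k-k_q\ge 0$, and staying on its positive self-loop forever. This run meets $\EN{k}$ (the energy never drops below $0$), the $s$-storage condition (its only non-constant infix is the single step $q'\to w$, which drops $k_q\le k\le s$), $\Parity{}$ (the colour $0$ of $w$ is the only one seen infinitely often), and $\PosMP$ (the self-loop has positive cost); in particular $\ES{k,s}\cap\Parity{}\cap\PosMP$ holds almost-surely.

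I expect the only point needing care to be the backward direction of part~(1): one must observe that the cost of the escape edge forces enough accumulated energy to invoke an almost-sure storage-parity strategy, and then verify that composing $\sigma'$ with the family $(\tau_r)_r$ yields a well-defined strategy whose runs win almost-surely (a routine application of the strong Markov property). Everything else is bookkeeping that stays transparent because the added intermediate moves cost $0$ and preserve colours.
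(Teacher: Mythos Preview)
Your proposal is correct and follows the same line the paper has in mind; the paper itself gives no detailed argument but simply states that the lemma ``follows directly from \cref{def:encode-sc} and the observation that $\ES{k,s}\subseteq \EN{k}$,'' and your proof is precisely the unpacking of this: strategy lifting via the cost-$0$ bypass for the forward directions, and for the backward direction of part~(1) the key observation that taking $r'\to w$ safely forces accumulated energy $\ge k_r$, at which point one can splice in an almost-sure $\ES{k_r,s_r}\cap\Parity$ strategy (using $\ES{k_r,s_r}\subseteq\EN{k_r}$). Your explicit choice to read ``state $q$ in $\sys{M}'$'' as the entry copy $q'$ is the right disambiguation and is exactly how the paper uses the lemma in the proof of \cref{lem:mainfix}.
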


For a set $S\subseteq V'$ of states, we write $\sys{M}'|S$ for the restriction of $\sys{M}'$ to states in $S$,
i.e.\ the result of removing all states not in $S$ and their respective edges.
\newcommand{\safe}[1]{\text{safe}(#1)}
\begin{definition}
    \label{def:fix:refinement}
We define $R\subseteq V'$ as the largest set of states such that, in $\sys{M}'|R$, every state
\begin{enumerate}
    \item almost-surely satisfies the $\PosMP\cap\Parity$ objective, and
    \item almost-surely satisfies the $\PosMP\cap\mathsf{ST}$ objective.
\end{enumerate}
For every state $q\in V'$ let $\safe{q}\in\N$ be the minimal number $k$ such that
$\PosMP\cap\ES{k}$ holds almost-surely in $\sys{M}'|R$
and $\safe{q}\eqdef\infty$ if no such number exists (in case $q\notin R$).
\end{definition}

The relevance of these numbers for us is, intuitively, that if $\safe{q}$ is finite,
then there exists a pair of strategies, one for the $\PosMP\cap\Parity$ and one for the $\Bail$ objective,
between which the controller can switch as often as she wants.

\begin{lemma}
    \label{lem:safe-computable}
    For a given $\sys{M}'$,
    the values $\safe{q}$ are either $\infty$ or
    bounded by ${\mathcal O}(\card{V}\cdot W)$, computable in pseudo-polynomial time
    $\widetilde {\mathcal O}(d\cdot|V|^{4.5} \cdot (|\lambda| + |\cost{}|)^2 +
    \card{E}\cdot d\cdot \card{V}^5\cdot W)$,
    and verifiable in $\NP\cap\coNP$.
\end{lemma}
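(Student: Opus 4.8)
\textbf{Proof plan for \cref{lem:safe-computable}.}
The plan is to combine the three algorithmic results collected in \cref{fig:combinations} into a single fixpoint computation for the set $R$ and then a threshold computation for $\safe{\cdot}$. First I would establish that $R$ can be computed by a straightforward greatest-fixpoint iteration: start with $R_0 \eqdef V'$ and repeatedly remove every state that does not almost-surely satisfy $\PosMP\cap\Parity$ in $\sys{M}'|R_i$, and every state that does not almost-surely satisfy $\PosMP\cap\mathsf{ST}$ in $\sys{M}'|R_i$, until a fixpoint $R$ is reached. Removing states is monotone, so this terminates after at most $|V'| = \mathcal{O}(|V|)$ rounds. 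Each round requires one call to the almost-sure mean-payoff-parity solver (\cref{thm:as-Par-PosMP}) and one call to the almost-sure storage-mean-payoff solver (\cref{thm:as-ES-PosMP}); the former costs $\widetilde{\mathcal O}(d\cdot|V|^{3.5}(|\lambda|+|\cost{}|)^2)$ and the latter $\mathcal{O}(\card{E}\cdot d\cdot\card{V}^4\cdot W)$. Multiplying by the $\mathcal{O}(|V|)$ rounds gives exactly the claimed bound $\widetilde{\mathcal O}(d\cdot|V|^{4.5}(|\lambda|+|\cost{}|)^2 + \card{E}\cdot d\cdot\card{V}^5\cdot W)$.

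Next I would argue the bound on $\safe{q}$ itself. Once $R$ is fixed, $\safe{q}$ is by definition the minimal $k$ with $\PosMP\cap\ES{k}$ holding almost-surely from $q$ in $\sys{M}'|R$, or $\infty$ if $q\notin R$. By part (2) of \cref{thm:as-ES-PosMP}, whenever such a finite $k$ exists the minimal one is bounded by $\mathcal{O}(\card{V}\cdot W)$ (the same bound applies since $\sys{M}'|R$ has no more states or larger costs than $\sys{M}'$, which is only linearly larger than $\sys{M}$ with unchanged cost magnitudes). Computing the exact minimal $\safe{q}$ then follows from running the storage-mean-payoff algorithm of \cref{thm:as-ES-PosMP} on $\sys{M}'|R$; that algorithm already returns the minimal sufficient $k,s$ values for all states, so a single invocation suffices, again within the stated time bound.

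Finally, for the $\NP\cap\coNP$ verifiability claim I would give short certificates. Membership of a state in $R$ (equivalently, $\safe{q}<\infty$) is witnessed by exhibiting the candidate set $R$ together with, for each state of $R$, the memoryless-modulo-bounded-energy witnessing strategies for $\PosMP\cap\Parity$ and $\PosMP\cap\mathsf{ST}$ whose correctness is checkable in polynomial time via the $\NP\cap\coNP$ guarantees of \cref{thm:as-Par-PosMP,thm:as-ES-PosMP}; maximality of $R$ is certified by showing each removed state genuinely fails one of the two conditions in $\sys{M}'|R$, again using the $\coNP$ side of those results. A claimed value $\safe{q}=k$ is verified by a witnessing $\PosMP\cap\ES{k}$ strategy in $\sys{M}'|R$ (for the upper bound) together with a refutation of $\PosMP\cap\ES{k-1}$ (for minimality). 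Since all the ingredient problems lie in $\NP\cap\coNP$ and $R$ has polynomial size, the whole verification is in $\NP\cap\coNP$.

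\textbf{Main obstacle.} I expect the delicate point to be the interplay between the greatest fixpoint defining $R$ and the fact that the two sub-objectives are evaluated \emph{on the shrinking subsystem} $\sys{M}'|R$ rather than on $\sys{M}'$: one must check that restricting to $R$ does not break the monotonicity needed for termination, and more importantly that a state surviving all rounds really does admit a \emph{single pair} of strategies (one for $\PosMP\cap\Parity$, one for $\PosMP\cap\mathsf{ST}$) that never leave $R$, so that the controller can freely alternate between them without falling out of the winning region. This is exactly the property that \cref{def:fix:refinement} is designed to guarantee, and verifying that the fixpoint characterisation captures it — rather than merely each condition in isolation — is where the real content of the lemma lies; the complexity bookkeeping is then routine bookkeeping over the cited algorithms.
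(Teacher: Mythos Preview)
Your computation of $R$ via greatest-fixpoint iteration, the per-round cost analysis, and the bound on $\safe{q}$ from \cref{thm:as-ES-PosMP}(2) are exactly what the paper does, and the arithmetic for the overall time bound is correct.

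The one point where your plan deviates from the paper and becomes imprecise is the $\NP\cap\coNP$ certificate. You propose to guess only the final set $R$ and certify its \emph{maximality} by ``showing each removed state genuinely fails one of the two conditions in $\sys{M}'|R$''. But a removed state $q\notin R$ is not a state of $\sys{M}'|R$ at all, so this sentence has no clear meaning; and even if you meant $\sys{M}'|(R\cup\{q\})$, failure there does not rule out a strictly larger post-fixpoint $S\supsetneq R$ in which $q$ \emph{does} satisfy both conditions in $\sys{M}'|S$. The paper sidesteps this by guessing the entire refinement sequence $R_0\supset R_1\supset\dots\supset R_j=R$ and certifying, for every $i$, that $R_{i+1}$ is exactly the set of states of $R_i$ that pass both tests in $\sys{M}'|R_i$; all these checks can be bundled into a single instance by taking the disjoint union of the $\sys{M}'|R_i$. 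That is the missing idea in your certificate.

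As a side remark, your ``main obstacle'' paragraph (whether a single pair of strategies can be alternated without leaving $R$) is a concern for \cref{lem:mainfix}, not for \cref{lem:safe-computable}. The present lemma is purely about computing and certifying the numbers $\safe{q}$; the construction of the combined winning strategy that exploits them is deferred to the next lemma.
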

\begin{proof}
    Finite values $\safe{q}\in \N$ are clearly bounded
    by minimal sufficient values for almost-sure satisfiability of $\ES{k}\cap\PosMP$
    in $\sys{M'}$. Therefore, the claimed bound holds by definition of $\sys{M}'$
    and \cref{thm:as-storage-parity,thm:as-storage-PosMP}.

    The set $R$ is in fact the result of a refinement procedure that starts with all states
    of $\sys{M}'$.
    In each round, it removes states that fail either of the two conditions.
    For every projection $\sys{M}'|S$, checking Condition 1 takes
    $\widetilde {\mathcal O}(d\cdot|V|^{3.5} \cdot (|\lambda| + |\cost{}|)^2)$
    time by \cref{thm:as-Par-PosMP}
    and Condition 2 can be checked in
    $\mathcal{O}(\card{E}\cdot d\cdot \card{V}^4\cdot W)$
    time
    by \cref{thm:as-ES-PosMP}.
    All in all, this provides a pseudo-polynomial time algorithm to compute $R$.
    By another application of \cref{lem:bailout-to-parity}, we can compute the (pseudo-polynomially bounded)
    values $\safe{q}$.
    In order to verify candidates for values $\safe{q}$ in $\NP$, and also $\coNP$,
    one can guess a witness, the sequence of sets $R_0\supset R_1 \supset \ldots \supset R_j=R$,
    together with certificates for all $i\le j$ that $R_{i+1}$ is the correct set following $R_i$
    in the refinement procedure.
    This can be checked all at once by considering the disjoint union of all $\sys{M}'|R_i$.
\end{proof}

\begin{lemma}
    \label{lem:mainfix}
    For every $k\in\N$ and state $q$ in $\sys{M}'$,
    the energy-parity objective
    $\EN{k}\cap\Parity$ holds almost-surely from $q$ if, and only if, $\safe{q}\le k$.
\end{lemma}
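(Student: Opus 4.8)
The statement is an equivalence, and I would prove the two implications separately, keeping track of the precise energy level so that the minimal winning $k$ comes out exactly equal to $\safe q$.

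For the direction ``$\safe q \le k$ implies $\EN k\cap\Parity$ almost-surely from $q$'', the plan is to compose an infinite-memory strategy from the ingredients supplied by $q\in R$ and $\safe q\le k$: a \emph{bailout} strategy $\beta$ that almost-surely witnesses $\PosMP\cap\ES{\safe q}$ from $q$ in $\sys{M}'|R$ (so, since we start with energy $k\ge\safe q$, running $\beta$ keeps the energy $\ge 0$ and, by positive mean-payoff, drives it to $+\infty$ almost-surely), and, for every state $p\in R$, a \emph{progress} strategy $\alpha_p$ that almost-surely witnesses $\PosMP\cap\Parity$ from $p$ in $\sys{M}'|R$. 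Set $T:=\max_{p\in R}\safe p$. Starting at $q$ with energy $k$, the composed strategy runs in rounds: each round first plays a bailout phase (restarting the appropriate $\beta_p$ at the current state, which is legitimate since the current energy is $\ge T\ge\safe p$) until the energy first reaches a threshold $L$, and then plays a progress phase with the appropriate $\alpha_p$, continuing as long as the energy stays $\ge T+W$ and aborting the round the first time it drops below $T+W$ --- which, since all costs have absolute value at most $W$, leaves it in $[T,T+W)$, still $\ge T\ge 0$. I would then verify: (i) the energy never drops below $0$ on any consistent run; (ii) every bailout phase terminates almost-surely, because positive mean-payoff forces the energy to tend to $+\infty$; (iii) under each $\alpha_p$ the energy tends to $+\infty$ almost-surely, so its infimum is an almost-surely finite random variable, and hence --- choosing $L$ large enough, uniformly over the finitely many states of $R$ --- each progress phase survives forever (equivalently keeps the energy $\ge T+W$ forever and, since $\alpha_p$ satisfies parity almost-surely, also satisfies the parity condition) with probability at least some fixed $c>0$. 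The number of rounds is therefore dominated by a geometric random variable with parameter $c$, so almost every consistent run eventually enters a progress phase that is never aborted; together with (i) this yields $\EN k\cap\Parity$ almost-surely.

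For the converse, suppose $\sigma$ almost-surely achieves $\EN k\cap\Parity$ from $q$ in $\sys{M}'$. I would first normalise it to a strategy $\sigma^\star$ that, whenever a state $p'$ is reached with accumulated energy $\ge k_p$, immediately moves to the winning sink $w$; this is always safe and winning, so $\sigma^\star$ still wins $\EN k\cap\Parity$ almost-surely. The crux is the analysis of $\sys{M}'(\sigma^\star)$: almost every run either reaches $w$ or settles into an end component $C\ne\{w\}$, and I claim every such $C$ has $\min\parity(C)$ even and \emph{positive} mean-payoff. Parity is immediate; for the mean-payoff, staying in $C$ with energy $\ge 0$ forces long-run average cost $\ge 0$, and if it were $0$ then along the run the energy plus a bias function of the current state is a bounded-below martingale, hence converges, so the integer-valued energy is eventually constant, so eventually only zero-cost edges are used and a sub-cycle of $C$ is zero-cost and passes through an even colour; but looping that cycle is a storage-parity witness, so $k_s$ is finite for a state $s'$ on it that is visited with bounded energy, and then $\sigma^\star$ would already have escaped to $w$ --- a contradiction. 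From this I would conclude, using \cref{lem:ff-ext-claims}, that the part of $\sys{M}'$ reachable under $\sigma^\star$ lies inside $R$ (every reachable state almost-surely satisfies $\PosMP\cap\Parity$, via $w$ or via a positive-mean-payoff end component, and likewise $\PosMP\cap\mathsf{ST}$), so in particular $q\in R$; and that $\sigma^\star$, read inside $\sys{M}'|R$, witnesses $\PosMP\cap\ES k$ from $q$ (the energy stays $\ge 0$ throughout, giving $\ES k$ for a fixed storage bound extracted from the finitely many transient configurations together with the recurrent behaviour, and the mean-payoff is positive), whence $\safe q\le k$.

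The sufficiency direction is essentially bookkeeping once the round structure and the uniform threshold $L$ are in place. The hard part will be the necessity direction, and within it the end-component analysis: ruling out that an energy-parity-winning strategy settles forever into an end component with zero mean-payoff and unboundedly spiking energy (this is exactly where the winning sink $w$ and the martingale-convergence argument enter), and then arguing that after the normalisation the surviving behaviour fits \emph{precisely} inside the greatest fixpoint $R$ and uses energy at most $k$, so that the minimal winning level is exactly $\safe q$.
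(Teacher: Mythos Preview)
Your sufficiency direction ($\safe{q}\le k\Rightarrow$ almost-sure energy-parity) is the paper's own argument: alternate a bailout strategy for $\PosMP\cap\ES{k}$ with a progress strategy for $\PosMP\cap\Parity$, pump the energy high enough that each progress phase survives forever with a fixed positive probability, and bound the number of rounds geometrically.

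The necessity direction has a genuine gap. Your end-component analysis under the normalised $\sigma^\star$, and in particular the (super)martingale argument ruling out recurrent components of mean-payoff~$0$, is sound --- arguably sharper than the paper's one-line ``expected mean-payoff $0$ would imply storage''. But the concluding step fails: $\sigma^\star$ need not witness $\PosMP\cap\ES{k}$, and your ``likewise $\PosMP\cap\mathsf{ST}$'' is unjustified. Take the paper's coB\"uchi counterexample from \cref{sec:bug}. There $k_p=\infty$ for every state $p$ (any strategy that almost-surely satisfies the coB\"uchi condition must eventually stay in the biased $\{A,C,D\}$ walk, whose drawdowns are almost-surely unbounded), so your normalisation is vacuous and $\sigma^\star$ is the original infinite-memory winning strategy. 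Under it the drawdown process is a positive-recurrent chain on $\N$ that visits every level infinitely often, so $\ES{k,s}$ fails almost-surely for every fixed $s$; there are not ``finitely many transient configurations''. Nor does landing in a positive-mean-payoff end component give you $\mathsf{ST}$ for free --- positive drift does not bound drawdowns. The paper does not try to read a bailout witness off $\sigma^\star$; it \emph{constructs} one by interleaving the per-state winning strategies $\sigma_r$, switching whenever the observed energy balance since the last switch strictly exceeds $h_r-h_q$, after a bounded delay, or by escaping to $w$ when enough energy is available. It is precisely this switching discipline that manufactures a uniform storage bound and positive mean-payoff. You will need some construction of that kind; the normalised strategy alone cannot carry the conclusion.
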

\begin{proof}
    {\bf ($\implies$)}. 
    First observe that the winning sink $w$ in $\sys{M}'$ is contained in $R$,
    and has $\safe{w}=0$ since the
    only strategy from that state satisfies
    $\ES{0,0}\cap\Parity{}\cap\PosMP$.

    For all other states there are two cases: either there is an $s\in\N$ such that
    $\ES{k,s}\cap\Parity{}$ holds almost-surely, or there is no such $s$.
    If there is, then the strategy that goes to the sink guarantees the objective
    $\ES{k,s}\cap\Parity{}\cap\PosMP$, which implies the claim.

    For the second case (there is no $s$ such that $\ES{k,s}\cap\Parity{}$ holds almost-surely)
    we see that every almost-surely winning strategy
    for $\EN{k}\cap\Parity{}$
    must also almost-surely satisfy $\PosMP$.
    To see this, note that the energy condition implies a non-negative expected mean-payoff,
    and that an expected mean-payoff of $0$
    would imply that the storage condition
    $\ES{k,s}$ is satisfied for some $s$,
    which contradicts our assumption.
    Consequently the $\PosMP\cap\Parity{}$ objective holds almost-surely.

    We now show that the $\ES{k,s}\cap\PosMP$ objective holds almost-surely in state $q$, where $s>\safe{r}$ for all states $r$ with $\safe{r} < \infty$. 
    We now define a strategy that achieves $\ES{k,s}\cap\PosMP$.
    For this, we first fix a strategy $\sigma_q$ that achieves $\EN{h_q} \cap \Parity{}$ with $h_q=\safe{q}$ for each state $q$ with $\safe{q} < \infty$.
    
    When starting in $q$, we follow $\sigma_q$ until one of the following three events happen.
    We have (1) sufficient energy to move to the winning sink $w$. In this case we do so.
    Otherwise, if we (2) have reached a state $r$ and since starting to follow $\sigma_q$, the energy balance is strictly greater than
    \footnote{Note that the energy balance can never be strictly smaller than $h_r - h_q$ in such a case, as there would not be a safe continuation from $r$ otherwise.} $h_r - h_q$.
    Then we abandon $\sigma_q$ and follow $\sigma_r$ as if we were starting the game.
    
    Before we turn to the third event, we observe that, for each strategy $\sigma_q$, there is a minimal distance
    \footnote{If there is no point where (2) is met, the energy balance on state $r$ is always exactly $h_r - h_q$, such that $\sigma_q$ satisfies $\ES{h_q,s}$, and (1) is satisfied immediately.}
    $d_q \in \mathbb N$ to (1) or (2) and a positive probability $p_q > 0$ that either event is reached in $d_q$ steps.
    The third event is now simply that (3) $d_q$ steps have lapsed.
    When in state $r$ we then also continue with $\sigma_r$ as if we were starting the game.
    
    It is obvious that no path has negative mean payoff.
    Moreover, as long as the game does not proceed to the winning sink, a partial run starting at a state $q$ and ending at a state $r$ has energy balance $\geq h_r - h_q$, such that the resulting strategy surely satisfies $\mathsf{ST}$.
    The expected mean payoff is $\ge p_q/d_q$, and $\PosMP$ is obviously satisfied almost-surely.
    Consequently, $\ES{h_q,s}\cap\PosMP$ holds almost-surely from $q$.

    We conclude that every state for which the $\EN{k}\cap\Parity{}$ objective holds almost-surely
    must satisfy both criteria of \cref{def:fix:refinement} and thus be a member of $R$.
    Since almost-sure winning strategies cannot leave the respective winning sets, this means that
    every winning strategy for the above objective also applies in $\sys{M'}|R$ and thus justifies
    that $\safe{q}\le k$.

    {\bf ($\impliedby$)}.
    By definition of $R$, there are two finite memory strategies $\sigma$ and $\beta$ which
    almost-surely satisfy the $\PosMP\cap\Parity$, and the bailout objective $\PosMP\cap\ES{k}$, respectively,
    from every state $q$ with $\safe{q}\le k$. Moreover, those strategies will never visit any state
    outside of $R$. 
    
    We start with the bailout strategy $\beta$ and run it until the energy level is high enough (see below).
    We then turn to $\sigma$ and follow it until (if ever) it \emph{could} happen in the next step that a state $q$ is reached while the energy level falls below $\safe{q}$. We then switch back to $\beta$.
    
    The ``high enough'' can be achieved by collecting enough energy that there is a positive probability that one does not change back from $\sigma$ to $\beta$. For this, we can start with a sufficient energy level $e$ such that $\sigma$ never hits an energy $\leq 0$ with a positive probability
\footnote{We argue in \Cref{lem:MPge0} that, with sufficient energy, this probability can be moved arbitrarily close to $1$.}. The sum $e+s+W$ consisting of this energy, the sufficient storage level for $\PosMP \cap \ES{k}$, and the maximal change $W$ of the energy level obtained in a single step suffices.
    
    The constructed strategy then almost-surely satisfies the $\EN{k_q}\cap\PosMP\cap\Parity$ objective
    from every state $q$ and $k_q\eqdef\safe{q}$.
    In particular, this ensures that the $k$-energy-parity objective holds almost-surely from $q$
    in $\sys{M}'|R$ and therefore also in $\sys{M}'$.
\end{proof}

\begin{proof}[Proof of \cref{thm:as-energy-parity}]
(1) The fact that infinite memory is necessary follows
from our counterexample to \cref{claim}, and the observation that
every finite memory winning strategy for the $\Parity$ objective must also be colour-committing.

For parts (2) and (3), it suffices, by \cref{lem:ff-ext-claims}(1) and \cref{lem:mainfix},
to construct $\sys{M}'$ and compute the values $\safe{q}$ for every state $q$ of $\sys{M}'$.
The claims then follow from \cref{lem:safe-computable}.
\qedhere
\end{proof}

\section{Limit Values}
\label{sec:limit}
Since $\EN{k}\subseteq\EN{k+1}$ holds for all $k\in\N$, the chance of satisfying the $k$-energy-parity objective
depends (in a monotone fashion) on the initial energy level: for every state $p$ we have that
$
\Val[\sys{M}]{p}{\EN{k}\cap\Parity{}} \le \Val[\sys{M}]{p}{\EN{k+1}\cap\Parity{}}.
$
We can therefore consider
the respective \emph{limit values} as the limits of these values for growing $k$:
\begin{equation}
    \LValP[\sys{M}]{p} \quad \eqdef \quad \sup_k \Val[\sys{M}]{p}{\EN{k}\cap\Parity}.
\end{equation}
Note that this is \emph{not} the same as the value of $\Parity$ alone.
For instance, the state $l$ from \cref{ex:lval} has limit value $\LValP{l}=0
\neq \Val{l}{\Parity{}}=1$.

The states $r$ and $s$ from \cref{ex:lval} have $\LValP{r}{=}1$ and $\LValP{s}{=}1$.
In fact, for any $\sys{M},w,k$ and parity objective $\Parity$ it holds that
$\Val[\sys{M}]{p}{\EN{k}\cap\Parity} ~\le~ \LValP[\sys{M}]{p} ~\le~ \Val[\sys{M}]{p}{\Parity}$.

\medskip
Limit values are an important ingredient in our proof of \cref{thm:main}.
This is due to the following property, which directly follows from the definition.
\begin{lemma}
    \label{lem:value-one-states}
    Let $\sys{M}$ be an MDP and $p$ be a state with $\LValP[\sys{M}]{p} = 1$.
    Then, for all $\eps>0$, there exist a $k\in\N$ and a strategy $\sigma$ such that
    $\Prob[\sys{M}(\sigma)]{p}(\EN{k}\cap\Parity{}) \ge 1- \eps$.
\end{lemma}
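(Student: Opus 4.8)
The plan is to simply unfold the two nested suprema that appear in the definition of the limit value. Recall that $\LValP[\sys{M}]{p} \eqdef \sup_k \Val[\sys{M}]{p}{\EN{k}\cap\Parity{}}$ and that, by the definition of the value, $\Val[\sys{M}]{p}{\EN{k}\cap\Parity{}} = \sup_\sigma \Prob[\sys{M}(\sigma)]{p}(\EN{k}\cap\Parity{})$. Hence $\LValP[\sys{M}]{p}$ is exactly the supremum of $\Prob[\sys{M}(\sigma)]{p}(\EN{k}\cap\Parity{})$ taken over all pairs $(k,\sigma)$ consisting of an energy budget $k\in\N$ and a strategy $\sigma$, and the lemma is essentially a restatement of what it means for this supremum to equal $1$.

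Concretely, I would fix $\eps>0$. Since $\LValP[\sys{M}]{p}=1$, the assumption $\sup_k \Val[\sys{M}]{p}{\EN{k}\cap\Parity{}}=1$ yields, by the definition of a supremum, some $k\in\N$ with $\Val[\sys{M}]{p}{\EN{k}\cap\Parity{}} > 1-\eps$ (one may even demand $> 1-\eps/2$ to keep slack). Unfolding this as $\sup_\sigma \Prob[\sys{M}(\sigma)]{p}(\EN{k}\cap\Parity{}) > 1-\eps$ and applying the definition of a supremum a second time produces a strategy $\sigma$ with $\Prob[\sys{M}(\sigma)]{p}(\EN{k}\cap\Parity{}) \ge 1-\eps$. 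This pair $(k,\sigma)$ is the required witness.

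There is no real obstacle here. The only points worth stating carefully are the order in which the two suprema are resolved --- first pick the budget $k$, then a strategy $\sigma$ tailored to that fixed budget --- and a trivial $\eps$-bookkeeping step to pass from two strict inequalities to the single non-strict inequality claimed in the statement. Note that the monotonicity $\EN{k}\subseteq\EN{k+1}$ is what makes $\sup_k$ a genuine limit, but it is not even needed for this particular lemma, which is why the result ``directly follows from the definition''.
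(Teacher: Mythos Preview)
Your proposal is correct and matches the paper's treatment exactly: the paper does not give a separate proof but simply states that the lemma ``directly follows from the definition,'' which is precisely the unfolding of the two nested suprema that you carry out.
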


We now show how to compute limit values,
based on the following two sets.
\begin{align*}
    A &\quad\eqdef\quad \{p\in Q\mid \exists k \exists \sigma~\Prob[\sys{M}(\sigma)]{p}(\EN{k}\cap\Parity{})=1\}\\
    B &\quad\eqdef\quad \{p\in Q\mid \exists \sigma~\Prob[\sys{M}(\sigma)]{p}(\PosMP\cap\Parity{})=1\}
\end{align*}

The first set, $A$, contains those states that satisfy the $k$-energy-parity condition
almost-surely for some energy level $k\in\N$.
The second set, $B$, contains those states that almost-surely satisfy the combined
positive mean-payoff-parity objective.
Our argument for computability of the limit values is based on the following theorem,
which claims that limit values correspond to the values of a reachability objective
with target $A\cup B$. Formally,

\begin{theorem}
    \label{thm:LVAL-reach}
    For every MDP $\sys{M}$ and state $p$,
    $\LValP[\sys{M}]{p} = \sup_{\sigma} \Prob[\sys{M}(\sigma)]{p}(\eventually (A\cup B))$.
\end{theorem}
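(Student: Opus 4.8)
\textbf{Proof plan for \cref{thm:LVAL-reach}.}
The plan is to prove the two inequalities separately, treating $\sup_{\sigma} \Prob[\sys{M}(\sigma)]{p}(\eventually (A\cup B))$ as the quantity to match.

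\emph{The easy direction ($\ge$).} First I would show $\LValP[\sys{M}]{p} \le \sup_{\sigma} \Prob[\sys{M}(\sigma)]{p}(\eventually (A\cup B))$. Fix $\eps>0$; by \cref{lem:value-one-states}-style reasoning (really just the definition of $\LValP{}$), pick $k$ and $\sigma$ with $\Prob[\sys{M}(\sigma)]{p}(\EN{k}\cap\Parity{}) \ge \LValP{p} - \eps$. The key observation is that almost every run that satisfies $\EN{k}\cap\Parity{}$ eventually enters an end component, and on that end component both the energy condition and the parity condition are met almost-surely from some reachable state; hence that state lies in $A$ (it satisfies $k'$-energy-parity almost-surely for the energy level accumulated so far, which is at most $k$ plus the maximal drift, i.e.\ some finite $k'$). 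Actually the cleanest formulation: decompose $\Runs[\sys{M}(\sigma)]{p}$ by the maximal end component eventually reached; on runs satisfying $\EN{k}\cap\Parity{}$, this end component together with the entry state witnesses membership in $A$. So $\EN{k}\cap\Parity{} \subseteq \eventually(A\cup B)$ up to a null set under $\sigma$, giving the bound.

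\emph{The hard direction ($\le$).} For the reverse, given a strategy $\sigma$ that reaches $A\cup B$ with probability close to $1$, I need to build, for each $\eps$, a strategy and an energy level $k$ achieving $\EN{k}\cap\Parity{}$ with probability close to that. The idea is a two-phase construction: run $\sigma$ with a large initial energy budget until $A\cup B$ is hit; then switch. If we land in $A$, we switch to the memoryless-ish finite-energy witness strategy from \cref{thm:storage}/\cref{thm:as-energy-parity} — but we must have enough residual energy, which is why we need the \emph{initial} energy level to be large: the reaching phase has bounded length only with high (not full) probability, so we overshoot the budget by enough that with probability $\ge 1-\eps$ the energy stays nonnegative throughout the reaching phase and still exceeds the threshold needed by the $A$-strategy. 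If we land in $B$, we switch to the $\PosMP\cap\Parity{}$ strategy; here \cref{lem:MPge0}-type reasoning (a positive expected mean-payoff with bounded steps, so with high initial energy the random walk stays positive forever with probability close to $1$) lets us maintain $\EN{k}\cap\Parity{}$ with probability $\ge 1-\eps$ for $k$ large enough. The main obstacle is handling the ``reaching phase'' rigorously: $\sigma$ may take arbitrarily long to reach $A\cup B$, and while doing so energy can drop; I would truncate at a time horizon $N$ chosen so that $\Prob[\sys{M}(\sigma)]{p}(\eventually_{\le N}(A\cup B))$ is within $\eps$ of the sup, then choose the initial energy $k \ge N\cdot W +$ (the thresholds required by the phase-two strategies). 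A union bound over the at most $\eps$-probability failure of each phase closes the argument.

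\emph{Assembly.} Combining the two inequalities gives equality; and since $A$ and $B$ are computable (Theorems~\ref{thm:as-storage-parity} and~\ref{thm:as-Par-PosMP} respectively give the ingredients, modulo that $A$ is the almost-sure energy-parity set computed via \cref{thm:as-energy-parity}) and reachability values in MDPs are computable in polynomial time, this yields computability of all $\LValP{p}$, which is what \cref{sec:limit} needs downstream. I would remark that the subtlety is entirely in the ``$\le$'' direction and specifically in the interplay between the finite but unbounded reaching time and the fixed global energy level — this is exactly the place where the monotonicity $\EN{k}\subseteq\EN{k+1}$ is used to absorb the overshoot.
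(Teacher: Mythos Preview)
Your two-phase construction for the inequality $\LValP[\sys{M}]{p} \ge \sup_{\sigma} \Prob[\sys{M}(\sigma)]{p}(\eventually (A\cup B))$ (what you label the ``hard'' direction) is correct and essentially matches the paper's \cref{lem:LVAL-reach-up}. However, you have the difficulty of the two directions reversed, and your argument for the inequality $\LValP[\sys{M}]{p} \le \sup_{\sigma} \Prob[\sys{M}(\sigma)]{p}(\eventually (A\cup B))$ contains a genuine gap.

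The gap is the step ``on that end component both the energy condition and the parity condition are met almost-surely from some reachable state; hence that state lies in $A$''. Suppose under $\sigma$ a positive measure of runs satisfy $\EN{k}\cap\Parity{}$ and have infinity set equal to an end component $C$. You may conclude that the minimal priority in $C$ is even, and that \emph{some} strategy achieves the energy condition with \emph{positive} probability while staying in $C$. Neither of these yields a state of $C$ from which energy-parity holds \emph{almost-surely}; the strategy $\sigma$ itself need not witness this, and there is no automatic reason for the entry state to lie in $A$. The paper handles exactly this step via \cref{def:non-losing,lem:non-losing}, with a case distinction on the mean-payoff value of $C$. If it is positive, then $C\subseteq B$ by \cref{lem:MPge0} (so the correct target is $A\cup B$, not just $A$ as you write). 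If it is zero, one needs a bias-function analysis: either $C$ contains an invariant subcomponent with even minimal priority (which then lies in $A$), or every strategy that stays in $C$ almost-surely violates the energy condition via a zero-drift random-walk argument---contradicting the assumption that $C$ arose from runs satisfying $\EN{k}$. This zero-mean-payoff case is the actual technical content of the $\le$ direction and is entirely missing from your sketch.
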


Before we prove this claim by \cref{lem:LVAL-reach-up,lem:LVAL-reach-low} in the remainder of this section, we remark that we can compute $A \cup B$ without constructing $A$.
Let us consider the set
\[
    A' \quad\eqdef\quad \{p\in Q\mid \exists k \exists \sigma~\Prob[\sys{M}(\sigma)]{p}(\ES{k}\cap\Parity{})=1\}\ ,
\]
and observe that $A' \subseteq A$ holds by definition and that the construction of $A$ from \cref{thm:as-energy-parity} establishes $A \subseteq A' \cup B$.
Thus, $A \cup B = A' \cup B$ holds, and it suffices to construct $A'$ and $B$, which is cheaper than constructing $A$ and $B$.

We now start with some notation.

\newcommand{\Att}[1]{\mathit{Att}(#1)}
\begin{definition}
For an MDP $\sys{M}\eqdef(\VC,\VP, E, \prob)$,
the \emph{attractor} of a set $X\subseteq V$ of states
is the set
$\Att{X} \eqdef \{q \mid \exists \sigma~\Prob[\sys{M}(\sigma)]{q}(\eventually X)=1\}$
of states that almost-surely satisfy the reachability objective with target $X$.
\end{definition}

\begin{definition}
For an MDP $\sys{M}\eqdef(\VC,\VP, E, \prob)$
an \emph{end-component} is a strongly connected set of states $C \subseteq V$
with the following closure properties:
\begin{itemize}
 \item for all controlled states $v \in C \cap V_C$, some successor $v'$ of $v$ is in $C$, and
 \item for all probabilistic states $v \in C \cap V_P$, all successors $v'$ of $v$ are in $C$.
\end{itemize}
Given $\cost{}$ and $\parity{}$ functions and $i\in\N$, we call an end-component
\begin{itemize}
    \item \emph{$i$ dominated}, if they contain a state $p$ with $\parity(p)=i$, but no state $q$ with $\parity(q)<i$,
    \item \emph{$i$ maximal}, if it is a maximal (w.r.t.~set inclusion) $i$ dominated end-component, and
    \item \emph{positive}, if its expected mean-payoff is strictly greater than $0$ (recall that the mean-payoff of all states in a strongly connected set of states of an MDP is equal).
\end{itemize}
\end{definition}
\begin{restatable}{lemma}{LemMPge}
\label{lem:MPge0}
The states of each positive $2i$-maximal end-component $C$ are contained in $B$.
\end{restatable}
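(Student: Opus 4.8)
The plan is to show $C\subseteq B$ by exhibiting, for each state $p\in C$, a single (infinite-memory) strategy $\sigma$ whose every run stays inside $C$ and which almost-surely satisfies $\PosMP\cap\Parity{}$; this immediately gives $p\in B$. Since $C$ is $2i$ dominated, fix a state $r\in C$ with $\parity(r)=2i$, and note that, as $C$ is an end-component, the induced sub-MDP $\sys{M}|C$ is sinkless and strongly connected. Inside $\sys{M}|C$ we fix two memoryless ingredients. First, an \emph{attractor strategy} $\alpha$ towards $r$, which from every state of $C$ reaches $r$ within $\card{C}$ steps with probability at least $p_{\min}^{\card{C}}>0$ (where $p_{\min}$ is the least positive transition probability), hence reaches $r$ almost surely, with the number of steps needed stochastically dominated by a scaled geometric random variable. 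Second, a memoryless strategy $\sigma_{\mathrm{MP}}$ that is optimal for the expected mean-payoff from every state of $C$; its value is the common number $v^{*}$, which is positive because $C$ is positive. A short argument then shows that every recurrent class of the finite Markov chain obtained by fixing $\sigma_{\mathrm{MP}}$ in $\sys{M}|C$ that is reachable from $r$ has mean-payoff exactly $v^{*}$ (each is at most $v^{*}$ by optimality, and the ones reachable from $r$ must average to $v^{*}$); consequently, started from $r$, the strategy $\sigma_{\mathrm{MP}}$ produces running reward averages that converge to $v^{*}$ almost surely.

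Now define $\sigma$ to operate in rounds $n=1,2,\dots$: in round $n$ it plays $\alpha$ until $r$ is visited and then plays $\sigma_{\mathrm{MP}}$ for exactly $\ell_n\eqdef n$ steps, before passing to round $n+1$. Both ingredients keep the play inside $C$, so no colour smaller than $2i$ is ever seen, while $r$ (colour $2i$) is visited in every round; hence the minimal colour seen infinitely often is exactly $2i$, so \emph{every} run of $\sys{M}(\sigma)$ from $p$ satisfies the parity condition, and in particular $\Parity{}$ holds almost surely. It remains to verify $\PosMP$ almost surely, which is the technical heart of the lemma.

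Write $T_n$ for the length of the first $n$ rounds and split $\sum_{j<T_n}\cost(s_j,s_{j+1})$ into the attractor-phase and the $\sigma_{\mathrm{MP}}$-phase contributions. The attractor-phase lengths $\Lambda_k$ of distinct rounds are stochastically dominated by i.i.d.\ geometric variables, so $\sum_{k\le n}\Lambda_k=O(n)$ almost surely by the strong law; since every reward is $\ge -W$, the attractor phases subtract at most $O(n)$ from the accumulated reward. Each $\sigma_{\mathrm{MP}}$ phase starts afresh from $r$, so the corresponding reward sums $Y_1,Y_2,\dots$ are independent, with $\mathbb{E}Y_k=v^{*}\ell_k+O(1)$ and $\operatorname{Var}Y_k=O(\ell_k)$; Kolmogorov's strong law then gives $\sum_{k\le n}Y_k = v^{*}\sum_{k\le n}\ell_k+o\bigl(\sum_{k\le n}\ell_k\bigr)$ almost surely. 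Since $\sum_{k\le n}\ell_k=\Theta(n^{2})$ dominates both the $O(n)$ attractor contribution and the gap $T_n-\sum_{k\le n}\ell_k=\sum_{k\le n}\Lambda_k=O(n)$, the running average at the round boundaries $T_n$ tends to $v^{*}$ almost surely; and inside round $n+1$ the running sum can decrease by at most $W(\Lambda_{n+1}+\ell_{n+1})=o(T_n)$ while the elapsed time only grows, so the $\liminf$ of the running average over \emph{all} times is $\ge v^{*}>0$ almost surely. Thus $\PosMP\cap\Parity{}$ holds almost surely from $p$, so $p\in B$. The main obstacle is exactly this last step: one must choose the round lengths so that the merely-almost-surely-finite attractor excursions are negligible against the accumulated mean-payoff surplus, and one must control the running average at all times, not just at the round boundaries.
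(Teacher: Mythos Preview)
Your proof is correct and shares with the paper the core idea: inside the end-component $C$, interleave an optimal mean-payoff strategy with a reachability strategy towards a fixed state of colour $2i$, so that parity is satisfied by the recurring visits and the positive drift survives because the reachability excursions are comparatively rare. The differences are in the scheduling and in the analysis.

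The paper uses a \emph{fixed} period: follow the reachability strategy for at most $|C|$ steps, then the mean-payoff strategy for a fixed $n$ steps, and repeat. It then argues that the expected long-run average of the resulting \emph{finite} Markov chain tends to $v^{*}$ as $n\to\infty$, so some finite $n$ already gives a strictly positive gain; the ergodic theorem for finite chains then yields $\PosMP$ almost surely. Alternatively, the paper offers a randomised memoryless mix (follow the reachability strategy with small probability~$\varepsilon$). Either way, the resulting strategy has \emph{finite} description, which the paper later exploits: it feeds the induced finite chain into results on probabilistic one-counter automata to bound the ruin probability, and it relies on finite-memory witnesses in the proofs of \cref{thm:as-Par-PosMP} and of part~(2) of \cref{thm:ls-energy-parity}.

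Your scheduling with growing phase lengths $\ell_n=n$ instead yields an infinite-memory strategy and requires the explicit SLLN-style computation you give (Kolmogorov for the $Y_k$'s, stochastic domination of the $\Lambda_k$'s by i.i.d.\ geometrics, and the $o(T_n)$ control inside a round). This is entirely self-contained and avoids appealing to the ergodic structure of a fixed finite chain, which is a nice trade-off; but you lose the finite-memory conclusion, so your construction cannot be reused at the places where the paper needs a finite-state witness. If you want both, simply observe that your argument also goes through with $\ell_n\equiv N$ for a sufficiently large constant~$N$, which recovers the paper's finite-memory strategy.
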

\begin{proof}(sketch)
We consider a strategy $\sigma$ that follows the optimal (w.r.t. the mean-payoff) strategy most of the time and ``moves to'' a fixed state p with the minimal even parity 2i only sparsely. Such a strategy keeps the mean-payoff value positive while satisfying the parity condition. We show that $\sigma$ can be defined to use finite memory or no memory, but randomisation. Either way, $\sigma$ induces a probabilistic one-counter automata \cite{etessami2010quasi}, whose probability of ever decreasing the counter by some finite $k$ can be analysed,
based on the mean-payoff value, using the results in \cite{BKK/14}. 
\qedhere
\end{proof}

\begin{lemma}
    \label{lem:LVAL-reach-up}
    For every MDP $\sys{M}$ and state $p$,
    $\LValP[\sys{M}]{p} \ge \sup_{\sigma} \Prob[\sys{M}(\sigma)]{p}(\eventually (A\cup B))$.
\end{lemma}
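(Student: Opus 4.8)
The goal is to show $\LValP[\sys{M}]{p} \ge \sup_{\sigma} \Prob[\sys{M}(\sigma)]{p}(\eventually (A\cup B))$, i.e.\ that one can achieve limit value at least the probability of reaching $A\cup B$. The plan is to fix an $\eps>0$ and a strategy $\sigma$ that reaches $A\cup B$ with probability close to the supremum, then explain how to continue the play once $A\cup B$ is hit so that the $k$-energy-parity objective is met with high probability for a large enough $k$.

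First I would reduce to the two ``base cases'' corresponding to the two sets. Suppose a run reaches a state $q\in A$: by definition of $A$ there is some energy level $k_q$ and a strategy that almost-surely satisfies $\EN{k_q}\cap\Parity$ from $q$; so if the accumulated energy at the point of reaching $q$ is at least $k_q$ (which we can ensure by starting from a high enough initial level and only giving up a bounded amount of energy en route, or more carefully by using the monotonicity $\EN{k}\subseteq\EN{k+1}$ and taking $k$ large), then we can switch to that strategy and win outright on that branch. The harder base case is $q\in B$: here there is a strategy achieving $\PosMP\cap\Parity$ almost-surely, but this does not by itself guarantee the energy condition. The key input is \Cref{lem:MPge0}'s underlying analysis (the probabilistic one-counter / \cite{BKK/14} argument): a strategy with strictly positive mean-payoff, started with energy level $e$, keeps the energy non-negative forever with probability tending to $1$ as $e\to\infty$. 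So for any target failure probability $\delta$ there is an $e_\delta$ such that, from a state in $B$ with at least $e_\delta$ energy in hand, following the $\PosMP\cap\Parity$ strategy satisfies $\EN{e_\delta}\cap\Parity$ with probability $\ge 1-\delta$.

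Next I would assemble these pieces. Given $\eps>0$: (i) choose $\sigma$ with $\Prob[\sys{M}(\sigma)]{p}(\eventually(A\cup B)) \ge \sup - \eps/3$; since reachability is realised with probability approaching the supremum within finitely many steps, we may also assume the play reaches $A\cup B$ within some bounded horizon on all but an $\eps/3$-fraction of runs. Over that bounded horizon the energy drops by at most (horizon)$\cdot W$, a fixed amount. (ii) For each state $q\in A\cup B$ pick the constant $c_q$ needed for a successful continuation from $q$ with failure probability $\le \eps/3$: for $q\in A$ this is $k_q$; for $q\in B$ it is the $e_{\eps/3}$ from the positive-mean-payoff analysis above. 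Let $M=\max_q c_q$. (iii) Set $k = M + (\text{horizon})\cdot W$ and define the composed strategy: follow $\sigma$ until $A\cup B$ is first hit at some state $q$ (with accumulated energy $\ge k - (\text{horizon})\cdot W \ge c_q$), then switch to the appropriate continuation strategy for $q$. A union bound over the three failure events — not reaching $A\cup B$ in time, the bounded-horizon mishap is already folded in — gives $\Prob[\sys{M}(\text{composed})]{p}(\EN{k}\cap\Parity) \ge \sup - \eps$. Taking $\sup$ over $k$ and letting $\eps\to 0$ yields the claimed inequality.

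The main obstacle is the $B$-branch: one must be careful that ``almost-surely $\PosMP\cap\Parity$'' genuinely upgrades to ``with high probability $\EN{k}\cap\Parity$ for large $k$'', uniformly enough that a single $k$ works across all states of $B$ reached. This is exactly where the one-counter analysis behind \Cref{lem:MPge0} is invoked, and I would cite it (the footnote in \cref{lem:mainfix} already flags this), rather than reprove it here. A secondary subtlety is ensuring the handover energy bookkeeping is clean — that the $\eps/3$-mass of ``slow'' runs is separated out and that on the remaining runs the energy at handover is provably at least $c_q$; using monotonicity in $k$ lets us be generous and absorb any slack by enlarging $k$.
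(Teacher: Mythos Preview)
Your proposal is correct and follows essentially the same approach as the paper: pick a near-optimal (memoryless) reachability strategy for $A\cup B$, truncate to a bounded horizon, and upon hitting $A\cup B$ switch to a continuation strategy that achieves $\EN{k}\cap\Parity$ with probability $>1-\eps$, taking the initial energy large enough to absorb both the horizon's worst-case drain and the continuation's requirement. You are more explicit than the paper about the $B$-branch (the paper simply asserts that for every $q\in A\cup B$ a suitable $k$ exists); note, however, that a direct measure-theoretic argument already suffices there---$\PosMP$ almost surely forces $S_n\to+\infty$ and hence $\inf_n S_n>-\infty$ almost surely, so $\Prob(\EN{k})\to 1$ by continuity of measure---without invoking the one-counter machinery behind \cref{lem:MPge0}.
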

\begin{proof}
Assume w.l.o.g.\ that
$\pi \eqdef \sup_{\sigma} \Prob[\sys{M}(\sigma)]{p}(\eventually (A\cup B)) > 0$.
We show that $\LValP[\sys{M}]{p}$ is at least $\pi - 2\varepsilon$ for all $\varepsilon>0$ as follows.

We start by choosing $k\in\N$ big enough so that 
for every state $q\in A\cup B$, some strategy satisfies the $k$-energy-parity objective with probability $>1-\eps$.
We then consider a memoryless strategy (e.g.\ from solving the associated linear program), which guarantees that the
set $A\cup B$ is reached with likelihood $\pi$, and then determine a natural number $l$ such that it is
reached within $l$ steps with probability $>\pi - \eps$.
This reachability strategy $\sigma$ can now be combined with an $\eps$-optimal strategy for states in $A\cup B$:
until a state in $A\cup B$ is reached, the controller plays according to $\sigma$
and then swaps to a strategy that guarantees the $k$-energy-parity objective with likelihood $>(1-\eps)$.
Such a strategy exists by our assumption on $k$.
This combined strategy will satisfy the $\EN{k+l}$-energy-parity objective
with probability $> (\pi-\eps)(1-\eps) \ge \pi - 2\eps$.
\qedhere
\end{proof}

\begin{definition}\label{def:non-losing}
[Non-losing end-component]
We call an end-component non-losing, iff the smallest priority of states in the end-component is even and there is a strategy that allows to
\begin{enumerate}
 \item almost-surely stay within this end-component,
 \item almost-surely visit all states in the end-component, and
 \item satisfy the energy condition from some energy level with non-zero probability.
\end{enumerate}
\end{definition}
\begin{lemma}
\label{lem:non-losing}
Every non-losing end-component $I$ is contained in $\Att{A\cup B}$.
\end{lemma}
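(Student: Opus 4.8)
The plan is to show that from every state in a non-losing end-component $I$ one can almost-surely reach $A\cup B$, by exhibiting the required strategy and analysing the induced random walk on the energy level. Let $\sigma_0$ be the strategy promised by \cref{def:non-losing}: it stays in $I$, visits every state of $I$ infinitely often almost-surely, and satisfies $\EN{k}$ from \emph{some} starting energy level $k$ with probability $>0$. First I would observe that, since $\sigma_0$ keeps the play inside $I$ forever and the smallest priority occurring in $I$ is even, $\sigma_0$ satisfies the parity condition surely on all its runs. Hence, if the expected mean-payoff of the end-component $I$ (which is well defined, being strongly connected) is strictly positive, then $I$ is a positive $2i$-maximal-or-submaximal end-component, and we are essentially in the situation of \cref{lem:MPge0}: moving to a fixed minimal-even-priority state only sparsely still yields a strategy achieving $\PosMP\cap\Parity$ almost-surely, so $I\subseteq B\subseteq\Att{A\cup B}$. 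The mean-payoff cannot be negative: if it were, then under \emph{any} strategy staying in $I$ the energy level would tend to $-\infty$ almost-surely, contradicting clause~(3) of \cref{def:non-losing} that $\EN{k}$ is satisfied with non-zero probability from some $k$. So the only remaining case is mean-payoff exactly $0$.

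In the zero-mean-payoff case the key point is that the energy process under $\sigma_0$ is a martingale-like walk with bounded steps (bounded by $W$) on the strongly connected set $I$, and clause~(3) says it survives (stays $\ge 0$) with positive probability from some level $k$. I would argue that this forces the existence of a \emph{finite-memory, energy-bounded} way to stay in $I$: concretely, consider the finite-state MDP obtained by tracking the energy level capped at some large bound $N$; since survival from level $k$ has positive probability, there must be a bottom strongly connected component of this capped chain on which the energy never hits $0$, which is exactly an $s$-storage strategy for suitable $s$ while remaining in $I$ and hence satisfying parity. That witnesses $\ES{k',s}\cap\Parity$ almost-surely from the states of that sub-structure, i.e.\ those states lie in $A'\subseteq A$. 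It then remains to propagate this to all of $I$: by clause~(2), from any state of $I$ one can reach that sub-structure (or, after pumping the energy high enough using the positive-probability survival runs and the strong connectivity, reach a configuration from which a storage strategy applies) almost-surely, so every state of $I$ lies in $\Att{A}\subseteq\Att{A\cup B}$.

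The step I expect to be the main obstacle is the zero-mean-payoff case: extracting from the mere \emph{positive-probability} survival guarantee in clause~(3) an actual sub-end-component on which the energy stays bounded away from $0$ forever, i.e.\ turning a probabilistic survival statement into a combinatorial storage statement. The cleanest way I would handle this is to invoke the already-established machinery: a zero-mean-payoff strongly connected component in which energy survives with positive probability must, by a first-cycle / bounded-radius argument of the kind used for energy-B\"uchi games in \cite{CD2011} (and reused for \cref{thm:as-storage-parity}), admit an energy-bounded cycle, and attracting to it within $I$ is possible by clause~(2). I would be careful to note that ``visiting all states of $I$'' in clause~(2) is exactly what lets the attractor argument stay inside $I$ and not escape to some losing region, and that the bounded-step size $W$ is what makes the capped-energy chain finite and the extraction of the storage component legitimate.
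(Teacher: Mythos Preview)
Your case split on the sign of the mean-payoff of $I$ is exactly right, and the positive and negative cases match the paper's argument. The real content is the zero-mean-payoff case, and there your plan has a genuine gap.

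The capped-energy argument does not work as stated. Capping the energy at a finite bound $N$ makes survival \emph{harder}, not easier: once the uncapped walk exceeds $N$ you discard the surplus, so a later drop of size $>N$ kills the capped process even though the uncapped one survives. Hence ``positive survival in the uncapped walk'' does \emph{not} imply the existence of a bottom SCC of the capped chain avoiding level~$0$; in a zero-drift walk the energy may have to wander arbitrarily high before returning, so no finite $N$ is a priori safe. The first-cycle/bounded-radius reasoning from \cite{CD2011} you cite concerns \emph{almost-sure} energy-B\"uchi winning (where bounded memory is shown to suffice), not the extraction of a storage substructure from a mere positive-probability survival guarantee under an arbitrary (possibly infinite-memory) $\sigma_0$. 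There is also a smaller slip: staying inside $I$ with even minimal priority does not by itself make parity hold surely on all runs; you need the ``visit all states'' clause to get the minimal-priority state seen infinitely often, and only almost-surely.

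The paper handles the zero case with a different tool: the bias function $b:I\to\Z$ from the average-reward LP. Call a transition $(v,v')$ \emph{invariant} if $b(v)=\cost(v,v')+b(v')$; along invariant transitions the quantity ``energy $-\,b$'' is conserved, so on any invariant strongly connected subset $G\subseteq I$ the energy stays within a bounded window given by the range of $b$. The dichotomy is then: either some invariant $G\subseteq I$ has even minimal priority, in which case the bias values give an explicit storage bound and $G\subseteq A$, hence $I\subseteq\Att{A}$ by strong connectivity; or every invariant subset has odd minimal priority, and one derives a contradiction to clause~(3). For the contradiction, every run either (i) takes infinitely many non-invariant steps, and results on finitely inhomogeneous controlled random walks (as in \cite{durrett1991making,BBEKW10}) force the accumulated deviation below any threshold almost-surely, violating the energy condition; or (ii) eventually uses only invariant transitions, hence is trapped in an invariant set with odd minimal priority, violating parity. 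This bias-function route is what replaces your capping step and is the missing idea in your plan.
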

\begin{proof}
We start with a case distinction of the mean-payoff value of $I$. (Recall that, as an end-component in an MDP, all states of $I$ have the same mean-payoff values.)
If this value is positive and $2i$ is the lowest priority in $I$,
then $I$ is contained in some $2i$ maximal end-component and by \cref{lem:MPge0}, also in $B\subseteq\Att{A\cup B}$.
If this value is negative, then the third condition of \cref{def:non-losing} cannot be satisfied together with the first two. 
This leaves the case where the mean-payoff value is~$0$.

If the mean-payoff value is $0$, then there exists a bias function $b:I\to \Z$ that satisfies the following constraints:
\begin{itemize}
 \item $b(v) = \min\Big\{\mathsf{cost}\big((v,v')\big) + b(v') \mid v' \in I \wedge (v,v') \in E\Big\}$ holds for all controlled states $v \in I \cap V_C$,
\item
$b(v) = \sum\limits_{v' \in \{w \in I \mid (v,w) \in E\}}   
    \lambda(v)\big((v,v')\big) \cdot \Big(\mathsf{cost}\big((v,v')\big)  + b(v') \Big)$
for all probabilistic states $v \in I \cap V_P$.
\end{itemize}

When adjusting $b$ to $b'$ by adding the same constant to all valuations, $b'$ is obviously a bias function, too. 

We call a transition $(v,v')$ \emph{invariant} iff $b(v) = \mathsf{cost}\big((v,v')\big) + b(v')$ holds.
A set $G\subseteq V$ of states invariant if it is strongly connected and contains only controlled states with an invariant transition into $G$ and only probabilistic states with only invariant outgoing transitions, which all go to $G$.
We now make the following case distinction.

Case 1:
there is a nonempty, invariant set $G \subseteq I$, such that the state $p$ of $G$ with minimal priority has even priority.
First notice that $G\subseteq A$:
if the minimal value of the bias function is $b_{\min}$, then the bias of a state in $p$ minus $b_{\min}$ serves as sufficient energy when starting in $p$:
it then holds that $\Prob[\sys{M}(\sigma)]{p}(\EN{k}\cap\Parity{})=1$, where $k\eqdef b(p) - b_{\min}$,
and $\sigma$ is a memoryless randomised strategy that assigns a positive probability to all transitions into $G$.
Since $I$ is an end-component, it is contained in the attractor of $G$, which implies the claim, as
$\Att{G} \subseteq \Att{A} \subseteq \Att{A\cup B}$.

Case 2: there is no non-empty invariant set $G \subseteq I$ with even minimal priority.
We show that this is a contradiction with the assumption that $I$ is a non-losing set,
in particular with condition 3 of \cref{def:non-losing}.
We assume for contradiction that there is a strategy $\sigma$ and an energy level $k$ such that we can satisfy the energy parity condition with a positive probability 
while staying in $I$ and starting at some state $p \in I$.
We also assume w.l.o.g.\ that all bias values are non-negative, and $m$ is the maximal value among them.
We set $k' = k + m$.

The `interesting' events that can happen during a run are selecting a non-invariant transition from a controlled state or reaching a probabilistic state (and making a random decision from this state), where at least one outgoing transition is non-invariant.

We capture both by random variables, where
random variables that refer to taking non-invariant transition from controlled states (are deterministic and) have a negative expected value, while
random variables that refer to taking a transition from a probabilistic state where at least one outgoing transition is non-invariant refers to a random variable drawn from a finite weight function with expected value $0$ and positive variation. 
Note that random variables corresponding to probabilistic non-invariant transitions are independent and
drawn from a \emph{finite} set of distributions. 

Let $\alpha$ be any infinite sequence of such random variables.
From the results on finitely inhomogeneous controlled random walks \cite{durrett1991making}, we can show that
almost-surely the sum of some prefix of $\alpha$ will be lower than $-k'$ (and in fact lower than any finite number). 
The proof follows the same reasoning as in Proposition 4.1 of \cite{BBEKW10}, where a sufficient and necessary condition was given for not going bankrupt with a positive probability in \emph{solvency games} \cite{BergerKSV08}.

We now consider the set of runs induced by $\sigma$.
As we just showed, almost all runs that have infinitely many interesting events (as described above) will not satisfy the $k'$-energy condition. 
Almost all runs that have finitely many interesting events will 
have an odd dominating priority, and therefore will not satisfy the parity condition. 
Thus, the probability that the energy parity condition is satisfied by $\sigma$ is $0$.
\qedhere
\end{proof}
\begin{lemma}
    \label{lem:LVAL-reach-low}
    For every MDP $\sys{M}$ and state $p$,
    $\LValP[\sys{M}]{p} \le \sup_{\sigma} \Prob[\sys{M}(\sigma)]{p}(\eventually (A\cup B))$.
\end{lemma}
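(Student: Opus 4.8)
The goal is the upper bound $\LValP[\sys{M}]{p} \le \sup_{\sigma} \Prob[\sys{M}(\sigma)]{p}(\eventually(A\cup B))$. The plan is to argue contrapositively: fix any state $p$ and suppose $\pi\eqdef\sup_\sigma\Prob[\sys{M}(\sigma)]{p}(\eventually(A\cup B)) < 1$; I will show $\LValP[\sys{M}]{p}\le\pi$, i.e.\ for every $k$ and every strategy $\sigma$, $\Prob[\sys{M}(\sigma)]{p}(\EN{k}\cap\Parity{})\le\pi$. The key structural fact to establish first is that, on \emph{any} run of \emph{any} strategy, the set of states visited infinitely often almost-surely forms an end-component (this is the standard fact about MDPs that runs settle in an end-component almost surely). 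Then I would classify the runs that satisfy $\EN{k}\cap\Parity{}$ according to the end-component $I$ in which they settle. Such a run witnesses that $I$ has even minimal priority (it satisfies parity) and that the energy condition holds along the tail with positive probability (conditioned on settling in $I$), and since the run stays in $I$ in the limit and visits all of $I$ infinitely often, $I$ is a \emph{non-losing} end-component in the sense of \cref{def:non-losing}.

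The heart of the argument is then \cref{lem:non-losing}: every non-losing end-component $I$ is contained in $\Att{A\cup B}$. So any run satisfying $\EN{k}\cap\Parity{}$ must, at the moment it settles in its (non-losing) end-component $I\subseteq\Att{A\cup B}$, already be inside $\Att{A\cup B}$; in particular it reaches $\Att{A\cup B}$. But for states of $\sys{M}$ outside $A\cup B$ that nevertheless lie in $\Att{A\cup B}$, I need the sharper observation that reaching $\Att{A\cup B}$ and reaching $A\cup B$ have the same value. Concretely: $\eventually\Att{A\cup B}$ and $\eventually(A\cup B)$ have equal value from every state, because from any state of $\Att{A\cup B}$ there is a strategy reaching $A\cup B$ almost surely, and conversely any run reaching $A\cup B$ trivially reaches $\Att{A\cup B}$; more carefully, $\sup_\sigma\Prob{p}{\eventually\Att{A\cup B}}=\sup_\sigma\Prob{p}{\eventually(A\cup B)}=\pi$ by concatenating a reachability strategy for $\Att{A\cup B}$ with the almost-sure attractor strategy. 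Hence the event ``settle in a non-losing end-component'' has probability at most $\pi$ under every strategy. Since $\EN{k}\cap\Parity{}$ implies settling in a non-losing end-component (up to a null set), we get $\Prob[\sys{M}(\sigma)]{p}(\EN{k}\cap\Parity{})\le\pi$ for all $\sigma$ and all $k$, and taking the supremum over $k$ yields $\LValP[\sys{M}]{p}\le\pi$, which is the claim.

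The main obstacle I anticipate is the careful handling of the ``positive probability'' clause in condition~3 of \cref{def:non-losing}: a run that satisfies the energy condition from level $k$ need not mean that the end-component it settles in, viewed in isolation with a suitable strategy, satisfies the energy condition from \emph{some} level with positive probability — one must pass from the behaviour of the whole strategy $\sigma$ on the MDP to a strategy witnessing non-losingness \emph{within} $I$. The clean way around this is: condition on the positive-probability event that the run eventually stays in $I$ and visits $I$ entirely infinitely often and never again drops below energy $0$; on this event the suffix of the run is a valid play inside the sub-MDP on $I$, its residual energy level $k'$ is some fixed natural number, and the residual strategy achieves $\EN{k'}\cap\Parity{}$ restricted to $I$ with positive probability. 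That residual strategy, possibly after an initial randomised ``spreading'' phase to ensure all states of $I$ are visited, witnesses all three conditions of \cref{def:non-losing}. A secondary, more routine obstacle is making ``settle in an end-component'' precise for arbitrary (history-dependent, randomised, infinite-memory) strategies; this is classical (see e.g.\ the MDP end-component analysis in the literature) and I would cite it rather than reprove it.
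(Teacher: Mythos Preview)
Your proposal is correct and follows essentially the same route as the paper: both argue that runs satisfying $\EN{k}\cap\Parity{}$ must (almost surely) settle in a non-losing end-component, invoke \cref{lem:non-losing} to place such components inside $\Att{A\cup B}$, and conclude via the standard attractor identity that reaching $\Att{A\cup B}$ and reaching $A\cup B$ have the same optimal probability. You are more explicit than the paper about the condition-3 subtlety (extracting a witness strategy inside $I$) and about the attractor step, but neither constitutes a different approach.
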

\begin{proof}
    Fix $p$ and $\sigma$.
    Every run from $p$ will, with probability $1$, eventually reach an end-component and 
    visit all states of the end-component infinitely often \cite{de1997formal}.

    Let $C$ be an end-component such that $C$ forms the infinity set of the runs from $p$ under $\sigma$ with a positive probability $\pi>0$. If $C$ does not satisfy the conditions of non-losing end-components, then
    the probability $\Prob[\sys{M}(\sigma)]{q}(\EN{k}\cap\Parity{})$
    that the $k$-energy-parity objective is satisfied from some state $q\in C$ is $0$, independent of the value $k$.
    Thus, the probability of satisfying the $k$-energy-parity objective from an initial state $p$
    is bounded by the chance of reaching a state in some non-losing end-component.
    These observations hold for every strategy $\sigma$
    and therefore we can bound
    \begin{align*}
        \LValP[\sys{M}]{p} &=
        \sup_{k}\sup_{\sigma}\Prob[\sys{M}(\sigma)]{p}(\EN{k}\cap\Parity)\\
                           &\le\sup_{\sigma} \Prob[\sys{M}(\sigma)]{p}(\eventually (\mathit{NLE})),
    \end{align*}
    where $\mathit{NLE}\subseteq V$ denotes the union of all non-losing end-components.
    Now \cref{lem:non-losing} implies that
    $\sup_{\sigma} \Prob[\sys{M}(\sigma)]{p}(\eventually (\mathit{NLE}))
    \leq \sup_{\sigma} \Prob[\sys{M}(\sigma)]{p}(\eventually (A\cup B))$,
    which completes the proof.
\qedhere
\end{proof}
\begin{lemma}
    \label{lem:checking-LVAL-1}
    Determining the limit value of a state $p$
    can be done in $\widetilde {\mathcal O}(|E|\cdot d\cdot \card{V}^4\cdot W + d \cdot |V|^{3.5} \cdot (|\lambda| + |\cost{}|)^2)$ deterministic time.
    They can also be determined in \NP\ and \coNP\ in the input size when $W$ is given in binary.
\end{lemma}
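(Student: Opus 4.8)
The plan is to read off the algorithm directly from \cref{thm:LVAL-reach}, which reduces computing the limit value $\LValP{p}$ to computing $\sup_\sigma \Prob{p}(\eventually(A\cup B))$, i.e. the value of a plain reachability objective with target $A\cup B$. As observed in the remark following \cref{thm:LVAL-reach}, we have $A\cup B = A'\cup B$, so we never need to build $A$ itself. Hence the computation splits into three tasks: (i) compute the set $B$ of states almost-surely satisfying $\PosMP\cap\Parity$; (ii) compute the set $A'$ of states almost-surely satisfying $\ES{k}\cap\Parity$ for some $k$; (iii) compute the value of the reachability objective $\eventually(A'\cup B)$ from $p$, which amounts to maximal-probability reachability in an MDP and is solvable in polynomial time (e.g. via linear programming or attractor-style iteration). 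The cost is then dominated by (i) and (ii).

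For the time bound: task (i) is exactly the almost-sure $\PosMP\cap\Parity$ problem, solvable in $\widetilde{\mathcal O}(d\cdot|V|^{3.5}\cdot(|\lambda|+|\cost{}|)^2)$ by \cref{thm:as-Par-PosMP}; task (ii) is the almost-sure storage-parity problem $\ES{k}\cap\Parity$ (which, since $k$ is existentially quantified and the quantification order matches, is the almost-sure problem covered by \cref{thm:storage}), solvable in $\mathcal O(|E|\cdot d\cdot|V|^4\cdot W)$; task (iii) is subsumed by the $|E|\cdot d\cdot|V|^4\cdot W$ term. Adding these gives the claimed $\widetilde{\mathcal O}(|E|\cdot d\cdot|V|^4\cdot W + d\cdot|V|^{3.5}\cdot(|\lambda|+|\cost{}|)^2)$ bound.

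For the $\NP\cap\coNP$ claim: both $B$ and $A'$ are verifiable in $\NP\cap\coNP$ (for $A'$ by \cref{lem:safe-computable} / the storage-parity result of \cref{thm:storage}, for $B$ by \cref{thm:as-Par-PosMP}, which even places it in $\mathrm P$), so one guesses and checks these sets; and exact optimal reachability probabilities in an MDP are computable in polynomial time, hence the final comparison against a guessed threshold value is polynomial. Packaging a guessed certificate for $A'\cup B$ together with the reachability LP solution yields membership in both $\NP$ and $\coNP$. I expect the only non-routine point to be making precise that the storage-parity result indeed yields $A'$ within the stated bound — i.e. that the quantifier pattern ``$\exists k\,\exists\sigma$'' for the $\ES{k}\cap\Parity$ condition lines up with the almost-sure storage-parity problem of \cref{thm:storage} rather than requiring anything more expensive — but this has already been set up in the discussion preceding the lemma, so the argument is essentially bookkeeping over the three subroutines.
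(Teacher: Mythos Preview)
The proposal is correct and follows essentially the same approach as the paper: reduce via \cref{thm:LVAL-reach} to reachability of $A'\cup B$, compute $A'$ by \cref{thm:storage} and $B$ by \cref{thm:as-Par-PosMP}, then solve reachability by LP. The only cosmetic difference is that the paper, instead of invoking \cref{thm:as-Par-PosMP} as a black box to get $B$, explicitly computes for each even $2i$ the positive $2i$-maximal end-components (relying on \cref{lem:MPge0} to justify that reaching these is equivalent to reaching $B$); this is just an unrolling of the same computation and yields the identical complexity bound.
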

\begin{proof}
    Recall that $\LValP[\sys{M}]{p} = \sup_{\sigma} \Prob[\sys{M}(\sigma)]{p}(\eventually (A\cup B))$
    by Theorem~\ref{thm:LVAL-reach}, that $A \cup B = A' \cup B$, and that
    $A'$ and $B$ are the sets of control states
    that almost-surely satisfy the storage-parity and mean-payoff-parity
    objective, respectively.
    Using the results of Section~\ref{sec:limit}, the algorithm proceeds as
    follows.
    \begin{enumerate}
     \item Compute $A'$, which can be done in time $\mathcal O(|E|\cdot d \cdot
    |V|^4 \cdot W)$ by \cref{thm:as-storage-parity}.
     \item Compute, for each occurring even priority $2i$, the following:
     \begin{enumerate}
      \item the set of $2i$ maximal end-components, which can be computed in $\mathcal O(|E|)$; and
      \item the mean payoff value for the $2i$ maximal end-components
      can be computed using Karmarkar's algorithm \cite{Karmarkar/84/Karmarkar} for linear programming in time     $\widetilde {\mathcal O}(|V|^{3.5} \cdot (|\lambda| + |\cost{}|)^2)$
       ---note that the complexity refers to \emph{all} (not each) $2i$ maximal end-components.
    \end{enumerate}
    \item Consider the union of $A$ with all the $2i$ maximal end-components with positive mean payoff
    computed in Step 2, and compute the maximal achievable probability of
    reaching this set. (By the results of Section~\ref{sec:limit}, this yields
    the probability $\sup_{\sigma} \Prob[\sys{M}(\sigma)]{p}(\eventually (A\cup B))$.)
    \end{enumerate}
The last step costs $\widetilde {\mathcal O}(|V|^{3.5} \cdot |\lambda|^2)$ \cite{Karmarkar/84/Karmarkar} for solving the respective linear program \cite{Puterman:book}, which is dominated by the estimation of the cost of solving the linear programs from (2b).
Likewise, the cost of Step (2a) is dominated by the cost of Step (2b).

This leaves us with once the complexity of (1) and $d$ times the complexity of (2b), resulting in the claimed complexity.
Note that it depends on the size of representation (in binary) $\lambda$ and $W$ (in unary), and the bigger of these values dominates the complexities.

Finally, all steps are in \NP\ and in \coNP.
    \qedhere
\end{proof}

\section{Limit-Sure Energy-Parity}
\label{sec:ls-EP}
\label{sec:extensions}
In this section we
provide the reduction from checking if an energy-parity objective
holds limit-surely, to checking if such an objective holds almost-surely.
The reduction basically extends the MDP so that the controller may
``buy'' a visit to a good priority (at the expense of energy)
if currently in a state $p$ with limit value $\LValP{p}=1$.

\begin{definition}
    The \emph{extension} of a finite MDP $\sys{M}$ for given cost and parity functions
    is the MDP $\sys{M'}\supseteq\sys{M}$ where, additionally, for every controlled state
    $s\in\VC$ with $\LValP[\sys{M}]{s}=1$,
    there is a new state $s'$ with parity $0$
    and edges $(s,s'), (s',s)$ with 
    $\cost(s,s') = -1$ and $\cost(s',s) = 0$.
    We write $V'$ for the set of states of the extension.
\end{definition}

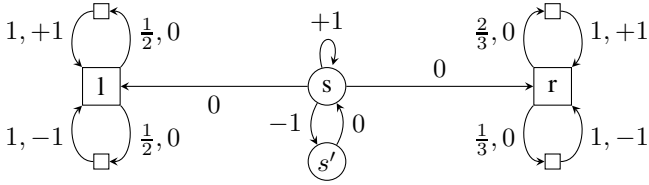
\begin{figure}[h]
    \centering
\begin{tikzpicture}[node distance=1cm and 3cm, on grid]
    \tikzstyle{acstate}=[pstate,minimum size=0.2cm,]
    \node[cstate] (Y) {s};
    \node[pstate, right=of Y] (Z) {r};
    \node[pstate, left=of Y] (X) {l};
    \node[acstate, above=of X] (lla) {};
    \node[acstate, above=of Z] (rra) {};
    \node[acstate, below=of X] (llb) {};
    \node[acstate, below=of Z] (rrb) {};
    \draw[->] (Y) edge node {$0$} (Z);
    \draw[->] (Y) edge node {$0$} (X);
    \draw[->,in=west, out=south west] (Z) edge node[left]{${\frac{1}{3}},0$} (rrb);
    \draw[->,out=east, in=south east] (rrb) edge node[right] {$1,-1$}(Z);
    \draw[->,out=north west, in=west] (Z) edge node[left]{${\frac{2}{3}},0$} (rra);
    \draw[->,in=north east, out=east] (rra) edge node[right]{$1,+1$} (Z);
    \draw[->,in=east, out=north east] (X) edge node[right]{${\frac{1}{2}},0$} (lla);
    \draw[->,out=west, in=north west] (lla) edge node[left]{$1,+1$} (X);
    \draw[->,in=east, out=south east] (X) edge node[right] {${\frac{1}{2}},0$} (llb);
    \draw[->,out=west, in=south west] (llb) edge node[left] {$1,-1$} (X);
    \node[cstate, below=of Y] (SE) {$s'$};
    \draw[->,bend right] (Y) edge node[swap]{$-1$} (SE);
    \draw[->,bend right] (SE) edge node[swap]{$0$} (Y);
    \draw[->,loop above] (Y) edge node[] {$+1$} (Y);

\end{tikzpicture}
    \caption{The extension of the system from \cref{ex:lval}.
    State $s$ has colour $1$, all others $0$.}
    \label{fig:extension}
\end{figure}

Note that the extension only incurs an $\mathcal O(\card{V_C})$ blow-up, and
$\sys{M'}$ satisfies the $\EN{k}\cap\Parity{}$ objective iff $\sys{M'}$ does.
\begin{theorem}\label{thm:extension}
    Let $\sys{M}$ be a MDP with extension $\sys{M'}$,
    $p$ be a state and $k\in\N$. Then,
    $\Val[\sys{M}]{p}{\EN{k}\cap\Parity{}} = 1$ if, and only if,
    $\Prob[\sys{M'}(\sigma)]{p}(\EN{k}\cap\Parity{}) = 1$
    for some strategy $\sigma$.
\end{theorem}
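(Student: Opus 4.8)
The plan is to prove the two directions of the equivalence separately, exploiting the monotonicity of the energy objective and the structure of the extension $\sys{M}'$.

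\textbf{Direction ($\Leftarrow$).} Suppose $\Prob[\sys{M'}(\sigma)]{p}(\EN{k}\cap\Parity{}) = 1$ for some strategy $\sigma$ in the extension. I would argue that $\sigma$ only uses the new edges $(s,s')$ and $(s',s)$ finitely often on almost every run: each such use costs one energy unit and there is no compensating gain, so using them infinitely often on a run would (since the underlying costs of the original part are bounded) eventually violate the $k$-energy condition. Hence almost all runs have a finite prefix after which $\sigma$ behaves like a strategy in the original MDP $\sys{M}$, and from that point the suffix stays within $\sys{M}$ and satisfies $\EN{k'}\cap\Parity{}$ for the residual energy level $k'$ attained at that point. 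Partitioning the run space by the (finite) time and place of the last use of a new edge, and by the residual energy $k'$, one obtains that, for each $\eps>0$, a large enough energy level $k'$ and an appropriate strategy realise the $k'$-energy-parity objective in $\sys{M}$ with probability $\ge 1-\eps$; hence $\LValP[\sys{M}]{p} = 1$, i.e.\ $\Val[\sys{M}]{p}{\EN{k}\cap\Parity{}} = 1$ (using that the limit value of a state is $1$ exactly when it has value $1$, and that $\Val{p}{\EN{k}\cap\Parity{}}=1$ is equivalent by monotonicity to $\LValP{p}=1$). Here I should be careful that the new edge is only available from states $s$ with $\LValP[\sys{M}]{s}=1$, which is consistent with the residual argument.

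\textbf{Direction ($\Rightarrow$).} Suppose $\Val[\sys{M}]{p}{\EN{k}\cap\Parity{}} = 1$, equivalently $\LValP[\sys{M}]{p} = 1$. I want to build a single almost-surely winning strategy $\sigma$ in $\sys{M}'$. The key idea is that in $\sys{M}'$ the controller can, whenever in a state $s$ with $\LValP[\sys{M}]{s}=1$, repeatedly take the detour $s\to s'\to s$ to force a visit to priority $0$, paying one energy unit per visit. I would combine this with the characterisation of limit-value-$1$ states. Concretely, I would invoke \Cref{thm:LVAL-reach}: $\LValP[\sys{M}]{p}=1$ means $p$ almost-surely reaches $A\cup B = A'\cup B$ under some strategy. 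From $A'$ (almost-sure storage-parity) and from $B$ (almost-sure $\PosMP\cap\Parity$, witnessed by finite memory via \Cref{lem:MPge0} and \Cref{thm:as-Par-PosMP}) one can, in the extension, accumulate arbitrarily large energy with probability approaching $1$ and then use the detours to guarantee infinitely many priority-$0$ visits while paying only a bounded amount of energy per unit of progress, so that overall the $k$-energy-parity objective holds almost-surely. I would stage the strategy: use a reachability strategy to reach $A\cup B$ (with probability $1$, since value-$1$ states that are not in $A$ can still almost-surely reach $A'\cup B$ — actually the clean statement is that the set of limit-value-$1$ states equals $\Att{A\cup B}$), then within $A'$ or $B$ use the storage / positive-mean-payoff structure together with the detour edges to win surely/almost-surely.

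\textbf{Main obstacle.} The hard part is the ($\Rightarrow$) direction: turning the \emph{family} of $\eps$-optimal strategies witnessing $\LValP[\sys{M}]{p}=1$ into a \emph{single} almost-surely winning strategy in $\sys{M}'$. The mechanism is that the priority-$0$ detour decouples the parity requirement from the energy bookkeeping: once you are in a region ($A'$ or a positive $2i$-maximal end-component) where you can keep energy from dropping (storage) or make it drift upward ($\PosMP$), each "cheap" visit to $s'$ gives a priority-$0$ sighting for the price of one energy unit, and the positive drift / storage guarantees you can afford infinitely many of them almost-surely. Making this precise requires care about (i) the order of quantifiers — first fix how much energy you need to reach and stabilise in the good region with probability close to $1$, then argue the residual play wins almost-surely — and (ii) ensuring the detour is actually available, i.e.\ that the states where you want to buy priority-$0$ visits indeed have limit value $1$ in $\sys{M}$, which follows because they lie in end-components of $A'\cup B$ all of whose states have limit value $1$. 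I expect the bookkeeping combining \Cref{thm:LVAL-reach}, \Cref{lem:MPge0}, and the bailout-style switching argument of \Cref{lem:mainfix} to be the technical core.
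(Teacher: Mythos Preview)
Your proposal has a genuine gap in both directions, rooted in a single conflation: you treat $\Val[\sys{M}]{p}{\EN{k}\cap\Parity}=1$ (for the \emph{given} $k$) as equivalent to $\LValP[\sys{M}]{p}=1$. These are not the same. State $r$ in \cref{ex:lval} has $\LValP{r}=1$ while $\Val{r}{\EN{k}\cap\Parity}=1-(1/2)^k<1$ for every $k$. The theorem is a statement about a fixed $k$, and both directions must track that $k$.

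\textbf{($\Leftarrow$).} Your claim that $\sigma$ uses the detour edges only finitely often is false: if the original MDP has positive drift, $\sigma$ can buy a priority-$0$ visit infinitely often and still satisfy $\EN{k}$. More importantly, even if your argument went through it would only yield $\LValP{p}=1$, which does not give $\Val{p}{\EN{k}\cap\Parity}=1$ for the specific $k$. The paper's argument is different and preserves $k$: mimic $\sigma'$ in $\sys{M}$ by \emph{skipping} the detours. This keeps the invariant that the accumulated cost along the $\sys{M}$-path equals that of the corresponding $\sys{M}'$-path \emph{plus} the number of skipped detours, so $\EN{k}$ in $\sys{M}'$ implies $\EN{k}$ in $\sys{M}$. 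One then waits for a ``tipping point'' (a limit-value-$1$ state reached with enough surplus energy) and switches to an $\eps$-optimal strategy there; runs that never tip correspond to $\sys{M}'$-runs with only boundedly many detours, hence share a suffix and inherit the parity condition.

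\textbf{($\Rightarrow$).} By immediately weakening the hypothesis to $\LValP{p}=1$ you discard the only information tying the construction to $k$. Your ``reach $A\cup B$, then win there'' strategy has no reason to satisfy $\EN{k}$: the reachability phase may drop the energy arbitrarily far below $-k$, and once inside $B$ you only know the drift is positive, not that you start with enough energy. The paper avoids this by passing through an auxiliary \emph{finite} MDP $\sys{B}_k$ that hardcodes energy levels in the window $[-k,|V|]$, with losing sinks at level $-k$ and at any state of limit value $<1$, and a winning sink at level $|V|$. One shows that $\Val[\sys{M}]{p}{\EN{k}\cap\Parity}=1$ forces $\Val[\sys{B}_k]{(p,0)}{\Parity}=1$; finiteness then upgrades limit-sure to almost-sure, and an optimal parity strategy in $\sys{B}_k$ lifts to an almost-sure $\EN{k}\cap\Parity$ strategy in $\sys{M}'$ by taking the detour each time a positive-cost cycle is closed. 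The intermediate bounded system is exactly what lets you keep control of the specific $k$; your outline is missing this ingredient.
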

In the remainder of this section we prove this claim.
For brevity, let us write $\effect{w}$ for the cumulative cost 
$\sum_{i=1}^{k-1}\cost(s_i,s_{i+1})\in\Z$ of all steps in a finite path $w=s_1s_2\dots s_k\in V^*$.

\begin{lemma}\label{lem:extension:onlyif}
    Let $\sys{M}$ be a MDP with extension $\sys{M'}$,
    $p$ be a state of $\sys{M}$, $k\in\N$ and 
    $\sigma'$ a strategy for $\sys{M'}$
    such that $\Prob[\sys{M'}(\sigma')]{p}(\EN{k}\cap\Parity{}) = 1$.
    Then $\Val[\sys{M}]{p}{\EN{k}\cap\Parity{}} = 1$.
\end{lemma}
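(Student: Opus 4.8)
The plan is to take a strategy $\sigma'$ for the extension $\sys{M'}$ that witnesses $\Prob[\sys{M'}(\sigma')]{p}(\EN{k}\cap\Parity{}) = 1$ and transform it, for every $\eps>0$, into a strategy $\sigma_\eps$ for the original MDP $\sys{M}$ with $\Prob[\sys{M}(\sigma_\eps)]{p}(\EN{k}\cap\Parity{}) \ge 1-\eps$; this gives $\Val[\sys{M}]{p}{\EN{k}\cap\Parity{}} = 1$. The only difference between $\sys{M}$ and $\sys{M'}$ is the set of ``detour'' gadgets $s \to s' \to s$ that $\sigma'$ may use at controlled states $s$ with $\LValP[\sys{M}]{s} = 1$; each such detour costs one unit of energy and produces one visit to priority $0$. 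So the task is to simulate these detours inside $\sys{M}$ using the fact that from any such $s$ one can, by \cref{lem:value-one-states}, satisfy $\EN{m}\cap\Parity{}$ with probability $\ge 1-\delta$ for a suitable energy level $m$ and arbitrarily small $\delta$.

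First I would fix $\eps>0$ and bound the number of detours $\sigma'$ takes. Since $\sigma'$ almost-surely satisfies $\EN{k}$, the energy never goes negative, and each detour decreases the energy by $1$ while the starting energy is $k$; between detours the energy can only be replenished by moves of $\sys{M}$ itself. Rather than bounding the total number of detours directly (it may be infinite along a run), I would argue as follows: pick $N$ large enough that, with probability $\ge 1-\eps/2$ under $\sigma'$, one of two things happens first — either the run reaches a configuration from which no further detour is ever taken (the ``tail'' behaves like an $\sys{M}$-run), or it reaches a state $s$ with $\LValP[\sys{M}]{s}=1$ having taken at most $N$ detours so far. The key point is that on the good event the run eventually either stops using detours forever, so its suffix is a legitimate $\sys{M}$-run already satisfying $\EN{k'}\cap\Parity{}$ for the current energy $k'$, or it is sitting at a value-one state $s$; in the latter case we abandon the simulation of $\sigma'$ and switch to the strategy from \cref{lem:value-one-states} at $s$, using the current (still $\ge 0$) energy level, choosing the witnessing strategy there to have failure probability $\le \eps/2$. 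Priority-$0$ visits lost by not taking the remaining detours are irrelevant because the switched-to strategy already secures the parity objective on its own.

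The main obstacle is making the ``stops using detours forever vs.\ reaches a value-one state having used boundedly many detours'' dichotomy precise and ensuring the simulated prefix stays energy-safe in $\sys{M}$. The subtlety is that $\sigma'$ might interleave infinitely many detours with productive $\sys{M}$-moves along a single run, so there need not be a last detour; but then the run visits value-one states infinitely often (the detour states have limit-value-one neighbours), so we can choose the first time the cumulative detour count exceeds $N$ and at that moment we are at a value-one state $s$ with energy $\ge 0$, at which we switch. To argue the prefix is safe in $\sys{M}$: I replace each detour step $s\to s'$ (cost $-1$) and $s'\to s$ (cost $0$) simply by staying at $s$ and doing nothing — but since $\sys{M}$ has no such $0$-cost idle move in general, instead I keep the energy bookkeeping honest by noting that the $\sys{M}$-run that omits the detours has energy pointwise \emph{at least} that of the corresponding $\sys{M'}$-run (each omitted detour only saves energy), hence never negative; the finitely many priority-$0$ states $s'$ that are skipped do not matter because after the switch point parity is handled separately. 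Assembling: $\sigma_\eps$ simulates $\sigma'$ on $\sys{M}$ (skipping detours, which is safe), and upon the switch event uses the \cref{lem:value-one-states} strategy; a union bound over the two $\eps/2$ error events gives $\Prob[\sys{M}(\sigma_\eps)]{p}(\EN{k}\cap\Parity{}) \ge 1-\eps$, and letting $\eps\to 0$ yields $\Val[\sys{M}]{p}{\EN{k}\cap\Parity{}}=1$.
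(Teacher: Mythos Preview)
Your overall architecture matches the paper's: simulate $\sigma'$ in $\sys{M}$ by skipping the $s\to s'\to s$ detours (which only increases the available energy, since $\effect{\pi}=\effect{\pi'}+\text{(number of skipped detours)}$), and at some designated moment at a limit-value-$1$ state hand over to a strategy supplied by \cref{lem:value-one-states}. The paper calls this moment a \emph{tipping point}: the first time one is at a state $s_l$ with $\LValP[\sys{M}]{s_l}=1$ and the accumulated cost in $\sys{M}$ is at least $n_\eps$, where $n_\eps$ is the energy threshold that \cref{lem:value-one-states} supplies for error $\eps$. Runs that never hit a tipping point can take only boundedly many detours, hence share an infinite suffix with the $\sys{M'}$-run and inherit $\Parity{}$ almost surely.

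Your proposal has a real gap precisely at the switch. You switch ``the first time the cumulative detour count exceeds $N$'' and then apply the \cref{lem:value-one-states} strategy ``using the current (still $\ge 0$) energy level''. But \cref{lem:value-one-states} does not hand you a strategy that is $\eps/2$-good from an \emph{arbitrary} nonnegative energy; it fixes some threshold $n_{s,\eps}$ and only promises $\Prob(\EN{n_{s,\eps}}\cap\Parity)\ge 1-\eps$. From energy $0$ this can fail outright (state $r$ in \cref{ex:lval} has limit value $1$, yet $\Val{r}{\EN{0}\cap\Parity}=0$). Relatedly, your stated role for $N$ (``pick $N$ large enough that with probability $\ge 1-\eps/2$ one of two things happens first'') does no work: every run either takes finitely many detours or infinitely many, so your disjunction holds with probability $1$ for every $N$.

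The fix is short and recovers the paper's argument. When the detour count first exceeds $N$, the $\sys{M}$-energy equals the $\sys{M'}$-energy plus the number of skipped detours, hence is at least $N$. So first fix $\eps$, obtain $n_\eps=\max_s n_{s,\eps}$ from \cref{lem:value-one-states}, and then choose $N\ge n_\eps$; at the switch you now genuinely have enough energy. Equivalently, switch as the paper does: at the first limit-value-$1$ state with accumulated $\sys{M}$-cost $\ge n_\eps$. Either way the only error is the single $\eps$ loss from the switched-to strategy; runs with finitely many detours (equivalently, runs never reaching the trigger) satisfy $\EN{k}\cap\Parity$ almost surely for the reasons you already gave.
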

\begin{proof}
    Recall \cref{lem:value-one-states}, that states with $\LValP[\sys{M}]{s}=1$
    have the property that, for every $\eps>0$,
    there exists $n_{s,\eps}\in\N$ and a strategy $\sigma_{s,\eps}$ such that
    $$\Prob[\sys{M}(\sigma_{s,\eps})]{s}(\EN{n_{s,\eps}}\cap\Parity{}) \ge 1-\eps.$$
    Consider now a fixed $\eps>0$ and let $n_\eps\eqdef\max\{n_{s,\eps}\mid \LValP{s}=1\}$.
    We show the existence of a strategy $\sigma$ for $\sys{M}$ that satisfies
    ${\Prob[\sys{M}(\sigma)]{p}(\EN{k}\cap\Parity{})} \ge 1-\eps$.

    \medskip
    We propose the strategy $\sigma$ which proceeds in $\sys{M}$ just as $\sigma'$
    does in $\sys{M'}$ but skips over ``buying'' loops $(s,s')$ followed by $(s',s)$ in $\sys{M'}$.
    This goes on indefinitely unless the observed path $\rho=s_0s_1\ldots s_l$
    reaches a \emph{tipping point}: the last state $s_l$ has $\LValP[\sys{M}]{s_l}=1$ and
    the accumulated cost is $\effect{\rho}\ge n_{\eps}$.
    At this point $\sigma$ continues as $\sigma_{s_l,\eps}$.

    \medskip
    We claim that $\Prob[\sys{M}(\sigma)]{p}(\EN{k}\cap\Parity{}) \ge 1-\eps$.
    Indeed, first notice that for any prefix $\pi\in V^*$ of a run $\rho\in \Runs[\sys{M}(\sigma)]{p}$
    until the tipping point, 
    there is a unique corresponding path $\pi'=s'_1s'_2\dots s'_i\in V'^*$ in $\sys{M'}$,
    which is a prefix of some run $\rho'\in\Runs[\sys{M'}(\sigma')]{p}$.
    Moreover, the strategy $\sigma$ maintains the invariant that the accumulated cost
    of such prefix $\pi$ is
    $$\effect{\pi} = \effect{\pi'} + \card{\{j\mid s'_j\in V'\setminus V\}},$$
    the accumulated cost of the corresponding path $\pi'$
    plus the number of times $\pi'$ visited a new state in $V'\setminus V$.
    In particular this means that the path $\pi$ can only violate the energy condition
    if also $\pi'$ does.
    
    To show the claim, first notice that the error introduced by the runs in $\Runs[\sys{M}(\sigma)]{p}$
    that eventually reach a
    tipping point cannot exceed $\eps$. This is because from the tipping point onwards,
    $\sigma$ proceeds as some $\sigma_{s,\eps}$
    and thus achieves the energy-parity condition with chance $\ge 1-\eps$.
    So the error introduced by the runs in $\Runs[\sys{M}(\sigma)]{p}$
    is a weighted average of values $\le \eps$, and thus itself at most $\eps$.

    Now suppose a run $\rho\in\Runs[\sys{M}(\sigma)]{p}$ never reaches a tipping point.
    Then the corresponding run $\rho'\in \Runs[\sys{M}'(\sigma')]{p}$
    cannot visit new states in $V'\setminus V$ more than $n_\eps$ times.
    Since with chance $1$, $\rho'$
    and therefore also $\rho$ satisfies the $k$-energy condition
    it remains to show that $\rho$ also satisfies the parity condition.
    To see this, just notice that $\rho'$ satisfies this condition almost-surely
    and since it visits new states only finitely often,
    $\rho$ and $\rho'$ share an infinite suffix.
    \qedhere
\end{proof}

The ``only if'' direction of \cref{thm:extension} is slightly more complicated.
We go via an intermediate finite system $\sys{B}_k$ defined below.
The idea is that if $\EN{k}\cap\Parity{}$ holds limit-surely in $\sys{M}$
then $\Parity{}$ holds limit-surely in $\sys{B}_k$
and since $\sys{B}_k$ is finite this means that $\Parity{}$ also holds almost-surely in $\sys{B}_k$.
Based on an optimal strategy in $\sys{B}_k$
we then derive a strategy in the extension $\sys{M'}$ which satisfies $\EN{k}\cap\Parity{}$ a.s.
The two steps of the argument are shown individually as \cref{lem:M-to-B,lem:B-to-M'}.
Together with \cref{lem:extension:onlyif} these complete the proof of \cref{thm:extension}.

\begin{definition}
    Let $\sys{B}_k$ be the finite MDP that
    mimics $\sys{M}$ but
    hardcodes the accumulated costs as long as they remain between $-k$ and $\card{V}$.
    That is, the states of $\sys{B}_k$ are pairs $(s,n)$ where $s\in V$ and $-k\le n \le \card{V}$.
    Moreover, a state $(s,n)$ 
    \begin{itemize}
        \item is a (losing) sink with maximal odd parity if $n=-k$ or ${\LValP[\sys{M}]{s}}<1$,
    \item is a (winning) sink with parity $0$ if $n=\card{V}$.
    \end{itemize}
\end{definition}

We reuse strategies for $\sys{M}$ in $\sys{B}_k$
and write $\sys{B}_k(\sigma)$ for the Markov chain that is the result of basing decisions on $\sigma$
until a sink is reached.
\begin{lemma}
    \label{lem:M-to-B}
    If $\Val[\sys{M}]{s}{\EN{k}\cap\Parity{}}=1$ then $\Val[\sys{B}_k]{(s,0)}{\Parity{}}=1$.
\end{lemma}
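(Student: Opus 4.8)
The plan is to transfer a limit-sure witnessing family of strategies from $\sys{M}$ to $\sys{B}_k$ and argue that the resulting value in the finite MDP $\sys{B}_k$ is $1$. Fix $\eps>0$; by assumption there is a strategy $\sigma$ for $\sys{M}$ with $\Prob[\sys{M}(\sigma)]{s}(\EN{k}\cap\Parity{}) \ge 1-\eps$. I would run $\sigma$ in $\sys{B}_k$ starting from $(s,0)$ and track where runs can fail. There are three ways a run of $\sys{B}_k(\sigma)$ stops contributing to $\Parity{}$: it reaches a losing sink because the energy coordinate hit $-k$, it reaches a losing sink because it enters a state $t$ with $\LValP[\sys{M}]{t}<1$, or it survives forever between $-k$ and $\card{V}$ without ever reaching a winning sink and fails parity there. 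The first event projects onto a run of $\sys{M}(\sigma)$ that violates $\EN{k}$, so it has probability $\le\eps$. For the third event, a run that stays within the band $[-k,\card V)$ forever visits some end-component of $\sys{M}$ infinitely often almost surely; if that end-component has odd dominating priority the run fails parity, but then the corresponding run in $\sys{M}(\sigma)$ also fails $\Parity{}$, again an event of probability $\le\eps$.

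The crux is the second event: runs that enter a state $t$ with $\LValP[\sys{M}]{t}<1$. I would argue that this contributes at most a bounded amount to the error, and then amplify. The key observation is that whenever a run reaches such a $t$ with accumulated cost in the admissible band, the conditional probability (under $\sigma$, in $\sys{M}$) of satisfying $\EN{k}\cap\Parity{}$ from that point is at most $\LValP[\sys{M}]{t} < 1$; more usefully, by compactness of the finite set $\{t : \LValP[\sys{M}]{t}<1\}$, there is a uniform $\gamma<1$ bounding all these conditional values when the residual energy budget is below some threshold. Since in $\sys{B}_k$ we only care about accumulated cost up to $\card V$, the residual budget along such prefixes is controlled, so entering the ``loser'' region costs a definite fraction of the success probability. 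Formally, if the probability that $\sigma$ ever visits a loser state (within the band) is $\alpha$, then the overall success probability of $\sigma$ in $\sys{M}$ is at most $(1-\alpha) + \alpha\gamma'$ for some $\gamma'<1$ depending only on $\sys{M}$ — but since success is $\ge 1-\eps$, this forces $\alpha \le \eps/(1-\gamma')$. Thus the failure mass of $\sys{B}_k(\sigma)$ from losing sinks of the second kind is $O(\eps)$.

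Combining the three bounds, $\Prob[\sys{B}_k(\sigma)]{(s,0)}(\Parity{}) \ge 1 - O(\eps)$, so $\Val[\sys{B}_k]{(s,0)}{\Parity{}} = \sup_\sigma \Prob[\sys{B}_k(\sigma)]{(s,0)}(\Parity{}) = 1$ since $\eps$ was arbitrary. The main obstacle I anticipate is making the ``loser states cost a definite fraction'' step rigorous: one needs a clean argument that the residual energy budget available after reaching $t$ (namely $k + \effect{\pi}$, which lies in $[0, k+\card V)$ for prefixes $\pi$ that $\sys{B}_k$ has not yet killed) keeps the conditional value bounded away from $1$ by a uniform constant — which in turn relies on $\LValP[\sys{M}]{t}<1$ meaning $\Val[\sys{M}]{t}{\EN{j}\cap\Parity{}}$ is bounded away from $1$ uniformly in $j$, contradicting nothing because $\LValP{}$ is the supremum over $j$ and the finite set of loser states gives the uniformity. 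Care is also needed so that the energy band cap at $\card V$ (rather than something depending on $\eps$) suffices; this is exactly why $\sys{B}_k$ caps at $\card V$ — reaching cost $\card V$ is already enough to be safely routed into the winning sink, which is consistent with the later \cref{lem:B-to-M'}.
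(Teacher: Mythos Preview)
Your approach is essentially the same as the paper's. The paper packages the argument by defining the set $D$ of runs that never visit a ``small'' state (one with $\LValP[\sys{M}]{t}<1$) at accumulated cost below $\card{V}$, proves $\Val[\sys{M}]{s}{\EN{k}\cap\Parity{}\cap D}=1$ via exactly your ``loser states cost a definite fraction'' estimate (their $\delta$ is your $\gamma'$), and then observes $\Prob[\sys{M}(\sigma)]{s}(\EN{k}\cap\Parity{}\cap D)\le\Prob[\sys{B}_k(\sigma)]{(s,0)}(\Parity{})$; your three-way case split on failure modes in $\sys{B}_k$ is just the contrapositive unfolding of that last inequality. One minor point: your end-component detour in the third case is unnecessary---a run that never hits a sink in $\sys{B}_k$ and fails parity there projects to the identical state sequence in $\sys{M}(\sigma)$ and so fails $\Parity{}$ directly, no structural analysis needed.
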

\begin{proof}
    We show that, for every $\eps>0$, there is a strategy $\sigma$ such that
    $\Prob[\sys{B}_k(\sigma)]{s}(\Parity{})\ge 1-\eps$.
    This would be trivial (by re-using strategies from $\sys{M}$)
    if not for the extra sinks for states with
    $\LValP[\sys{M}]{s}< 1$.
    Let's call these states \emph{small} here
    and let $S$ be the set of all small states.
    We aim to show that the $k$-energy-parity condition can be satisfied
    and at the same time, the chance of visiting a small state with
    accumulated cost below $\card{V}$ can be made arbitrary small.
    More precisely, define $D\subseteq~\Runs[\sys{M}]{}$
    as the set of runs which never visit a small state with accumulated cost below $\card{V}$:
    $$D\eqdef\{s_0s_1\dots \mid \forall i\in\N.~s_i\in S \implies
    \effect{s_0\dots s_i}\ge\card{V}\}.
    $$
    We claim that
    \begin{align}
        \label{eq:M-to-B}
        &&\Val[\sys{M}]{s}{\EN{k}\cap\Parity{}\cap D} = 1
    \end{align}
    holds. We show this by contradicting the converse that, for
    $$\gamma\eqdef\Val[\sys{M}]{s}{\overline{\EN{k}\cap\Parity{}\cap D}} = 
    \Val[\sys{M}]{s}{\overline{\EN{k}\cap\Parity{}}\cup \overline{D}},$$
    $\gamma >0$.
    Equivalently, we contradict that, for every strategy $\sigma$,
    \begin{align}
        \label{eq:M-to-B'}
    \Prob[\sys{M}(\sigma)]{s}(\overline{\EN{k}\cap\Parity{}})<\gamma/2
    &\;\,\Rightarrow\;\,
    \Prob[\sys{M}(\sigma)]{s}(\overline{D})>\gamma/2.
    \end{align}
    To do this, we define $\delta<1$ as the maximum of
    $$
    \{~\Val[\sys{M}]{s}{\EN{n}\cap\Parity{}} < 1
        ~\mid~
        s\in S,~n\le k+\card{V}~
    \}\cup\{0\},
    $$
    that is,
    the maximal value $\Val[\sys{M}]{s}{\EN{n}\cap\Parity{}} < 1$
    for any $s\in S$ and $n\le k+\card{V}$,
    and $0$ if no such value exists.
    Notice that this is well defined due to the finiteness of $V$.
    This value $\delta$ estimates the chance that a run that is not in $D$ fails the $k$-energy-parity condition.
    In other words, for any strategy $\sigma$ and value $0\le \beta\le 1$,
    $$\Prob[\sys{M}(\sigma)]{s}(\overline{D})>\beta
    \text{  implies  }
    \Prob[\sys{M}(\sigma)]{s}(\overline{\EN{k}\cap\Parity{}})\ge \beta\cdot (1-\delta).$$
    This is because $\Prob[\sys{M}(\sigma)]{s}(\overline{D})$
    is the chance of a run reaching a state $s$ with accumulated cost $n<\card{V}$
    and because $\Val[\sys{M}(\sigma)]{s}{\EN{n}\cap\Parity{}} \le \delta$.

    We pick an $\eps'>0$ that is smaller than $(\gamma/2)\cdot (1-\delta)$.
    By assumption of the lemma, there is some strategy $\sigma$ such that 
    $\Prob[\sys{M}(\sigma)]{s}(\overline{\EN{k}\cap\Parity{}})<\eps'<\gamma/2$.
    Then by \cref{eq:M-to-B'}, we get
    $\Prob[\sys{M}(\sigma)]{s}(\overline{D})>\gamma/2$ and thus
    $\Prob[\sys{M}(\sigma)]{s}(\overline{\EN{k}\cap\Parity{}})\ge (\gamma/2)\cdot (1-\delta)>\eps'$,
    which is a contradiction. We conclude that \cref{eq:M-to-B} holds.

    To get the conclusion of the lemma just observe that for any strategy $\sigma$ it holds that
    $\Prob[\sys{M}(\sigma)]{s}(\EN{k}\cap\Parity{}\cap D) ~\le~ \Prob[\sys{B}_k(\sigma)]{s}(\Parity{}).$
\qedhere
\end{proof}

\begin{lemma}
    \label{lem:B-to-M'}
    If $\Val[\sys{B}_k]{s}{\Parity{}}=1$
    then
    $\Prob[\sys{M'}(\sigma')]{s}(\EN{k}\cap\Parity{}) =1$
    for some $\sigma'$.
\end{lemma}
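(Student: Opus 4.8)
The plan is to lift an almost-sure winning strategy for $\Parity{}$ in the finite MDP $\sys{B}_k$ to an almost-sure winning strategy for $\EN{k}\cap\Parity{}$ in the extension $\sys{M'}$. Since $\sys{B}_k$ is finite, I would begin from the standard fact that the value-$1$ set $\mathcal{W}$ for a parity objective coincides with the almost-sure winning set and is realised by a single memoryless strategy $\sigma$ that never leaves $\mathcal{W}$; as $(s,0)\in\mathcal{W}$ and every losing sink lies outside $\mathcal{W}$, no run of $\sys{B}_k(\sigma)$ from $(s,0)$ ever visits a losing sink. Analysing this Markov chain, almost every run either (i) is absorbed in a winning sink $(q,\card{V})$, in which case necessarily $\LValP[\sys{M}]{q}=1$ (otherwise $(q,\card{V})$ would also be a losing sink, reached with probability $0$), or (ii) eventually remains forever in an end component contained in the interior of $\sys{B}_k$ (counter strictly between $-k$ and $\card{V}$) whose minimal priority is even.

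Based on $\sigma$ I would define $\sigma'$ on $\sys{M'}$ to simulate $\sigma$ while tracking the $\sys{B}_k$-counter $n$, restarting the simulation whenever the counter hits $\card{V}$. Within a segment that starts at some state $p\in\mathcal{W}$ with counter $0$ and at real energy $e$, $\sigma'$ copies $\sigma$'s moves in $\sys{M'}$; since the extension introduces neither new priorities nor new costs, the real accumulated cost of the segment equals $n$ and the real energy equals $e+n\ge e-k$. If the counter reaches $\card{V}$ at a state $q$ (so $\LValP[\sys{M}]{q}=1$ and the segment has gained at least $\card{V}$ energy units), $\sigma'$ takes the buying loop $(q,q')$, $(q',q)$ of $\sys{M'}$ (of costs $-1$ and $0$) — spending one energy unit to witness priority $0$ — and then opens a fresh segment from $q$, reinitialising the counter to a value $n_0$ that is small enough to be covered by the remaining spare energy yet large enough that $(q,n_0)\in\mathcal{W}$. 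Under the invariant that each segment starts with $e\ge k$ — true initially and preserved because a completed segment gains at least $\card{V}-1\ge 0$ — the $k$-energy condition is met surely along every run that keeps following the simulation, and the set of runs whose simulation ever reaches a losing sink is null since $\sigma$ wins almost surely from every state of $\mathcal{W}$.

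For the parity condition I would split the runs of $\sys{M'}(\sigma')$ according to whether they contain infinitely many segments. A run with infinitely many segments visits a priority-$0$ buying-loop state $q'$ infinitely often, hence satisfies $\Parity{}$. A run with finitely many segments eventually follows $\sigma$ forever without the counter reaching $\card{V}$ again; conditioned on that event it behaves like a run of $\sys{B}_k(\sigma)$ that settles in an interior even-priority end component, which occurs with conditional probability $1$ because $\sigma$ wins parity almost surely, and the priorities occurring infinitely often are exactly those of the corresponding $\sys{B}_k$-run. Combined with the energy bound, this gives $\Prob[\sys{M'}(\sigma')]{s}(\EN{k}\cap\Parity{})=1$, as claimed.

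The step I expect to be the main obstacle is the \emph{restart} after a winning sink: one has to exhibit a counter value $n_0\le\card{V}-1$ with $(q,n_0)\in\mathcal{W}$, so that $\sigma$ can be safely resumed with the spare energy covering the virtual level, and this must be argued from monotonicity of $\mathcal{W}$ in the counter together with the fact that $(q,\card{V})$ was reached from $(s,0)\in\mathcal{W}$ under $\sigma$. The energy bookkeeping and the parity split are routine by comparison; the real care is in showing that every winning sink encountered can indeed be folded back into a legitimate continuation of the simulation.
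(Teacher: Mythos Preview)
Your overall plan—simulate an almost-sure parity strategy for $\sys{B}_k$ inside $\sys{M}'$, and use the buying loops to witness priority~$0$—matches the paper's, but your \emph{restart} mechanism is genuinely different from what the paper does, and the obstacle you flag is real and not resolved by the hints you give.

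Monotonicity of the almost-sure winning set $\mathcal{W}$ in the counter goes the wrong way for your purpose: one can show $(q,n)\in\mathcal{W}\Rightarrow(q,n')\in\mathcal{W}$ for $n'\ge n$, but you need to go \emph{down} from the trivially winning sink $(q,\card{V})$ to some $(q,n_0)$ with $n_0\le\card{V}-1$. Nothing in your argument excludes the situation where the $\sigma$-run enters $(q,\card{V})$ via a single large-cost edge from some $(r,m)\in\mathcal{W}$ without ever visiting $q$ at a smaller counter value; then $(r,m)\in\mathcal{W}$ with $m<\card{V}$, but you are now at state $q$, not $r$, and there may be no $n_0<\card{V}$ with $(q,n_0)\in\mathcal{W}$. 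Your simulation then has no legal continuation. A second, smaller gap: the buying loop $(q,q'),(q',q)$ exists in $\sys{M}'$ only for \emph{controlled} $q$ with $\LValP[\sys{M}]{q}=1$; you do not say what to do if the counter first reaches $\card{V}$ at a probabilistic state.

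The paper avoids both issues by never letting the simulated $\sys{B}_k$-run reach a sink at all. Instead of waiting for the counter to hit $\card{V}$, it watches the observed $\sys{M}$-path for a \emph{positive-cost cycle} $s_i\ldots s_j$ (with $s_i=s_j$ and $\effect{s_i\ldots s_j}>0$); at the next controlled state it takes the buying loop and then continues the simulation from the compressed history $s_1\ldots s_i s_{j+1}\ldots s_n$. The point is that after deleting the cycle the virtual counter drops back to a value it already had at a state it already visited—so the continuation is automatically inside $\mathcal{W}$, with no need to manufacture a fresh $n_0$. Since all visited states have limit value~$1$, the buying loop is available at every controlled state encountered, which also handles the second gap. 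Your restart idea can likely be repaired along these lines, but as stated it does not close.
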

\begin{proof}
    Finite MDPs have pure optimal strategies for the $\Parity{}$ objective \cite{Chatterjee:2004:QSP:982792.982808}. Thus
    by assumption and because $\sys{B}_k$ is finite, we can pick an optimal strategy $\sigma$ 
    satisfying $\Prob[\sys{B}_k(\sigma)]{s}(\Parity{}) =1$.
    Notice that all runs in $\Runs[\sys{B}_k(\sigma)]{s}$ according to this optimal strategy
    must never see a small state (one with $\LValP[\sys{M}]{p}<1$).
    Based on $\sigma$, we construct the strategy $\sigma'$ for $\sys{M'}$ as follows.
    
    \smallskip
    The new strategy just mimics $\sigma$ until the observed path $s_1s_2\ldots s_n$
    visits the first controlled state after a cycle with positive cost:
    it holds that $s_n\in\VC$ and there are $i,j\le n$ with $s_i=s_j$ and $\effect{s_i\ldots s_j}>0$. 
    When this happens, $\sigma'$ uses the new edges to visit a $0$-parity state,
    forgets about the cycle
    and continues just as from $s_1s_2\ldots s_is_{j+1}\dots s_n$.

    \smallskip
    We claim that $\Prob[\sys{M'}(\sigma')]{s}(\EN{k}\cap\Parity{}) =1$.
    To see this, just observe that a run of $\sys{M'}(\sigma')$ that infinitely often
    uses new states in $V'\setminus V$ must satisfy the $\Parity{}$ objective
    as those states have parity $0$.
    Those runs which visit new states only finitely often
    have a suffix that directly corresponds to a run of $\sys{B}_k(\sigma)$, and therefore also satisfy
    the parity objective.
    Finally, almost all runs in $\Runs[\sys{M'}(\sigma')]{s}$ satisfy the $\EN{k}$ objective
    because all runs in $\Runs[\sys{B}_k(\sigma)]{s}$ do, and a negative cost due to 
    visiting a new state in $V'\setminus V$ is always balanced by the strictly positive cost
    of a cycle.
    \qedhere
\end{proof}

This concludes the proof of \cref{thm:extension}.
The proof of \cref{thm:ls-energy-parity} now follows by
\cref{thm:LVAL-reach,thm:extension} and the fact that almost-sure reachability,
positive mean-payoff and $k$-energy-parity and $k$-storage-parity objectives are (pseudo) polynomial time computable (\cref{thm:as-energy-parity,thm:as-storage-parity}).

\begin{proof}[Proof of \cref{thm:main}]
Fix a MDP $\sys{M}\eqdef(\VC,\VP, E, \prob)$ with cost and parity functions.
For (1) and (2) we can, by \cref{lem:checking-LVAL-1} compute the set of control states $p$
with limit value $\LValP[\sys{M}]{p} = 1$.
Based on this, we can (in logarithmic space) construct the extension $\sys{M'}\eqdef(\VC',\VP', E', \prob')$
where
$\card{\VC'} = 2\cdot\card{\VC}$, $\card{E'} = \card{E} + 2\cdot\card{\VC}$ and the rest is as in $\sys{M}$.
By \cref{thm:extension},
a state $p\in \VC\cup\VP$ satisfies the $k$-energy-parity objective limit-surely in $\sys{M}$
iff it satisfies it almost-surely in $\sys{M}'$.
The claim then follows from \cref{thm:as-energy-parity}.

(3) To see that there are finite memory $\eps$-optimal strategies
we observe that the strategies we have constructed
in \cref{lem:extension:onlyif} work in phases
and in each phase follow some finite memory strategy.
In the first state, these strategies follow some almost-surely optimal strategy in the extension $\sys{M}'$,
but only as long as the energy level remains below some threshold that depends on $\eps$.
If this level is exceeded it means that a ``tipping point'' is reached and the strategy switches to a second phase.

The second phase starts from a state with limit value $1$, and our strategy just
tries to reach a control state in the set $A'\cup B$ from \cref{sec:limit}.
For almost-sure reachability, memoryless deterministic strategies suffice.
Finally, when ending up in a state of $A'$, the strategy follows an almost-sure optimal strategy for storage-parity
(with finite memory by \cref{thm:storage}).
Similarly, when ending up in a state of $B$, the strategy follows almost-sure optimal strategy for
the combined positive mean-payoff-parity objective (with finite memory by \cite{CD2011}).
\qedhere
\end{proof}

\section{Lower Bounds}
\label{sec:complexity}

Polynomial time hardness of all our problems 
follows, e.g., by reduction from \textsc{Reachability in AND-OR graphs} \cite{immerman1981number}, where 
non-target leaf nodes are energy decreasing sinks. 
This works even if the energy deltas are encoded in unary.

If we allow binary encoded energy deltas, i.e.~$W \gg 1$, then solving
two-player energy games is
logspace equivalent to solving two-player mean-payoff games (\hspace*{-0.4em}\cite{BFLMS2008}, Prop.~12),
a well-studied problem in $\NP\cap\coNP$ that is not known to be polynomial \cite{zwick1996complexity}.
Two-player energy games reduce directly to both almost-sure and limit-sure energy objectives for MDPs,
where adversarial states are replaced by (uniformly distributed) probabilistic ones:
a player max strategy that avoids ruin in the game directly provides a strategy
for the controller in the MDP, which means that the energy objective holds
almost-surely (thus also limit-surely).
Conversely, a winning strategy for the opponent ensures ruin after a fixed number $r$ of game rounds.
Therefore the error introduced by any controller strategy in the MDP is at least $(1/d)^r$, where $d$ is
the maximal out-degree of the probabilistic states, which means that
the energy objective cannot be satisfied even limit-surely (thus not almost-surely).
It follows that almost-sure and limit-sure energy objectives for MDPs are at least
as hard as mean-payoff games. The same holds for almost-sure and limit-sure
storage objectives for MDPs, since in the absence of parity conditions, 
storage objectives coincide with energy objectives.
Finally we obtain that all the more general 
almost-sure and limit-sure energy-parity and storage-parity objectives for
MDPs are at least as hard as mean-payoff games.

\section{Conclusions and Future Work}
\label{sec:conclusion}
We have shown that even though strategies for almost-sure energy parity objectives in MDPs
require infinite memory, the problem is still in $\NP\cap\coNP$.
Moreover, we have shown that the limit-sure problem (i.e.\ the problem of
checking whether a given configuration (state and energy level) in
energy-parity MDPs has value $1$) is also in $\NP\cap\coNP$.
However, the fact that a state has value $1$ can always be witnessed
by a family of strategies attaining values $1-\epsilon$ (for every $\epsilon >0$)
 where each member of this family uses only finite memory.

We leave open the decidability status of quantitative questions, 
e.g.\ whether $\Val[\sys{M}]{p}{\EN{k}\cap\Parity{}} \ge 0.5$ holds. 

Energy-parity objectives on finite MDPs correspond to parity objectives
on certain types of infinite MDPs where the current energy value is part of
the state. More exactly, these infinite MDPs can be described by
single-sided vector addition systems with states \cite{AMSS:CONCUR2013,ACMSS:FOSSACS2016}, where the 
probabilistic transitions cannot change the counter values but only the
control-states (thus yielding an upward-closed winning set).
I.e.\ single-sidedness corresponds to energy objectives.
For those systems, almost-sure B\"uchi objectives are decidable (even for multiple energy dimensions) \cite{ACMSS:FOSSACS2016}, but the decidability of the limit-sure problem was left open.
This problem is solved here, even for parity objectives,
but only for dimension one.
However, decidability for multiple energy dimensions remains open.

If one considers the more general case of MDPs induced by counter \emph{machines}, i.e.\ with zero-testing transitions,
then even for single-sided systems as described above all problems become undecidable from dimension 2 onwards.
However, decidability of almost-sure and limit-sure parity conditions for MDPs induced by
one-counter machines (with only one dimension of energy) remains open.

\medskip
\noindent {\bf Acknowledgements.\ }
This work was partially supported by the EPSRC through grants EP/M027287/1 and EP/M027651/1 (Energy Efficient Control), and EP/P020909/1 (Solving Parity Games in Theory and Practice).

\bibliographystyle{IEEEtran}

\bibliography{bibliography}

% Generated by IEEEtran.bst, version: 1.14 (2015/08/26)
\begin{thebibliography}{10}
\providecommand{\url}[1]{#1}
\csname url@samestyle\endcsname
\providecommand{\newblock}{\relax}
\providecommand{\bibinfo}[2]{#2}
\providecommand{\BIBentrySTDinterwordspacing}{\spaceskip=0pt\relax}
\providecommand{\BIBentryALTinterwordstretchfactor}{4}
\providecommand{\BIBentryALTinterwordspacing}{\spaceskip=\fontdimen2\font plus
\BIBentryALTinterwordstretchfactor\fontdimen3\font minus
  \fontdimen4\font\relax}
\providecommand{\BIBforeignlanguage}[2]{{%
\expandafter\ifx\csname l@#1\endcsname\relax
\typeout{** WARNING: IEEEtran.bst: No hyphenation pattern has been}%
\typeout{** loaded for the language `#1'. Using the pattern for}%
\typeout{** the default language instead.}%
\else
\language=\csname l@#1\endcsname
\fi
#2}}
\providecommand{\BIBdecl}{\relax}
\BIBdecl

\bibitem{Puterman:book}
M.~L. Puterman, \emph{{Markov} Decision Processes}.\hskip 1em plus 0.5em minus
  0.4em\relax Wiley, 1994.

\bibitem{CD2011}
K.~Chatterjee and L.~Doyen, ``Energy and mean-payoff parity {Markov} decision
  processes,'' in \emph{MFCS~2011}, vol. 6907, 2011, pp. 206--218.

\bibitem{chatterjee2011games}
------, ``Games and {M}arkov decision processes with mean-payoff parity and
  energy parity objectives,'' in \emph{Proc. of MEMICS}, ser. LNCS, vol.
  7119.\hskip 1em plus 0.5em minus 0.4em\relax Springer, 2011, pp. 37--46.

\bibitem{chakrabarti2003resource}
A.~Chakrabarti, L.~De~Alfaro, T.~A. Henzinger, and M.~Stoelinga, ``Resource
  interfaces,'' in \emph{International Workshop on Embedded Software}.\hskip
  1em plus 0.5em minus 0.4em\relax Springer, 2003, pp. 117--133.

\bibitem{BFLMS2008}
P.~Bouyer, U.~Fahrenberg, K.~G. Larsen, N.~Markey, and J.~Srba, \emph{Infinite
  Runs in Weighted Timed Automata with Energy Constraints}, ser. LNCS.\hskip
  1em plus 0.5em minus 0.4em\relax Springer, 2008, vol. 5215, pp. 33--47.

\bibitem{bloem2009better}
R.~Bloem, K.~Chatterjee, T.~A. Henzinger, and B.~Jobstmann, ``Better quality in
  synthesis through quantitative objectives,'' in \emph{CAV}, ser. LNCS, vol.
  5643, 2009, pp. 140--156.

\bibitem{BergerKSV08}
N.~Berger, N.~Kapur, L.~Schulman, and V.~Vazirani, ``{Solvency Games},'' in
  \emph{FSTTCS}, ser. LIPIcs, vol.~2, 2008, pp. 61--72.

\bibitem{chatterjee2005mean}
K.~Chatterjee, T.~A. Henzinger, and M.~Jurdzinski, ``Mean-payoff parity
  games,'' in \emph{LICS}, 2005, pp. 178--187.

\bibitem{CD2012}
K.~Chatterjee and L.~Doyen, ``Energy parity games,'' \emph{Theoretical Computer
  Science}, vol. 458, pp. 49 -- 60, 2012.

\bibitem{BKN:ATVA2016}
T.~Br\'{a}zdil, A.~Ku\v{c}era, and P.~Novotn\'{y}, ``Optimizing the expected
  mean payoff in energy {Markov} decision processes,'' in \emph{ATVA}, ser.
  LNCS, vol. 9938.\hskip 1em plus 0.5em minus 0.4em\relax Springer, 2016, pp.
  32--49, full version at https://arxiv.org/abs/1607.00678.

\bibitem{BBEKW10}
T.~Br{\'a}zdil, V.~Bro{\v{z}}ek, K.~Etessami, A.~Ku{\v{c}}era, and D.~Wojtczak,
  ``One-counter {Markov} decision processes,'' in \emph{SODA}.\hskip 1em plus
  0.5em minus 0.4em\relax Society for Industrial and Applied Mathematics, 2010,
  pp. 863--874.

\bibitem{BBE10}
T.~Br{\'a}zdil, V.~Bro{\v{z}}ek, and K.~Etessami, ``{One-Counter Stochastic
  Games},'' in \emph{FSTTCS}, ser. LIPIcs, vol.~8, 2010, pp. 108--119.

\bibitem{DBLP:journals/corr/MayrSTW17}
\BIBentryALTinterwordspacing
R.~Mayr, S.~Schewe, P.~Totzke, and D.~Wojtczak, ``{MDPs} with {E}nergy-{P}arity
  {O}bjectives,'' \emph{CoRR}, vol. abs/1701.02546, 2017. [Online]. Available:
  \url{http://arxiv.org/abs/1701.02546}
\BIBentrySTDinterwordspacing

\bibitem{billingsley-1995-probability}
P.~Billingsley, \emph{Probability and Measure}.\hskip 1em plus 0.5em minus
  0.4em\relax New York, NY: Wiley, 1995, third Edition.

\bibitem{Karmarkar/84/Karmarkar}
N.~Karmarkar, ``A new polynomial-time algorithm for linear programming,'' in
  \emph{STOC}, 1984, pp. 302--311.

\bibitem{etessami2010quasi}
K.~Etessami, D.~Wojtczak, and M.~Yannakakis, ``Quasi-birth--death processes,
  tree-like {QBDs}, probabilistic 1-counter automata, and pushdown systems,''
  \emph{Performance Evaluation}, vol.~67, no.~9, pp. 837--857, 2010.

\bibitem{BKK/14}
T.~Br{\'{a}}zdil, S.~Kiefer, and A.~Ku\v{c}era, ``Efficient analysis of
  probabilistic programs with an unbounded counter,'' \emph{Journal of the
  {ACM}}, vol.~61, no.~6, pp. 41:1--41:35, 2014.

\bibitem{durrett1991making}
R.~Durrett, H.~Kesten, and G.~Lawler, ``Making money from fair games,'' in
  \emph{Random walks, Brownian motion, and interacting particle systems}.\hskip
  1em plus 0.5em minus 0.4em\relax Springer Science+Business Media, LLC, 1991,
  pp. 255--267.

\bibitem{de1997formal}
L.~De~Alfaro, ``Formal verification of probabilistic systems,'' Ph.D.
  dissertation, Stanford University, 1997.

\bibitem{Chatterjee:2004:QSP:982792.982808}
K.~Chatterjee, M.~Jurdzi\'{n}ski, and T.~A. Henzinger, ``Quantitative
  stochastic parity games,'' in \emph{Proceedings of the Fifteenth Annual
  ACM-SIAM Symposium on Discrete Algorithms}, 2004, pp. 121--130.

\bibitem{immerman1981number}
N.~Immerman, ``Number of quantifiers is better than number of tape cells,''
  \emph{Journal of Computer and System Sciences}, vol.~22, no.~3, pp. 384--406,
  1981.

\bibitem{zwick1996complexity}
U.~Zwick and M.~Paterson, ``The complexity of mean payoff games on graphs,''
  \emph{Theoretical Computer Science}, vol. 158, no.~1, pp. 343--359, 1996.

\bibitem{AMSS:CONCUR2013}
P.~Abdulla, R.~Mayr, A.~Sangnier, and J.~Sproston, ``Solving parity games on
  integer vectors,'' in \emph{Proc.\ of CONCUR}, ser. LNCS, vol. 8052, 2013.

\bibitem{ACMSS:FOSSACS2016}
P.~A. Abdulla, R.~Ciobanu, R.~Mayr, A.~Sangnier, and J.~Sproston, ``Qualitative
  analysis of {VASS}-induced {MDPs},'' in \emph{FOSSACS}, ser. LNCS, vol. 9634,
  2016.

\end{thebibliography}

\newpage
\section*{Appendix}
\setcounter{section}{0}
\Alph{section}
\section{Missing Proof from Section \ref{sec:limit}}
\label{sec:appendix}
\LemMPge*
\begin{proof}
If the expected mean-payoff value of $C$ is positive, then we can assume w.l.o.g.~a pure memoryless strategy $\sigma$ that achieves this value for all states in $C$.
This is because finite MDPs allow pure and memoryless optimal strategies for the mean-payoff objective (see e.g.~\cite{LL:SIAM1969}, Thm.~1).
This strategy does not necessarily satisfy the parity objective.
However, we can mix it with a pure memoryless reachability strategy $\rho$ that moves to a fixed state $p$ with the minimal even parity $2i$ among the states in $C$.
Broadly speaking, if we follow the optimal (w.r.t.\ the mean-payoff) strategy most of the time and ``move to $p$'' only sparsely, the mean-payoff of such combined strategy would be affected only slightly. 
This can be done by using memory to ensure that the mean-payoff value remains positive (resulting in a pure finite memory strategy), or it can be done by always following $\rho$ with a tiny likelihood $\varepsilon >0$, while following $\sigma$ with a likelihood of $1-\varepsilon$ (resulting in a randomised memoryless strategy).

For the pure finite memory strategy, we can simply follow $\rho$ for $|C|$ steps (or until $p$ is reached, whatever happens earlier) followed by $n$ steps of following $\sigma$.
When $n$ goes to infinity, the expected mean payoff converges to the mean payoff of
$\sigma$.
Since the mean-payoff of $\sigma$ is strictly positive, the combined strategy
achieves a strictly positive mean-payoff already for some fixed finite $n$, and
thus finite memory suffices.

Note that using either of the just defined strategies would result in a
finite-state Markov chain with integer costs on the transitions. We can simulate such a model using {\em probabilistic one-counter automata} \cite{etessami2010quasi}, where the energy level is allowed to change by at most $1$ in each step, just by modelling an increase of $k$ by $k$ increases of one.
Now we can use a result by Br{\'{a}}zdil, Kiefer, and Ku\v{c}era \cite{BKK/14}
for such a model for the case where it consists of a single SCC 
(which is the case here, because of the way $\sigma$ is defined). 
In particular, Lemma~5.13 in \cite{BKK/14} established an upper bound on the
probability of termination (i.e.\ reaching energy level $0$) in a
probabilistic one-counter automaton with a positive mean-payoff (referred to
as `trend' there) where starting with energy level $k$. This upper bound can
be explicitly computed for any given probabilistic one-counter automaton and
energy level $k$. However, for our purposes, it suffices to note that this
bound converges to $0$ as $k$ increases. This shows that the probability of
winning can be made arbitrarily close to $1$ by choosing a sufficiently high
initial energy level and using the strategy defined in the previous paragraph.
Thus the states in $C$ indeed have limit value $1$.
\qedhere
\end{proof}

\section{Memory Requirements for $\eps$-optimal Strategies}
\label{app:finite}
In this appendix, we discuss the complexity of the strategies needed. First, we show that the strategies for determining the limit values of states are quite simple: they can either be chosen to be finite memory and pure, or randomised and memoryless.
For winning limit-surely from a state energy pair, finite memory pure strategies suffice, but not necessarily memoryless ones, not even if we allow for randomisation.

We start by showing the negative results on examples.
\Cref{fig:memory} shows an MDP, where it is quite easy to see that both states have limit value $1$.
However, when looking at the two memoryless pure strategies, it is equally clear that
either (if the choice is to move from $s$ to $r$) the energy condition is violated almost-surely,
or (if the choice is to remain in $s$), the parity condition is violated on the only run.
Nevertheless, the state $s$ satisfies the $k$-energy-parity objective limit-surely, but
not almost-surely, for any fixed initial energy level $k$.

\begin{figure}[h]
  \begin{minipage}[t]{.48\textwidth}
  \centering
  \begin{tikzpicture}[node distance=1cm and 3cm, on grid]
    \tikzstyle{acstate}=[cstate,minimum size=0.2cm]
    \node[cstate] (Y) {s};
    \node[pstate, right=of Y] (Z) {r};
    \draw[->] (Y) to [bend left] node[midway] {$-1$} (Z);
    \draw[->] (Z) to [bend left] node[midway] {$\frac{1}{2}, -1$} (Y);
    \draw[->,loop above] (Z) edge node[] {$\frac{1}{2}, -1$} (Z);
    \draw[->,loop above] (Y) edge node[] {$+1$} (Y);
\end{tikzpicture}
\caption{
\label{fig:memory}
An energy-parity MDP that requires memory or randomisation to win with a
positive probability, from any initial energy level. States $s$ and $r$ have
priorities $1$ and $0$, respectively.
} 
    \end{minipage}
    \quad
  \begin{minipage}[t]{.48\textwidth}
  \centering
   \begin{tikzpicture}[node distance=1cm and 2cm, on grid]
     \tikzstyle{acstate}=[cstate,minimum size=0.2cm]
     \node[cstate] (X) {p};
     \node[cstate, right=of X] (Y) {s};
     \node[pstate, right=of Y] (Z) {r};
     \draw[->] (Y) to [bend left] node[midway] {$+1$} (X);
     \draw[->] (X) to [bend left] node[midway] {$+1$} (Y);
     \draw[->] (Y) to [bend left] node[midway] {$-1$} (Z);
     \draw[->] (Z) to [bend left] node[midway] {$\frac{1}{4}, -1$} (Y);
     \draw[->,loop above] (Z) edge node[] {$\frac{3}{4}, +1$} (Z);
 \end{tikzpicture}
 \caption{
     \label{fig:3states}
      Energy-parity MDP where a randomised memoryless strategy does not suffice for limit-sure winning for any initial energy level. States $s$ and $r$ have priority $2$ and state $p$ has priority $1$.} 
    \end{minipage}
\end{figure}

\Cref{fig:3states} shows an energy-parity MDP, where all states have limit value $1$, and the two left states have limit value $1$ even from zero energy. (They can simply boost their energy level long enough.)
Only in the middle state do we need to make choices. For all memoryless randomised strategies that move to the left with a probability $>0$, the minimal priority on all runs is almost-surely $1$, such that these strategies are almost-surely losing from all states and energy levels. The only remaining candidate strategy is to always move to the right. But, for all energy levels and starting states, there is a positive probability that the energy objective is violated. (E.g.\ when starting with energy $k$ in the middle state, it will violate the energy condition in $k+1$ steps with a chance $4^{-\lceil k/2\rceil}$.)

To see that finite memory always suffices, we can simply note that the strategies we have constructed work in stages.
The `energy boost' part from \cref{sec:extensions} does not require memory on the extended arena (and thus finite memory on the original arena). Further memory can be used to determine when there is sufficient energy to progress to the strategy from \cref{sec:limit}.

The strategy for \cref{sec:limit} consists of reaching $A'$ or a positive $2i$ maximal set almost-surely and then winning limit-surely there.
For almost-sure reachability, memoryless deterministic strategies suffice. The same holds for winning in $A'$.
For winning in a positive $2i$ maximal set, the proof of \cref{lem:MPge0} also establishes that pure finite memory and randomised memoryless strategies suffice.

\end{document}